\documentclass[11pt]{article}

\usepackage[margin=1in]{geometry}
%\usepackage{setspace}
%\onehalfspacing

% --- Standard packages ---
\usepackage[utf8]{inputenc}
\usepackage[T1]{fontenc}
\usepackage{lmodern}
\usepackage{hyperref}
\usepackage{url}
\usepackage{booktabs}
\usepackage{amsfonts,amsmath,amsthm}
\usepackage{nicefrac}
\usepackage{microtype}
\usepackage{xcolor}
\usepackage{algorithm}
\usepackage{algpseudocode}
\usepackage{xurl}
\usepackage{tikz}
\usepackage{pgfplots}
\usepackage{pgfplotstable}
\pgfplotsset{compat=1.17}
\usepackage{epsfig}
\usepackage{latexsym,nicefrac,bbm}
\usepackage{xspace}
\usepackage{subfigure}
\usepackage{enumitem}
\usepackage{makecell}
\usepackage{tcolorbox}
\usepackage{thm-restate}
% --- Proof environments (amsthm-compatible, supports optional argument) ---

\newenvironment{proofof}[1][]{%
	\begin{oldproof}[Proof #1]%
	}{%
	\end{oldproof}%
}

% --- Title and author info ---
\title{\bf Strategic Costs of Perceived Bias in Fair Selection}

\author{
	L. Elisa Celis \\ Yale University
	\and
	Lingxiao Huang \\ Nanjing University
	\and
	Milind Sohoni \\ IIT Bombay
	\and
	Nisheeth K. Vishnoi \\ Yale University
}

% --- Custom commands and theorem setup (unchanged) ---

\newcommand{\eat}[1]{}
\newcommand{\sol}{t}

\newcommand*{\rom}[1]{\expandafter\@slowromancap\romannumeral #1@}

\newcommand{\R}{\mathbb{R}}

\newcommand{\eps}{\varepsilon}
\renewcommand{\epsilon}{\varepsilon}

% --- Theorems ---
\newtheorem{theorem}{Theorem}[section]

\newtheorem{definition}{Definition}[section]

\newtheorem{lemma}[theorem]{Lemma}
\newtheorem{remark}[theorem]{Remark}
\newtheorem{proposition}[theorem]{Proposition}

\newcommand{\E}{\operatornamewithlimits{\mathbb{E}}}

\date{}

% --- Document ---
\begin{document}
	\maketitle

	\begin{abstract}
		
		Meritocratic systems, from admissions to hiring, aim to impartially reward skill and effort. Yet persistent disparities across race, gender, and class challenge this ideal. Some attribute these gaps to structural inequality; others to individual choice.
		We develop a game-theoretic model in which candidates from different socioeconomic groups differ in their perceived post-selection value—shaped by social context and, increasingly, by AI-powered tools offering personalized career or salary guidance. Each candidate strategically chooses effort, balancing its cost against expected reward; effort translates into observable merit, and selection is based solely on merit.
		We characterize the uniq equilibrium in the large-agent limit and derive explicit formulas showing how valuation disparities and institutional selectivity jointly determine effort, representation, social welfare, and utility. We further propose a cost-sensitive optimization framework that quantifies how modifying selectivity or perceived value can reduce disparities without compromising institutional goals.
		Our analysis reveals a perception-driven bias: when perceptions of post-selection value differ across groups, these differences translate into rational differences in effort, propagating disparities backward through otherwise ``fair'' selection processes. While the model is static, it captures one stage of a broader feedback cycle linking perceptions, incentives, and outcomes—bridging rational-choice and structural explanations of inequality by showing how techno-social environments shape individual incentives in meritocratic systems.
	\end{abstract}

	\newpage
	\tableofcontents
	\newpage

	\section{Introduction}
	\label{sec:intro}
	
	Meritocratic selection systems, used by institutions and firms for admissions, hiring, and content curation, aim to allocate opportunities based on observable indicators of ability and effort rather than wealth, identity, or social status. They are widely viewed as promoting fairness and efficiency \cite{Herrnstein1994BellCurve,Reich2000FutureOfSuccess,Gladwell2008Outliers}. Examples include standardized tests such as the SAT and JEE \cite{tamar2014newSAT,baswana2015joint}, structured interviews and assessments \cite{pulakos2005selection,Bohnet_2016}, and algorithmic ratings on online platforms \cite{amazonRecruitingTool,raghavan2020mitigating,term_of_stay_correlated_with_race}.
	
	Yet, despite their formal neutrality, these systems often produce significant disparities in representation and outcomes. Women, racial minorities, and lower-income groups are consistently underrepresented in elite universities, leadership roles, and high-paying industries \cite{NCES2020Condition,Rivera2015Pedigree}. These gaps persist even when evaluation procedures are blind to group identity, suggesting that there are additional factors that drive inequality in merit-based processes.
	
	One set of explanations points to structural barriers: unequal access to resources that enhance merit (e.g., quality education, extracurricular activities), implicit biases in selection processes,  
	and limited opportunities due to privileged networks \cite{wenneras2001nepotism,greenwald2006implicit,lyness2006fit,corinne2012science,Bohnet_2016,Murrell2019PrivilegeBias,amazonRecruitingTool,raghavan2020mitigating,term_of_stay_correlated_with_race}. Others suggest that individuals who face the same selection rules may simply make different choices, investing less effort because they perceive lower returns to success due to cultural preferences, opportunity costs, or labor market sorting \cite{Farrell2005WhyMenEarnMore,BlauFerberWinkler2014EconomicsWomenMenWork,Goldin,ONeill2003ExplainingGenderWageGap,BlauKahn2016GenderWageGap}. 
	These disparities under-utilize talent, reducing innovation, diversity of ideas, and social progress \cite{HewlettMarshallSherbin2013DiversityInnovation,McKinseyGlobalInstitute2011WomenEconomy,Page2017DiversityBonus}.
	This raises a central question: {\em how can differences in perceived opportunity translate into systematic behavioral disparities even when evaluation is symmetric?}

	Expectations about what selection yields-admission, employment, mobility—are shaped not only by historical inequalities \cite{Sandel2020TyrannyMerit,McNamee2004MeritocracyMyth,Lampert2012MeritocraticEducation}, but increasingly by algorithmic tools that mediate labor market signals \cite{An2025measuring}.
	Although meritocratic ideals suggest that pay should correlate with skills, productivity, and achievements, empirical studies reveal persistent wage disparities even after controlling for factors such as occupation, education, experience, and hours worked \cite{Goldin,ScotScoopCSGenderGap,WomenHigherEd2021,NCES2020Condition,BLS2022MedianEarningsWomen,EPI2016GenderPayGap,Pew2023GenderPayGap,Hegewisch2016GenderWageGap}.
	For example, in the U.S., women earned just 83.1\% of what men earned in 2021, despite outnumbering men in the college-educated labor force \cite{BLS2022MedianEarningsWomen,pewCollegeRates2019}. Similar wage gaps persist across racial, class, caste, and ethnic lines, with Black, Hispanic, and Indigenous workers earning less than White and Asian peers in comparable roles \cite{Pew2023GenderPayGap}.
	Large language models (LLMs) may further exacerbate these disparities. Recent studies show that when asked for job or salary recommendations, LLMs return systematically different responses across demographic groups, even when qualifications are held constant \cite{An2025measuring}. 
	Such signals can distort perceived opportunity and disincentivize effort long before any selection decision occurs.
	Taken together, these findings suggest a {perception-driven bias}: social and algorithmic cues about post-selection value shape pre-selection investment, reinforcing group-level disparities even under ostensibly meritocratic systems.
	
	\smallskip
	\noindent
	{\bf Our contributions.}
	We introduce a game-theoretic model of meritocratic selection in which candidates from two groups differ in their perceived value of being selected. 
	This model integrates contest theory with models of structural bias from algorithmic fairness and captures how valuation disparities influence effort, merit, and selection outcomes. 
	While the model is static, it represents one stage of a broader feedback process linking perceptions, incentives, and outcomes.
	Our main contributions are:
	
	\begin{enumerate}[leftmargin=*,itemsep=1pt]
		\item \textbf{Modeling.} We formulate a two-group contest in which $n$ rational agents, divided into groups $G_1$ and $G_2$ (with proportion $\alpha \in (0,1)$), compete for $c \cdot n$ positions. Group-specific valuations follow distributions $p_1$ and $p_2$, with $p_2$ modeled as a $\rho$-biased version of $p_1$ for $\rho \in (0,1]$ \cite{kleinberg2018human}, {representing structural disparities}. Each candidate chooses effort based on their valuation to maximize their expected payoff, which is then converted into observable merit used for selection.
		
		\item \textbf{Equilibrium characterization.} We prove the existence and uniqueness of a symmetric Nash equilibrium in the large-agent limit, and express the equilibrium thresholds for both groups in terms of $(c, \alpha, p_1, p_2)$ (Theorem~\ref{thm:two_general}). We further show that the equilibrium in the finite-$n$ setting converges to this solution at rate $O(\sqrt{\log n / n})$.
		
		\item \textbf{Micro to macro analysis of metrics.} Using the equilibrium solution, we derive closed-form expressions for key performance metrics—group-wise representation ratio $r_{\mathcal{R}}$, social welfare ratio $r_{\mathcal{S}}$, and institutional revenue in the case where $p_1$ is uniform and $p_2$ is $\rho$-biased (Proposition~\ref{prop:uniform}). These expressions reveal how small changes in $\rho$, $c$ or $\alpha$ can produce non-linear shifts in outcomes.
		
		\item \textbf{Fairness-aware interventions.} We formulate a constrained optimization problem (Problem~\eqref{eq:intervention}) that allows institutions to trade off between increasing selectivity ($c$) and reducing valuation bias ($\rho$) under fairness constraints (e.g., 80\%-rule). We solve this problem in closed form for linear cost functions and characterize when each intervention is most cost-effective (Figure~\ref{fig:intervention}).
	\end{enumerate}
	Taken together, our framework provides a quantitative lens on how structural or algorithmic biases in perceived value can rationally produce effort and outcome disparities in meritocratic systems, and offers tools to design interventions that enhance both representation and efficiency.

	\section{Related work}\label{sec:related_work}
	
	Our work connects three areas: economic theories of meritocracy, game-theoretic models of contests, and algorithmic fairness.
	We introduce the related literature from each area below.

	\smallskip
	\noindent
	{\em Meritocratic selection process, pay gap, and statistical discrimination feedback loop.}
	In the social sciences, there is a large body of work that studies meritocratic selection processes and their limitations; see \cite{McNamee2004MeritocracyMyth,Markovits2019MeritocracyTrap,Sandel2020TyrannyMerit} and the references therein.
	\cite{Castilla2010,MeritBased2021}
	discuss the pay gap in meritocratic systems, shedding light on how merit-based reward systems and gender wage gaps intersect.
	\cite{Goldin}, in an extensive line of work, discusses the gender pay gap and addresses the economic and social factors contributing to wage disparities between men and women. 
	Another line of research focuses on studying statistical discrimination feedback loops, which model how firms update their beliefs about group quality over time, reinforcing disparities \cite{arrow1971theory,phelps1972statistical,coate1993will,craig2017complementary,Baek2023TheFL,Komiyama2022}. 
	{For instance, \cite{arrow1971theory} emphasizes how the cost of individualized assessment incentivizes reliance on priors, which can become self-fulfilling and reinforce structural inequality. \cite{arrow1971theory} models a profit-maximizing employer who faces noisy signals of productivity and rationally uses group-level statistics, leading to persistent wage gaps even with equal underlying abilities. \cite{coate1993will} show that pessimistic beliefs about a group’s productivity can result in tougher standards, reduced investment incentives, and discriminatory equilibria. \cite{craig2017complementary} extend this to two-sided settings where firms and workers both act on noisy beliefs, reinforcing low-investment, low-opportunity equilibria. A key distinction, as we understand it, is that classical models of statistical discrimination typically generate disparities through \emph{imperfect} and \emph{group-dependent} beliefs about identical underlying abilities. In contrast, our framework allows \emph{perfect, unbiased} information at the institutional level and identical selection criteria for all candidates. We focus instead on \emph{valuation asymmetries}—that is, differences in the \emph{perceived benefit} of success across groups—and show that these differences alone can lead to disparities in effort and representation, even under meritocratic selection.} 

	\smallskip
	\noindent
	{\em All-pay auctions and Tullock contests.}
	In game theory, there is a significant body of literature that investigates all-pay auctions.
	For instance, \cite{seel2014optimal,liu2017optimal,wasser2023differential} study the setting in which every agent knows their private valuations and the distribution of other agents.
	Specifically,  \cite{seel2014optimal} study a ``biased'' 2-agent contest in which the designer is allowed to give a ``headstart'' to the effort of one agent.
	This headstart can be interpreted as differing merits of the agents, which corresponds to the initial abilities in our model.
	They characterize the optimal design for maximizing the expected highest effort or total effort of agents. In this case, the bias is introduced by the designer, rather than inherent in the system.
	\cite{wasser2023differential} study the undifferentiated case for a single winner, while the contest designer is allowed to select the contest success function (CSF) based on agents' efforts.
	Their main focus is on studying the optimal design of the CSF that maximizes the total expected effort.
	The main difference from our model is that they consider bias in the efforts instead of in the valuations.

	The all-pay auction with complete information has also been well-studied. 
	Unlike the setting in this paper, these works assume that the valuations of all agents are known.
	\cite{BARUT1998627} initiated the study of an $n$-agent $k$-winner all-pay auction and provided a complete characterization of the NE distribution.
	A line of research investigates the optimal design for maximizing the total expected effort/revenue, including imposing a multiplicative bias on the effort of agents \cite{franke2013effort,fu2021optimal} or introducing an additional headstart \cite{FRANKE2014116,franke2014revenue,franke2018optimal,fu2021optimal,zhu2021optimal}.

	Tullock contests \cite{Tullock1967, Tollison1982, epstein2013lotteries,franke2018optimal,DENG202110,LiWX2023,LahkarM2023} model the probability of winning based on relative effort without direct costs for participation, whereas an all-pay auction requires all agents to pay their bid amounts regardless of winning, with only the highest bidder(s) securing the prize.
	\cite{olszewski2016large} study the dynamics of large contests, where a significant number of agents compete. Such contests pose unique analytical challenges and offer insights into the behavior of agents in mass competition scenarios.
	The works of \cite{franke2013effort,franke2014revenue,franke2018optimal} 
	also investigate how the design of contests can be optimized to maximize revenue, considering factors like bias in efforts, headstarts, and the structure of the CSF.
	Across these studies, a common theme is the characterization of Nash equilibrium strategies within the context of different contest models, and identifying designs that encourage maximal effort or revenue. 

	\smallskip
	\noindent
	{\em Strategic classification and ranking.}
	Another related direction is strategic learning, which mainly includes strategic classification \cite{cai2015optimal,hu2019disparate,milli2019social,bechavod2019onlineclassification} and strategic ranking \cite{emelianov2022fairness}.
	In strategic classification, agents can exert effort to alter their features to achieve higher values according to the published classifier. The designer's aim is to select a classifier that is robust to the manipulation of inputs by strategic agents. However, in this setting, agents' efforts are influenced solely by the published classifier, with no competition among them.
	In strategic ranking problems, agents' payoffs depend on their post-ranking, which is determined by a combination of their prior rankings and efforts. While there is competition in this problem, all agents have the same valuation, which is different from our model.
	
	\smallskip
	\noindent
	{\em Models of bias in valuations.}
	Several works have modeled group-level biases based on empirical observations
	\cite{arrow1998economicsRacialDiscrimination,becker2010economics,KleinbergR18,EmelianovGGL20,CelisKMV2023}.
	%      %
	Additive and multiplicative skews in the valuations have also been modeled \cite{KleinbergR18,becker2010economics}.
	\cite{KleinbergR18} consider valuations $v >0$ of the advantaged group distributed according to the uniform or Pareto density and, for the disadvantaged group, they model the output as $v/\beta$ for some fixed $\beta \geq 1$. 
	We consider a class of bias models inspired by this model, the $\rho$ in our case corresponds exactly to $1/\beta$.
	The implicit variance model of \cite{EmelianovGGL20}  
	models differences in the amount of noise in the valuations for individuals in different groups.
	Here, the output estimate is drawn from a Gaussian density whose mean is the valuation $e$ (which can take any real value) and whose variance depends on the group of the individual being evaluated: The variance is higher for individuals in the disadvantaged group compared to individuals in the advantaged group. 
	\cite{CelisKMV2023} propose an  optimization-based approach to model
	how group-wise valuation distributions can be obtained by tuning parameters such ``information constraints'' or ``risk aversion''.

	Overall, our work offers a novel integration of asymmetric group valuations into competitive contest frameworks, with implications for equilibrium behavior, fairness, and institutional design.

	\section{Model and metrics}
	\label{sec:model}
	
	We consider a population of \(n\) agents competing for \(k = c n\) indistinguishable spots, where \(c \in (0,1)\) denotes the selection fraction. Agents are partitioned into two groups: an \emph{advantaged} group \(G_1\) of size \((1-\alpha)n\) and a \emph{disadvantaged} group \(G_2\) of size \(\alpha n\), where \(\alpha \in (0,1)\).
	Each agent \(i \in G_\ell\) (\(\ell \in \{1,2\}\)) has a valuation \(v_i \sim p_\ell\) supported on \(\Omega_\ell \subseteq \mathbb{R}_{\geq 0}\). We model systemic disadvantage via a scaling of valuations: if \(p_1\) is the valuation distribution for \(G_1\), then \(G_2\) has valuations drawn from
	$
	p_2(v) = \frac{1}{\rho} p_1( \frac{v}{\rho} )$,
	where \(\rho \in (0,1]\) captures the degree of bias, implying \(\mathbb{E}_{v\sim p_2}[v] = \rho \mathbb{E}_{v\sim p_1}[v]\). 
	For instance, if \(p_1\) is uniform on \([0,1]\), then \(p_2\) is uniform on \([0,\rho]\). Such a bias model has been widely studied in the fairness literature \cite{KleinbergR18,celis2021longterm,celis2020interventions} and serves as a benchmark for understanding systemic disparities. 
	Section~\ref{sec:other_bias} discusses extensions where the valuation distributions \( p_1 \) and \( p_2 \) are truncated Gaussians, and the bias parameter \(\rho\) may also be drawn from a distribution, introducing stochastic heterogeneity across candidates.
	These structural disparities across groups may stem from unequal access to opportunity, differences in marginal returns, labor market discrimination, or broader societal narratives about value; see also Remark \ref{remark:app} for practical scenarios.

	Each agent also has an initial ability \(a_i \sim p_a\) supported on \(\Omega_a \subseteq \mathbb{R}_{\geq 0}\), drawn independently. We assume that \(p_a\) is identical across groups.
	Agents choose policies \(A_i : \Omega_\ell \times \Omega_a \to \mathbb{R}_{\geq 0}\) that map their type \(\theta_i = (v_i, a_i)\) to an exerted effort \(e_i = A_i(\theta_i)\). The agent's score is \(s_i = e_i + a_i\).
	A strictly increasing merit function \(m : \mathbb{R}_{\geq 0} \to \mathbb{R}_{\geq 0}\) maps scores to merit. The institution selects the \(k\) agents with the highest merit values. Each agent's payoff is
	\[
	f_i(v_i, a_i, e_i; s_1, \ldots, s_n) = \mathbb{I}(\text{\(m(s_i)\) among top \(k\)}) \cdot v_i - (s_i - a_i) = \mathbb{I}(\text{\(s_i\) among top \(k\)}) \cdot v_i - (s_i - a_i),
	\]
	where the second equality uses the strict monotonicity of \(m\).
	Agents know \(n\), \(k\), \(p_1\), \(p_2\), \(p_a\), their group identity, and their own type \(\theta_i = (v_i, a_i)\), but not others' types.
	Let \(A = (A_1, \ldots, A_n)\) denote the joint policy profile. The probability that agent \(i\) is selected after exerting effort \(e\) is
	\[
	P_i(e; a_i, A_{-i}) = \mathbb{P}\left( e + a_i \text{ is among the top } k \text{ scores of } \{A_j(\theta_j) + a_j\}_{j\neq i} \cup \{e+a_i\} \right),
	\]
	where \(\theta_j = (v_j, a_j)\) are drawn i.i.d.\ from the respective group distributions.
	The expected payoff is
	\[
	\pi_i(v_i, a_i, e; A_{-i}) = \mathbb{E}_{s_j}[f_i(v_i, a_i, e; s_1, \ldots, s_n)] = P_i(e; a_i, A_{-i}) \cdot v_i - e.
	\]
	A policy profile \(A\) is a Nash equilibrium (NE) if, for all \(i\), \(v\), \(a\), and \(e\),
	\begin{eqnarray}
		\label{eq:NE}
		\pi_i(v, a, e; A_{-i}) \leq \pi_i(v, a, A_i(v,a); A_{-i}).
	\end{eqnarray}
	We implicitly assume that agents act rationally and strategically to maximize their expected payoffs, using their knowledge of the contest structure to compute the NE policy \(A\) \cite{schelling1981strategy}. 
	For simplicity, we sometimes assume that \(p_a\) is a point mass at \(0\), so that policies depend only on valuations.

	A special case of our model generalizes the classical \emph{undifferentiated contest} (where \(p_1 = p_2\)), which has been extensively studied \cite{seel2014optimal,liu2017optimal,wasser2023differential}. 
	{To the best of our knowledge, our work is the first to study strategic asymmetries arising from valuation differences in settings where group sizes are known and fixed, a structure commonly seen in admissions and hiring. 
		We provide detailed comparisons with prior works \cite{Amann1996AsymmetricAA,emelianov2022fairness,ElkindGG22Contests} in Section~\ref{sec:two_relevant}. 
		Remark~\ref{remark:extension} discusses extensions to multi-group settings and to heterogeneous effort-to-merit mappings, where each individual may convert effort into merit at a different (non-linear) rate.
	}
	
	\smallskip
	\noindent
	{\bf Metrics.}
	We study three metrics to evaluate fairness, efficiency, and institutional outcomes under a given policy \(A\).
	Define \(\mathcal{R}_\ell(A)\) as the (random) fraction of agents selected from group \(G_\ell\). 
	The {\em representation ratio} is
	\[
	r_{\mathcal{R}}(A) := \mathbb{E}\left[\min\left\{ \frac{\mathcal{R}_1(A)}{\mathcal{R}_2(A)}, \frac{\mathcal{R}_2(A)}{\mathcal{R}_1(A)} \right\}\right],
	\]
	\footnote{We consider the $\min$ operator since randomness can occasionally lead to equal or even higher representation for the disadvantaged group $G_2$. This becomes vanishingly rare as $n \to \infty$.}
	a metric commonly used in the fairness literature \cite{cohoon2013effective,barocas-hardt-narayanan,celis2018multiwinner}.
	\(r_{\mathcal{R}}(A) \in [0,1]\), and low values indicate underrepresentation of one group.
	Building on standard notions of allocative efficiency \cite{pindyck1997microeconomics}, define group-wise social welfare as
	\[
	\mathcal{S}_\ell(A) := \frac{1}{|G_\ell|} \sum_{i \in G_\ell} (\mathbb{I}(i\text{ selected})\cdot v_i - e_i).
	\]
	The {\em social welfare ratio} is
	\[
	r_{\mathcal{S}}(A) := \mathbb{E}\left[\min\left\{ \frac{\mathcal{S}_1(A)}{\mathcal{S}_2(A)}, \frac{\mathcal{S}_2(A)}{\mathcal{S}_1(A)} \right\}\right].
	\]
	This metric measures disparities in
	average payoffs between groups.
	Define the {\em average revenue} as
	\[
	\mathcal{RV}(A,m) := \mathbb{E}\left[ \frac{1}{k} \sum_{i\text{ selected}} m(s_i) \right],
	\]
	capturing the average merit of selected agents and aligns with institutional objectives \cite{franke2014revenue,franke2018optimal}.

	We analyze how the NE policy \(A\) and associated metrics vary with the bias parameter \(\rho\) and the selection fraction \(c\). These parameters capture systemic disparities and selection competitiveness, respectively. Even for simple instances, deriving closed-form NE strategies under asymmetric valuations is significantly more complex than in the undifferentiated case. 
	A two-agent worked example can already illustrate these challenges. 
	Let $c = 0.5$.
	Let density $p_1$ of agent 1 be the uniform distribution on $\Omega_1 = [0,1]$ and $p_2$ of agent 2 be the $\rho$-biased version of $p_1$ supported on $\Omega_2 = [0, \rho]$.
	Let the density $p_a$ be a point mass at 0.
	We can prove that the NE policy is asymmetric.
	Details are provided in Section~\ref{sec:example}.
	
	\begin{remark}[\bf Practical settings with group-based valuation bias]
		\label{remark:app}
		Our model captures environments where disadvantaged groups anticipate lower returns from being selected—due to structural barriers, social context, or biased algorithmic feedback (see Section~\ref{sec:intro}). 
		For example, as discussed in Section \ref{sec:intro}, persistent wage gaps across gender and race—even after accounting for qualifications—as well as biased algorithmic recommendations can diminish expectations about the benefits of selection. 
		These lower expectations can rationally reduce pre-selection effort, even under formally fair rules, and represent the main regime we study. 
		That said, in domains such as credit, housing, or education, disadvantaged groups may instead face higher marginal returns due to limited outside options; this can be modeled by reversing which group has the compressed valuation distribution.
	\end{remark}

	\section{Theoretical results: Nash equilibrium and metrics for large \boldmath{$n$}}
	\label{sec:result_bias}
	
	The first question we address is whether a Nash equilibrium (NE) policy exists for the two-group contest and how it can be computed.
	While characterizing NE policies for finite \(n\) is challenging, the large-population limit (\(n \to \infty\)) reveals an interesting and tractable structure. 
	The following result shows that in this limit, it is possible to describe how the strategies of the two groups, $G_1$ and $G_2$, converge. 
	However, the absence of an explicit policy formulation for finite \(n\) complicates the interpretation of convergence, which we address by adopting the notion of an approximate NE policy.
	
	\begin{definition}[\bf $\eps$-Nash equilibrium  \cite{lipton2003playing}]
		\label{def:eps_NE}
		For an $\eps>0$, a policy $A$ is said to be an $\eps$-NE policy if for any $\ell\in \left\{1,2\right\}$, agent $i\in G_\ell$, type $(v,a)\in \Omega_\ell\times \Omega_a$, and effort $e\geq 0$, the following condition is met:
		\[
		\pi_i(v, a, e; A_{-i}) \leq \pi_i(v, a, A_i(v,a); A_{-i}) + \eps.
		\]
	\end{definition}
	
	\noindent
	An \(\eps\)-NE permits stability violations up to \(\eps\), with exact NE recovered when \(\eps = 0\). This notion allows us to formalize the convergence of NE policies in the following theorem.
	\begin{theorem}[\bf{The two-group contest: Large $n$ limit}]
		\label{thm:two_general}
		Let $\alpha, c\in (0,1)$. 
		For $\ell = 1,2$, let $p_\ell$ be a density supported on a domain $\Omega_{\ell}\subseteq \R_{\geq 0}$.
		Let $p_a$ be a density supported on a domain $\Omega_a\subseteq \R_{\geq 0}$.
		Let $m: \R_{\geq 0} \rightarrow \R_{\geq 0}$ be a merit function that is strictly increasing.
		For $\ell = 1,2$, let $F_\ell$ be a cumulative density function (CDF) of the sum of valuation and initial ability such that for any $\zeta\in \R_{\geq 0}$, $F_\ell(\zeta) = \Pr_{v\sim p_\ell, a\sim p_a}\left[v+a\leq \zeta\right]$.
		Suppose $(\Omega_1\cup \Omega_2) + \Omega_a$ is connected\footnote{Here, symbol $+$ represents the Minkowski sum of domains, where $A + B = \left\{a+b: a\in A, b\in B\right\}$.} and densities $p_1, p_2, p_a$ are positive at any point of their own domains.
		Let $t$ be the unique solution to the equation 
		\begin{equation}\label{eq:key}
			(1-\alpha) F_1(\zeta) + \alpha F_2(\zeta) = 1-c.
		\end{equation}
		\begin{equation}  
			\label{eq:A} \mbox {Define }  
			s(v,a) := 
			0  \mbox{ if $v+a < t$  and  } 
			s(v,a) := \max\left\{t-a, 0\right\}  \mbox{ if $v+a \geq t$ } 
		\end{equation}
		and let policy $A$ be: each agent $i \in G_1$ uses the restriction $A_i = s|_{\Omega_1\times \Omega_a}$, while each agent $j \in G_2$ uses the restriction $A_j = s|_{\Omega_2
			\times \Omega_a}$.
		Moreover, this solution $t$ is monotonically decreasing with $c$.

		This $A$ is the unique policy such that there exists an infinite sequence $A^{(1)}, \ldots, A^{(n)}, \ldots$, where $A^{(n)}$ is a policy for the two-group contest with $n$ agents characterized by a threshold function $s^{(n)}: (\Omega_1\cup \Omega_2)\times \Omega_a \rightarrow \R_{\geq 0}$, such that the followings hold:
		\begin{enumerate}
			\item For every integer $n\geq 1$, agent $i \in G_1$ uses the restriction $A^{(n)}_i = s^{(n)}|_{\Omega_1\times \Omega_a}$, while each agent $j \in G_2$ uses the restriction $A^{(n)}_j = s^{(n)}|_{\Omega_2\times \Omega_a}$ and $\lim_{n\rightarrow \infty} s^{(n)} = s$;
			\item Every $A^{(n)}$ is an $\eps_n$-NE policy with $\lim_{n\rightarrow \infty} \eps_n = 0$.
		\end{enumerate}
	\end{theorem}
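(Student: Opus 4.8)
The plan is to analyze the single-agent best response against a threshold profile, show it is again a threshold rule whose self-consistent fixed point is exactly the market-clearing threshold $t$ of \eqref{eq:key}, and then convert limiting exact optimality into an $\eps_n$-NE statement with $\eps_n \to 0$. Two analytic facts come first. Writing $G(\zeta) := (1-\alpha)F_1(\zeta) + \alpha F_2(\zeta)$, I note that $G$ is continuous and, because $p_1, p_2, p_a$ are strictly positive on their domains and $(\Omega_1\cup\Omega_2)+\Omega_a$ is connected, $G$ is \emph{strictly} increasing on the support with no flat spots; since it runs from $0$ to $1$ and $1-c \in (0,1)$, the equation $G(t)=1-c$ has a unique root, giving existence and uniqueness of $t$. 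Monotonicity in $c$ is then immediate: $t = G^{-1}(1-c)$ with $G^{-1}$ increasing, so raising $c$ lowers $1-c$ and hence lowers $t$.

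Next I fix a candidate threshold $\tau$ and suppose every other agent raises their score to $\tau$ whenever $v+a \ge \tau$ and exerts zero effort otherwise, so each opposing score equals $\max\{a,\tau\}$ when $v+a\ge\tau$ and equals $a<\tau$ otherwise. This yields a three-part score law: a continuous mass $\Pr_a[a>\tau]$ strictly above $\tau$ (high-ability agents needing no effort), a point mass $\beta := \Pr[a\le\tau,\,v+a\ge\tau]$ exactly at $\tau$, and the remainder strictly below. For a deviating agent of type $(v,a)$ the payoff $P(e+a)\,v-e$ is piecewise: any score below $\tau$ is dominated by $e=0$, and exceeding $\tau$ wastes effort once selection is assured, so the only candidate efforts are $e=0$ and $e=\max\{\tau-a,0\}$. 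Comparing them gives the threshold rule ``compete iff $v\cdot P_{=\tau} - (\tau-a) \ge 0$'', which (as $P_{=\tau}\to 1$) reduces to $v+a\ge\tau$, precisely the policy $s$ of \eqref{eq:A}. Imposing self-consistency---the realized fraction at or above $\tau$ must equal $c$---forces $\Pr[v+a\ge\tau]=c$, i.e.\ $G(\tau)=1-c$, so the unique fixed point is $\tau=t$.

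The central step is to show that an agent reaching \emph{exactly} $t$ is selected with probability tending to $1$. Here $\gamma := \Pr[a>t]$ and $\beta$ satisfy $\gamma+\beta=c$ by \eqref{eq:key}; among $n-1$ opponents the number strictly above $t$ is $\mathrm{Binomial}(n-1,\gamma)$ and the number tied at $t$ is $\mathrm{Binomial}(n-1,\beta)$, so the slots left for the tie pool, $cn - N_{>}(t)$, and the pool size, $N_{=}(t)+1$, both concentrate at $\beta n$. A Chernoff--Hoeffding argument shows $N_{\ge}(t)$ exceeds $cn$ by at most $O(\sqrt{n\log n})$ except with polynomially small probability, whence $P_{=t}=1-O(\sqrt{\log n/n})$. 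For existence I then take $A^{(n)}$ to be the threshold policy with threshold $t_n$ solving the finite-$n$ analogue of \eqref{eq:key} (so $t_n\to t$ and $s^{(n)}\to s$) and verify it is $\eps_n$-NE: agents with $v+a<t$ lose money by reaching $t$ (since $v<t-a$), agents with $a>t$ gain nothing from positive effort, and the worst deviation---an agent at $t$ shaving upward to clear the point mass---yields at most $v\cdot(1-P_{=t})=O(\sqrt{\log n/n})$ after a union bound over an $O(1/n)$-grid of scores (the source of the $\log n$ factor), assuming bounded valuations or a truncation argument for heavy-tailed $p_\ell$. Hence $\eps_n\to 0$.

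Finally, uniqueness: suppose some threshold sequence with $t'_n\to t'\neq t$ were also vanishing-$\eps_n$. If $t'>t$ then $\Pr[v+a\ge t']<c$, so fewer than $cn$ agents compete, the realized selection cutoff lies strictly below $t'$, and low-type agents selected with zero effort expose an $\Omega(1)$ inconsistency. If $t'<t$ then $\Pr[v+a\ge t']>c$ over-subscribes the point mass at $t'$, so $P_{=t'}$ is bounded away from $1$ and an agent at $t'$ has an $\Omega(1)$ incentive to clear the mass---contradicting $\eps_n\to 0$ in either case. Thus $t'=t$ and the limiting policy is unique. I expect the main obstacle to be this crux step: controlling the tie-breaking within the point mass and proving the selection probability at the threshold converges to $1$ at the stated rate, since it simultaneously drives the $\eps_n$ bound and the dichotomy underlying uniqueness.
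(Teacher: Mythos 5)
Your opening steps (uniqueness and monotonicity of $t$ via strict monotonicity of $(1-\alpha)F_1+\alpha F_2$ on the connected Minkowski-sum domain, and the fixed-point heuristic identifying the limiting threshold) match the paper. The genuine gap is in your existence construction. You take $A^{(n)}$ to be the threshold policy whose threshold $t_n$ solves the finite-$n$ analogue of \eqref{eq:key}, i.e., expected participation equals the capacity $k=cn$. This is exactly the policy the paper proves is \emph{not} an $\eps$-NE for any small $\eps$, uniformly in $n$: the number of agents reaching score $t$ is Binomial with mean $\approx cn$, so the event of \emph{under}-subscription (fewer than $k$ agents at or above $t$) has probability tending to $1/2$, and in that event an agent with $v+a<t$ who deviates to a \emph{small positive} effort---enough to clear the non-participating crowd but far below $t$---is selected. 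That deviation yields payoff $\approx v/2$ against a policy payoff of $\approx 0$, an $\Omega(1)$ gain (the paper's explicit counterexample: uniform $p_1$, $c=0.5$, $v=0.2$, $e=0.01$, gain $\approx 0.09$). Your deviation analysis checks only $e=0$, $e=\max\{t-a,0\}$, and shaving just above the point mass; it never considers intermediate efforts strictly below $t$. Relatedly, your concentration bound is one-sided: you control how far $N_{\geq}(t)$ \emph{exceeds} $cn$, which governs $P_{=t}$ under your (unstated, model-external) uniform tie-breaking rule, but the damage comes from the opposite tail $N_{\geq}(t)<cn$, which no tie-breaking convention can repair since the profitable deviation sits strictly below the tie.

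The missing idea is the paper's deliberate \emph{over}-subscription. The proxy $A^{(n)}$ lowers the participation cutoff to
\[
\Delta_n := \min\left\{F_1^{-1}\left(F_1(t) - \sqrt{\tfrac{\log n}{n}}\right),\ F_2^{-1}\left(F_2(t) - \sqrt{\tfrac{\log n}{n}}\right)\right\}
\]
(Definition \ref{def:Delta}), so that in expectation at least a $(c+\sqrt{\log n/n})$-fraction of agents exerts score $t$. A two-sided Chernoff argument then gives $P_i(e;A^{(n)}_{-i}) \leq n^{-\alpha}+n^{-(1-\alpha)}$ for \emph{every} $e<t$ (Lemma \ref{lm:winning_prob}), killing all sub-threshold deviations at once; the price is that agents with $v+a\in[\Delta_n,t)$ knowingly over-exert, contributing the $t-\Delta_n\to 0$ term to $\eps_n$ in Lemma \ref{lm:eps_NE}. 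With $t_n$ replaced by $\Delta_n$ and both tails controlled, the rest of your outline (including the uniqueness dichotomy $t'>t$ versus $t'<t$, which parallels the paper's instability argument, though the paper additionally rules out group-dependent thresholds $t_1\neq t_2$) can be carried through; as written, however, the existence claim fails.
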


	\noindent
	This theorem characterizes the policy \(A\) through a threshold function \(s\) parameterized by $t$, establishing that there exists a sequence of  policies \(A^{(1)}, \ldots, A^{(n)}, \ldots\)  that converge towards \(A\), progressively approximating it. 
	In Section \ref{sec:proof_two}, we provide the explicit form of policies $A^{(n)}$ characterized by $s^{(n)}$ (in Theorem \ref{thm:two_formal}) and a complete proof. 
	Note that the value $t$ defines the threshold function $s$, and consequently, the policy $A$.
	Therefore, we focus below on analyzing $t$. 
	We first remark on the uniqueness of $t$ guaranteed by Equation~\eqref{eq:key} under certain assumptions on the domains and densities (see Lemma~\ref{lm:unique}), which are natural in real-world contexts and satisfied by the distributions discussed in Section~\ref{sec:model}. 
	These assumptions ensure that each $F_\ell$, being a CDF, is strictly monotonic over its domain $\Omega_\ell + \Omega_a$. 
	Consequently, the combined CDF $(1-\alpha)F_1 + \alpha F_2$ must also be strictly monotonic over the connected domain $(\Omega_1 \cup \Omega_2) + \Omega_a$, which guarantees the uniqueness of the solution $t$.
	A key takeaway from Theorem~\ref{thm:two_general} is that, in the large-\(n\) limit, each agent makes a binary decision based on their combined valuation and initial ability \(v + a\): either exert effort \(\max\{t - a, 0\}\) to ensure a score of at least \(t\), or put in no effort at all. The threshold \(t\), determined by Equation~\eqref{eq:key}, plays a central role in this decision. It is chosen such that a fraction \(1 - c\) of the agent population has \(v + a \le t\), meaning that exactly a fraction \(c\) is expected to exert effort and compete. Thus, \(t\) implicitly encodes the level of competition: higher values of \(t\) reflect more intense competition, requiring higher effective scores for selection.
	
	\smallskip
	\noindent
	{\bf Computing the threshold in the NE policy.} 
	Equation~\eqref{eq:key} is crucial for applying Theorem \ref{thm:two_general}, as it enables the explicit computation of $t$, facilitating analysis in Section \ref{sec:analysis}.
	Let $F_a$ be the CDF of the initial ability.\footnote{Throughout this paper, we extend the domain of a CDF $F$ to the entire real line $\mathbb{R}$ such that $F$ is monotonically non-decreasing, with $F(-\infty) = 0$ and $F(\infty) = 1$.}  
	Note that 
	\[
	F_\ell(\zeta) = \Pr_{v\sim p_\ell, a\sim p_a}\left[v+a \leq \zeta\right] = \int_{\Omega_\ell} p_\ell(v) F_a(\zeta - v) dv.
	\]
	Thus, Equation~\eqref{eq:key} is equivalent to  
	\[
	(1-\alpha)\int_{\Omega_1} p_1(v) F_a(\zeta - v) dv + \alpha \int_{\Omega_2} \alpha p_2(v) F_a(\zeta - v) dv = 1 - c.
	\]  
	Specifically, when $p_2(v) = \frac{1}{\rho} p_1\left(\frac{v}{\rho}\right)$ for some $\rho\in (0,1]$, this equation becomes 
	becomes  
	\begin{equation}
		\label{eq:t_bias}
		(1-\alpha)\int_{\Omega_1} p_1(v) \cdot F_a(\zeta - v) dv + \frac{\alpha}{\rho} \int_{\Omega_2} p_1(\frac{v}{\rho}) F_a(\zeta - v) dv = 1 - c.
	\end{equation}
	We illustrate how to use this equation to compute the explicit form of $t$.
	Let $p_1$ be uniform on $\Omega_1=[0,1]$, $p_2$ be uniform on $\Omega_2=[0,\rho]$, and $p_a$ be uniform on $\Omega_a=[0,1]$.
	Such uniform densities are often used in studies and analyses \cite{KleinbergR18,celis2021longterm,celis2020interventions}, serves as a fundamental benchmark for insights into decision-making, allocation mechanisms, and strategic behavior.
	Moreover, domain $(\Omega_1\cup \Omega_2) + \Omega_a = [0,2]$ is connected for any value of $\rho\in (0,1]$, satisfying assumptions in Theorem \ref{thm:two_general}.
	Since $p_1(v) = 1$ for $v\in [0,1]$, $p_2(v) = \frac{1}{\rho}$ for $v\in [0,\rho]$ and $F_a(\zeta - v) = \min\{1, (\zeta - v)_+\} $ (Here, $x_+ = \max\{0, x\}$), Equation~\eqref{eq:key} reduces to 
	\[\int_{0}^{1} (1-\alpha)\cdot \min\{1, (\zeta - v)_+\} dv + \int_{0}^{\rho} \frac{\alpha}{\rho} \cdot \min\{1, (\zeta - v)_+\} dv = 1 - c.
	\]
	Consequently, the solution $t$ is a piecewise function of parameters $\rho$, $c$, and $\alpha$; see Proposition~\ref{prop:t_uniform_va} for its explicit form.  
	Here, we illustrate the behavior of $t$ over a representative range where $\alpha = 0.5$ (equal-sized groups) and $0 < c \leq \frac{1}{4}$ (high selectivity).
	\begin{eqnarray}
		\label{eq:t_uniform_va}
		t = 
		2 - 2\sqrt{c}  \mbox{ if $\rho < 1 - 2\sqrt{c}$ and } 
		t= \frac{1 + 3\rho}{1 + \rho} - \frac{\sqrt{4c\rho(1+\rho) - \rho(1-\rho)^2}}{1+\rho} \mbox{ if $\rho \geq 1 - 2\sqrt{c}$. } 
	\end{eqnarray}
	Note that, while $t$ is the same for both groups, it may happen that $t > 1 + \rho$ (when $\rho < 1 - 2\sqrt{c}$), implying that no agent in $G_2$ exerts any effort. 
	We note that for other densities, such as piecewise linear and polynomial, including Pareto, explicit forms of the solution $t$ are achievable.
	For instance, consider a Pareto distribution defined by $p_1(v)=\frac{2}{v^3}$ for $v\geq 1$, a $\rho$-biased density $p_2(v)=\frac{1}{\rho}p(\frac{v}{\rho})$, and $p_a$ is a point mass at 0.
	Here, $t$ can be explicitly calculated:
	If $\alpha+c-1>0$ and $\rho< \sqrt{(\alpha+c-1)/{\alpha}}$, then $t=\rho\sqrt{{\alpha}/(\alpha+c-1})$ otherwise, $t=\sqrt{(1-\alpha+\alpha\rho^2)/{c}}$.
	\smallskip
	\noindent
	\textbf{Computing the metrics.}
	We next ask whether the key metrics associated with the NE policy \(A\) from Section~\ref{sec:model} can be computed in closed form. Given the simple threshold structure of \(A\), these metrics can indeed be expressed as functions of the scalar threshold \(t\). 
	However, for general densities \(p_1\) and \(p_2\), the expressions for the representation ratio \(r_{\mathcal{R}}(A)\) and social welfare ratio \(r_{\mathcal{S}}(A)\) become more complex due to the presence of the \(\min\) operator and the convolution involved between \(p_\ell\) and \(p_a\) (see Theorem \ref{thm:metrics_general}).
	For clarity, we focus on the special case where \(p_2\) is a \(\rho\)-biased version of \(p_1\) and $p_a$ is a point mass at 0, which admits more tractable expressions. Since \(t\) depends on the parameters \(\rho\), \(c\), and \(\alpha\), the resulting metrics are also functions of these parameters. The following theorem characterizes both the explicit forms and their monotonicity behavior.

	\begin{theorem}[\bf{Metrics and their monotonicity}]
		\label{thm:metrics}
		Assume $p_2(v) = \frac{1}{\rho} p_1\left(\frac{v}{\rho}\right)$ for some $\rho\in (0,1]$ and $p_a$ is a mass point at 0.
		Let policy $A$ be defined as in Theorem \ref{thm:two_general}, characterized by $t$ being the unique solution of Equation \eqref{eq:t_bias}.
		Then for any density $p_1$,
		\[ 
		r_{\mathcal{R}}(A) = \frac{1 - F_1(t/\rho)}{1 - F_1(t)}, \ r_{\mathcal{S}}(A) = \frac{\rho \int_{t/\rho}^{\infty} (v - t/\rho) p_2(v) d v}{\int_{t}^{\infty} (v - t) p_1(v) d v}, \text{ and } \mathcal{RV}(A,m) = m(t).
		\]
		Moreover, $r_\mathcal{R}(A)$ and $r_\mathcal{S}(A)$ are monotonically increasing w.r.t. $\rho$, while $\mathcal{RV}(A,m)$ is monotonically increasing w.r.t. $\rho$ and monotonically decreasing w.r.t. $c$ and $\alpha$, for any merit function $m$.
	\end{theorem}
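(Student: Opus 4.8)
The plan is first to exploit the collapse of the equilibrium to a single threshold. Since $p_a=\delta_0$, every score equals the exerted effort and each $F_\ell$ is just the CDF of $p_\ell$; the policy $A$ of Theorem~\ref{thm:two_general} assigns effort $t$ (hence score $t$) to every agent with $v\ge t$ and effort $0$ otherwise, and because $m$ is strictly increasing the top-$k$ rule retains exactly the effort-exerters. I would note that Equation~\eqref{eq:t_bias} says precisely that the expected fraction of effort-exerters equals $c$, so in the limit $n\to\infty$ the law of large numbers gives $\mathcal R_\ell(A)\to 1-F_\ell(t)$ and makes the selected set coincide with $\{v\ge t\}$ up to an $o(1)$ fraction, whence the expected $\min$ converges to the $\min$ of the deterministic limits. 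Using the bias identity $F_2(t)=\Pr_{v\sim p_2}[v\le t]=\Pr_{w\sim p_1}[w\le t/\rho]=F_1(t/\rho)$ I get $\mathcal R_1=1-F_1(t)$ and $\mathcal R_2=1-F_1(t/\rho)$; since $\rho\le1$ forces $t/\rho\ge t$ and hence $\mathcal R_2\le\mathcal R_1$, the $\min$ in $r_{\mathcal R}$ selects $\mathcal R_2/\mathcal R_1$, giving the stated ratio. For welfare, the limiting per-capita payoff of $G_\ell$ is $\mathcal S_\ell=\int_t^\infty(v-t)p_\ell(v)\,dv$; the substitution $v=\rho w$ together with $p_2(v)=\tfrac{1}{\rho} p_1(v/\rho)$ recasts the $\ell=2$ integral into the displayed numerator. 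Writing $g(\tau):=\int_\tau^\infty(v-\tau)p_1(v)\,dv=\int_\tau^\infty(1-F_1(x))\,dx$ (non-increasing in $\tau$), I have $\mathcal S_2=\rho\,g(t/\rho)\le\rho\,g(t)\le g(t)=\mathcal S_1$, so again the $\min$ picks $\mathcal S_2/\mathcal S_1$. Revenue is immediate: every selected agent scores exactly $t$, so the average selected merit is $m(t)$.

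\textbf{Comparative statics of the threshold.} The crux of the monotonicity claims is the behavior of $t$ and of the derived quantity $u:=t/\rho$ as the parameters vary, and I would establish this directly from the defining relation $\Phi(t;\rho,\alpha):=(1-\alpha)F_1(t)+\alpha F_1(t/\rho)=1-c$ using only strict monotonicity of $F_1$, avoiding any smoothness assumption on $p_1$. For fixed $t$, increasing $\rho$ lowers $t/\rho$ and hence $F_1(t/\rho)$, so $\Phi$ strictly decreases in $\rho$; as $\Phi$ strictly increases in $t$, holding $\Phi=1-c$ forces $t$ up, so $t$ is increasing in $\rho$. Feeding this back, $(1-\alpha)F_1(t)$ rises, so $\alpha F_1(t/\rho)$ must fall to preserve the sum, whence $F_1(u)$ falls and $u=t/\rho$ is \emph{decreasing} in $\rho$. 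The same sign-chasing yields $t$ decreasing in $c$ (raising $c$ lowers the target $1-c$) and, rewriting $\Phi=F_1(t)+\alpha[F_1(t/\rho)-F_1(t)]$ with a nonnegative bracket since $t/\rho\ge t$, $t$ decreasing in $\alpha$.

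\textbf{Monotonicity of the metrics.} With the previous step in hand the three claims follow without further computation. Revenue inherits the comparative statics of $t$ verbatim, since $\mathcal{RV}=m(t)$ with $m$ strictly increasing: up in $\rho$, down in $c$ and $\alpha$. For $r_{\mathcal R}=(1-F_1(u))/(1-F_1(t))$, raising $\rho$ makes the numerator $1-F_1(u)$ rise (as $u$ falls) and the positive denominator $1-F_1(t)$ fall (as $t$ rises), so the ratio increases. For $r_{\mathcal S}=\rho\,g(u)/g(t)$, the numerator $\rho\,g(u)$ is a product of two factors that both rise in $\rho$ (the explicit $\rho$, and $g(u)$ since $u$ falls while $g$ is non-increasing), whereas the denominator $g(t)$ falls (since $t$ rises), so the ratio again increases. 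I expect the genuine obstacle to be exactly the coupled comparative statics above: each ratio depends on $\rho$ both through $t$ and through $t/\rho$, and the argument works only because these move in \emph{opposite} directions ($t\uparrow$, $t/\rho\downarrow$), so that numerator and denominator cooperate; everything else is bookkeeping. One caveat I would flag explicitly: when $t/\rho$ exceeds the top of the support of $p_1$ (as in the uniform example when $\rho<1-2\sqrt c$), $\mathcal R_2=\mathcal S_2=0$ and both ratios vanish, so ``monotonically increasing'' must be read as non-decreasing overall and strictly increasing on the range where $G_2$ retains positive representation.
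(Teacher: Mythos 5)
Your proposal is correct and takes essentially the same route as the paper: identical closed-form derivations (threshold structure, concentration to pass the expectation through the min, the substitution $v=\rho w$, and the identity $F_2(t)=F_1(t/\rho)$), and the same comparative-statics engine---$t$ increasing in $\rho$ while $t/\rho$ decreases, obtained by feeding the constraint $(1-\alpha)F_1(t)+\alpha F_1(t/\rho)=1-c$ back into itself---driving all three monotonicity claims. Two small points in your favor: you state the correct direction for the denominator ($g(t)$ falls as $\rho$ rises, where the paper's proof misstates it as increasing, though its conclusion is unaffected), and your derivation yields the numerator $\rho\int_{t/\rho}^{\infty}(v-t/\rho)\,p_1(v)\,dv$, which matches what the paper's own proof produces and indicates that the $p_2$ appearing in the theorem's displayed formula is a typo.
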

	
	\noindent
	The proof is deferred to Section~\ref{sec:analysis_missing}. Combined with the closed-form expression for \(t\), this result enables direct computation of the metrics, which we use for contest analysis in Section~\ref{sec:analysis}. Notably, \(r_{\mathcal{R}}(A)\) and \(r_{\mathcal{S}}(A)\) increase with \(c\), while \(\mathcal{RV}(A,m)\) decreases, highlighting both benefits and trade-offs for the institute. 
	{
		The qualitative behavior of \(r_{\mathcal{R}}(A)\) and \(r_{\mathcal{S}}(A)\) w.r.t. $c$ and $\alpha$, however, depends on the underlying densities; see Remark~\ref{remark:rS_monotone}.
	}
	
	\smallskip
	\noindent
	\textbf{Discussion of NE policies for finite \(n\).}
	To assess the robustness and practical relevance of our theoretical results, we study the closeness between the finite-\(n\) NE policy and the infinite-population threshold policy \(s\) defined in Equation~\eqref{eq:A} (see Section \ref{sec:finite}). We propose a dynamic procedure (Algorithm~\ref{alg:finite}) that initializes with \(s_1 = s_2 = s\) for groups \(G_1\) and \(G_2\), and iteratively updates them.
	
	We simulate this dynamics under the setting \(p_1 = \mathrm{Unif}[0,1]\), \(p_2 = \mathrm{Unif}[0,\rho=0.8]\), \(p_a = \delta_0\), with \(c = 0.2\), \(\alpha = 0.5\), and \(n = 20, 200, 600, 1200\), running 500 iterations in each case. Figure~\ref{fig:finite_policy} shows representative results; full plots are in Figures~\ref{fig:policy_n10}–\ref{fig:policy_n600}.
	
	The simulations show that even moderate population sizes (\(n \geq 600\)) yield policies closely tracking the infinite NE, validating its use as a practical approximation. We also observe group-level differences in convergence speed and stability (Figure~\ref{fig:convergence_grid}), with smoother and faster stabilization as \(n\) increases.
	
	Finally, using the proof of Theorem~\ref{thm:two_general}, we establish that the finite-\(n\) NE policy is \(O(\log n/n)\)-close in value and yields an \(O(\sqrt{\log n/n})\)-NE. Concretely, this means that for large \(n\), the finite policy takes the form \(s^{(n)} = 0\) for \(v < t - O(\sqrt{\log n/n})\) and \(s^{(n)} = t\) otherwise. For general distributions, closeness depends on the density structure and is more involved.
	Aligning with our empirical observations, these results reinforce the practical relevance of the large-$n$ analysis, which captures the incentive-aligned baseline under strategic behavior.
	See Section~\ref{sec:finite_theoretical} for details.

	\begin{figure}[t]
		\subfigure[\text{$n=20$}]{
			\includegraphics[width=0.3\textwidth]{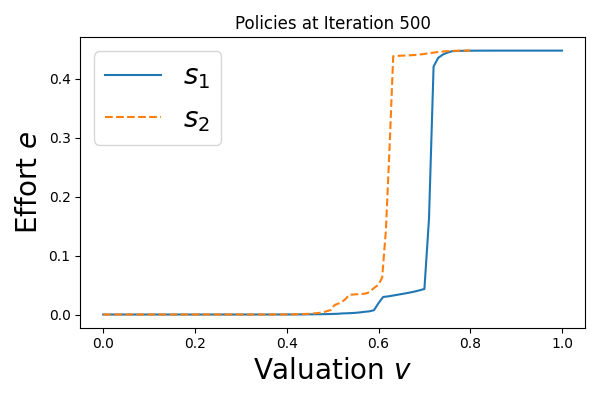}
		}  
		\subfigure[\text{$n=200$}]{
			\includegraphics[width=0.3\textwidth]{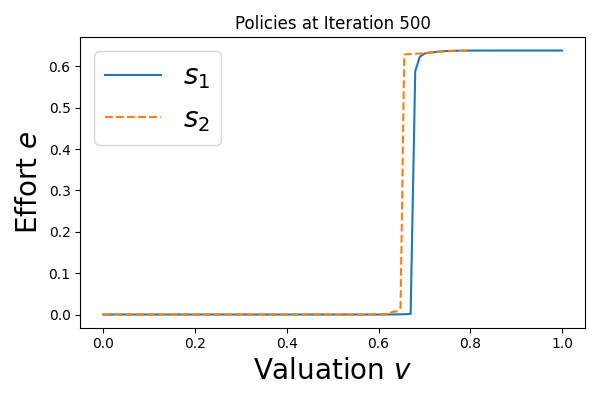}
		}
		\subfigure[\text{$n=600$}]{
			\includegraphics[width=0.3\textwidth]{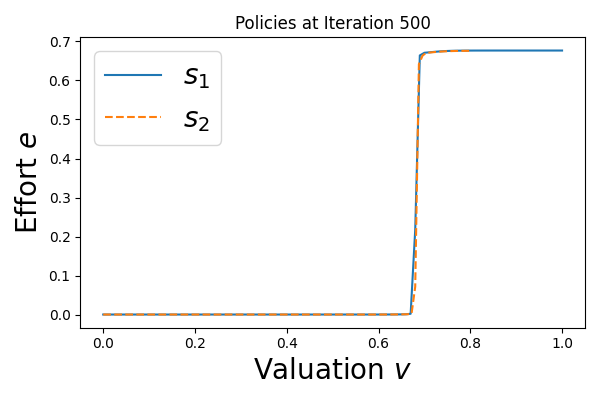}
		}    
		%\vspace{-3mm}
		\caption{\small Evolution of group effort policies at iteration 500 for various $n$ with $\rho = 0.8$ and $c = 0.2$.}
		%\vspace{-5mm}
		\label{fig:finite_policy}
	\end{figure}

	\smallskip
	\noindent
	\textbf{Key ideas in the proof of Theorem~\ref{thm:two_general}.}
	The proof involves two main steps: hypothesizing the NE policy structure and verifying that it is indeed an equilibrium. We sketch the core ideas below; a full overview appears in Section~\ref{sec:overview}. For clarity, we focus on the case where \(p_a\) is a point mass at 0.
	
	\smallskip
	\noindent
	{\em Hypothesizing the structure of the NE policy.}
	The key challenge in characterizing the NE policy lies in the absence of its explicit form for finite $n$. 
	Drawing intuition from the undifferentiated case with density $p$, where the NE policy converges to a threshold function $s(v) = F_p^{-1}(1-c)$ if $v \geq F_p^{-1}(1-c)$ and $0$ otherwise, the first idea is to hypothesize that in the two-group case, the NE policies $s_1, s_2$ also take threshold forms with group-dependent thresholds $t_1, t_2$. 
	However, asymmetry in $p_1$, $p_2$ complicates the expression of winning probabilities $P_i$ and prevents a straightforward computation of $t_1, t_2$. 
	Focusing on the uniform case where $p_1$ is $\mathrm{Unif}[0,1]$ and $p_2$ is the $\rho$-biased version supported on $[0,\rho]$, the second idea is that in the limit $n \to \infty$, NE stability requires $t_1 = t_2 = t$. 
	If $t_1 > t_2$, then some agents in $G_1$ benefit by reducing their effort to slightly above $t_2$, contradicting NE; a symmetric argument holds if $t_1 < t_2$. 
	Although $s_1$ and $s_2$ are defined on different domains, they can be viewed as restrictions of the same threshold function $s$ characterized by $t$. 
	This justifies setting $s_1 = s_2 = s$ with a shared threshold $t$, and modeling the two-group contest using an effective mixture density $p = (1-\alpha)p_1 + \alpha p_2$ supported on $\Omega_1 \cup \Omega_2$. 
	As $n \to \infty$, the contest becomes indistinguishable from the undifferentiated case with $p$, yielding the same threshold $t = F_p^{-1}(1 - c)$ as in Equation~\eqref{eq:key}. 
	Thus, the NE policy $A$ defined in Theorem~\ref{thm:two_general} is a natural candidate for the limiting equilibrium. 
	Note that this $p$ is only used for hypothesizing NE rather than demonstrating the convergence; see discussion in Section \ref{sec:distributional}. 
	Moreover, the above argument suggests that, in the limit, the strategic environment becomes uniform across all agents, motivating us to develop an \emph{infinite contest} (Definition \ref{def:infinite}) and provide an alternative proof; see Section \ref{sec:infinite}.

	\smallskip
	\noindent
	{\em Showing $A$ is an NE.}
	Although we have a solid guess for the NE policy $A$, a key challenge arises: the NE policy for finite $n$ lacks an explicit form, making it unclear how to define convergence to $A$. 
	Towards this end, the third key idea is to provide a different proof of convergence to a threshold function for an undifferentiated case with density $p$, without using the explicit NE formulations.
	The first attempt to prove that $A$ is an $\epsilon$-NE for sufficiently large $n$ fails—even in the simple case with $p = \mathrm{Unif}[0,1]$ and $c = 0.5$. 
	One can construct a valuation $v = 0.2$ such that the agent benefits by exerting a small effort $e = 0.01$, achieving a winning probability $P_i(e; A_{-i}) \approx 0.5$ and obtaining a payoff of approximately $0.09$.
	To bypass this, the idea is to define a proxy sequence $A^{(n)}$ with threshold policies $s^{(n)}$ that 1) converge to $A$ as $n \to \infty$ and 2) ensure that $P_i(e; A_{-i}^{(n)}) \to 0$ for $e < t$, making $A^{(n)}$ an $\eps_n$-NE with $\eps_n \to 0$ (see Section \ref{sec:stability}). 
	To this end, we define $s^{(n)}(v) = t$ if $v \geq t - \sqrt{\log n/n}$ and $0$ otherwise, so that a $(c + \sqrt{\log n/n})$-fraction of agents put in effort $t$. 
	This yields $P_i(e; A_{-i}^{(n)}) \leq \sqrt{1/n}$ by concentration, ensuring the payoff from deviation is at most $\sqrt{1/n}$, and $A^{(n)}$ is an $O(\sqrt{\log n/n})$-NE. 
	Finally, we adapt this new proof technique to the two-group case with general $p_1$, $p_2$ by carefully selecting the following threshold shift (Definitions \ref{def:n_t} and \ref{def:Delta}):
	\[ 
	\Delta_n := \min\left\{F_1^{-1}(F_1(t) - \sqrt{\log n/n}),\ F_2^{-1}(F_2(t) - \sqrt{\log n/n})\right\},
	\]
	and set $s^{(n)}(v) = t$ if $v \geq t - \Delta_n$, $0$ otherwise (see Theorem~\ref{thm:two_formal}). 
	We find that $\lim_{n\rightarrow \infty} \Delta_n = t$, indicating that $s^{(n)}$ converges to $s$.
	Crucially, such a $\Delta_n$ ensures at least a $(c+\sqrt{\log n/n})$-fraction of agents, in expectation, putting in effort $t$.
	Using this property, we establish a concentration tail bound for the winning probability $P_i$ analogous to the undifferentiated contest: $P_i(e; A_{-i}^{(n)}) \rightarrow 0$ if $e < t$ (see Lemma \ref{lm:winning_prob}).
	Furthermore, this minimal winning probability guarantees that $A^{(n)}$ is an $\eps_n$-NE with $\lim_{n\rightarrow \infty} \eps_n = 0$ (Lemma \ref{lm:eps_NE}).
	This analysis concludes Theorem \ref{thm:two_general}.

	\smallskip
	\noindent
	\textbf{Uniqueness of the NE guaranteed by Theorem \ref{thm:two_general}.} 
	First, if $\Omega_1 \cup \Omega_2$ is not connected, the CDF $(1-\alpha) F_1(v) + \alpha F_2(v)$ may not be strictly monotonic over $\Omega_1 \cup \Omega_2$. 
	For instance, if $\alpha = c = 0.5$, $p_1$ is uniform on $\Omega_1 = [0,1]$ and $p_2$ is uniform on $\Omega_2 = [2,3]$, then $(1-\alpha) F_1(1) + \alpha F_2(1) = (1-\alpha) F_1(2) + \alpha F_2(2) = 0.5 = 1-c$. 
	Consequently, there may exist two distinct points $t_1 = 1$ and $t_2 = 2$ in $\Omega_1 \cup \Omega_2$, leading to non-unique NEs. 
	In contrast, when $\Omega_1 \cup \Omega_2$ is connected and each $p_\ell$ is positive on $\Omega_\ell$, the unique solution $t$ to Equation \eqref{eq:A} ensures a unique NE policy. 
	To see this, by Corollary 3.2 of \cite{chawla2013auctions}, symmetric agents use symmetric policies in NE, so we hypothesize a policy pair $(s_1, s_2)$ for $G_1$ and $G_2$, respectively. 
	As $n \to \infty$, both $s_1$ and $s_2$ manifest as threshold functions, leading to $s_1 = s_2 = s$, ensuring the uniqueness of NE. 
	Additionally, it arises because if $s_1\neq s_2$, the NE would destabilize, as agents from one group would adjust their thresholds to gain higher payoffs.
	This proof can be easily extended to multiple groups and non-identical cost of effort; see Remark \ref{remark:extension}.
	
	%%%%%%%%%%%%%%%%%%%%%%%%%%%%%%%%%%%%%%%%%%%%%%%%%%%%%%%%%%
	\section{Analysis: metric behavior and intervention design}
	
	\label{sec:analysis}
	
	\begin{figure}[t]
		\centering
		
		\subfigure[\text{$r_{\mathcal{R}}(A)$ v.s. $\rho$}]{\label{fig:uniform_rR_rho}
			\includegraphics[width=0.22\linewidth, trim={0cm 0cm 0cm 0cm},clip]{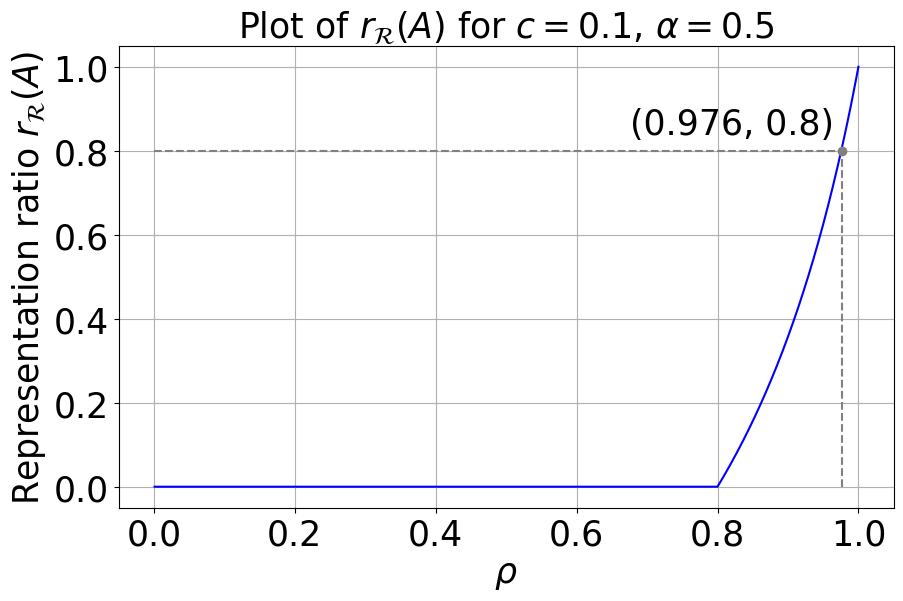}
		}
		\subfigure[\text{$r_{\mathcal{R}}(A)$ v.s. $c$}]{\label{fig:uniform_rR_c}
			\includegraphics[width=0.22\linewidth, trim={0cm 0cm 0cm 0cm},clip]{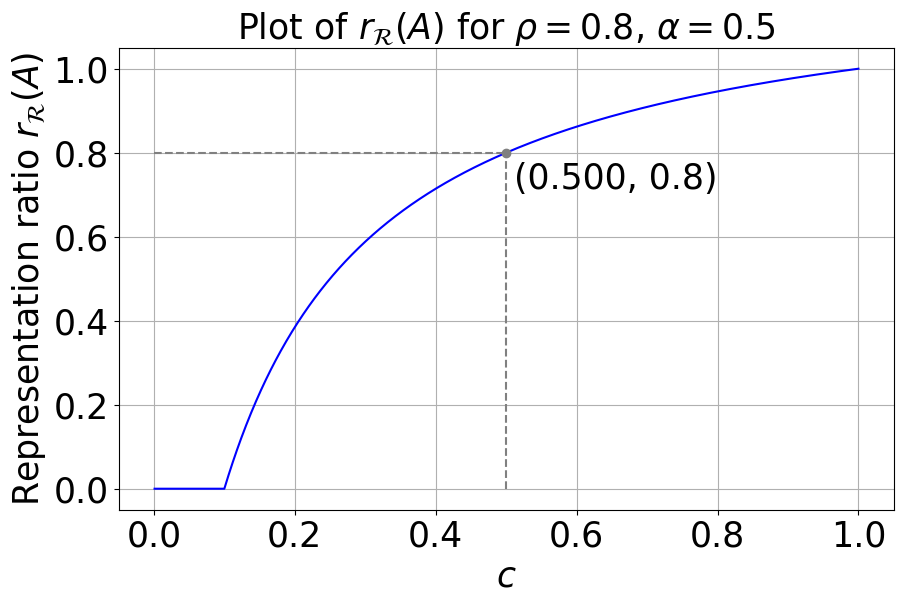}
		}
		\subfigure[\text{$r_{\mathcal{S}}(A)$ v.s. $\rho$}]{
			\includegraphics[width=0.22\linewidth, trim={0cm 0cm 0cm 0cm},clip]{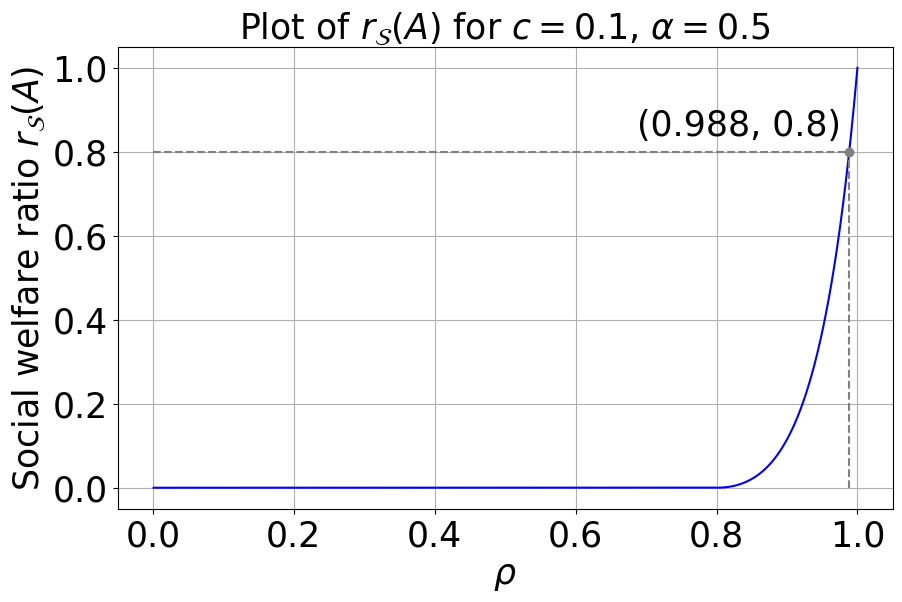}
		}
		\subfigure[\text{$r_{\mathcal{S}}(A)$ v.s. $c$}]{
			\includegraphics[width=0.22\linewidth, trim={0cm 0cm 0cm 0cm},clip]{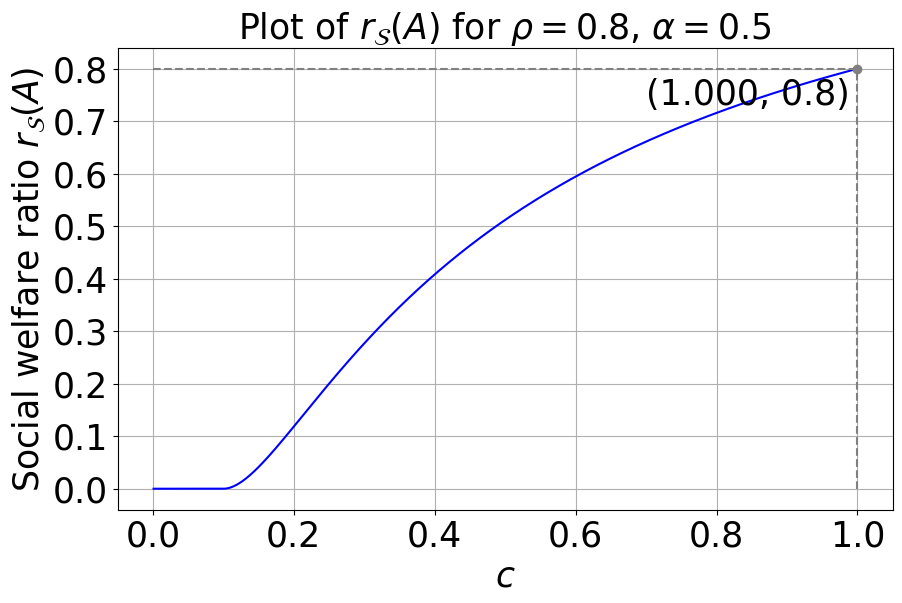}
		}
		%\vspace{-0.1in}
		\caption{\small Plots of the representation ratio $r_{\mathcal{R}}(A)$ and the social welfare ratio $r_{\mathcal{S}}(A)$ as parameters $\rho$ and $c$ vary for Proposition \ref{prop:uniform}, with default settings of $(\rho, c, \alpha) = (0.8, 0.1, 0.5)$. A dotted line in these plots indicates the threshold at which $r_{\mathcal{R}}(A) = 0.8$ or $r_{\mathcal{S}}(A) = 0.8$.}
		\label{fig:uniform}
		%\vspace{-0.23in}
	\end{figure}

	\begin{figure}
		\centering
		\subfigure[\text{$p_a$ is uniform on $[0,1]$}]{\label{fig:uniform_t_va}
			\includegraphics[width=0.4\linewidth]{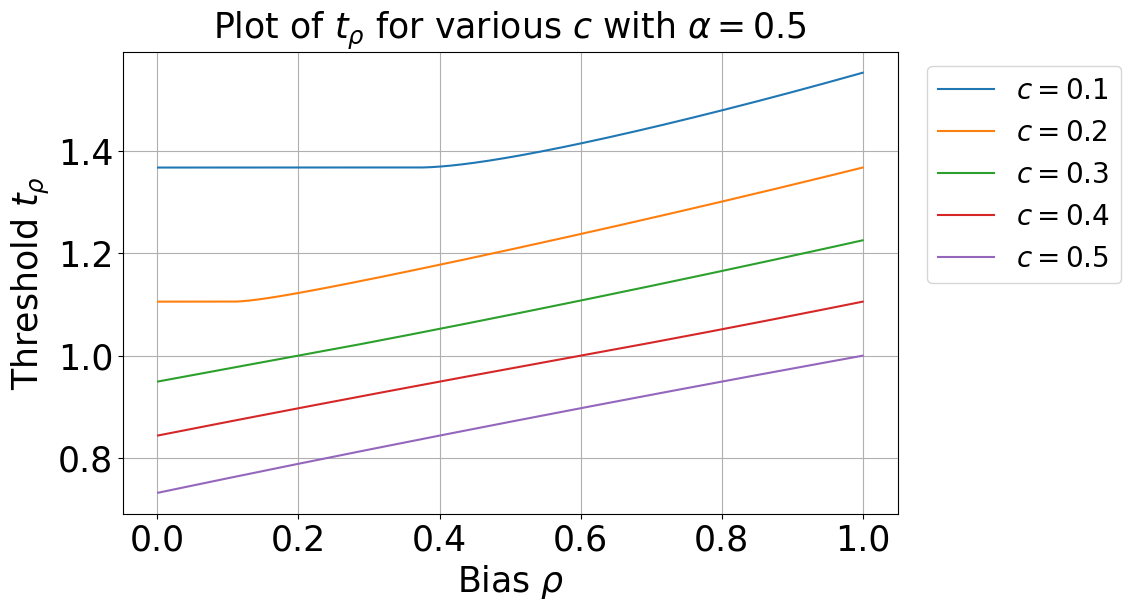}
		}
		\subfigure[\text{$p_a$ is a point mass at 0}]{\label{fig:uniform_t}
			\includegraphics[width=0.4\linewidth]{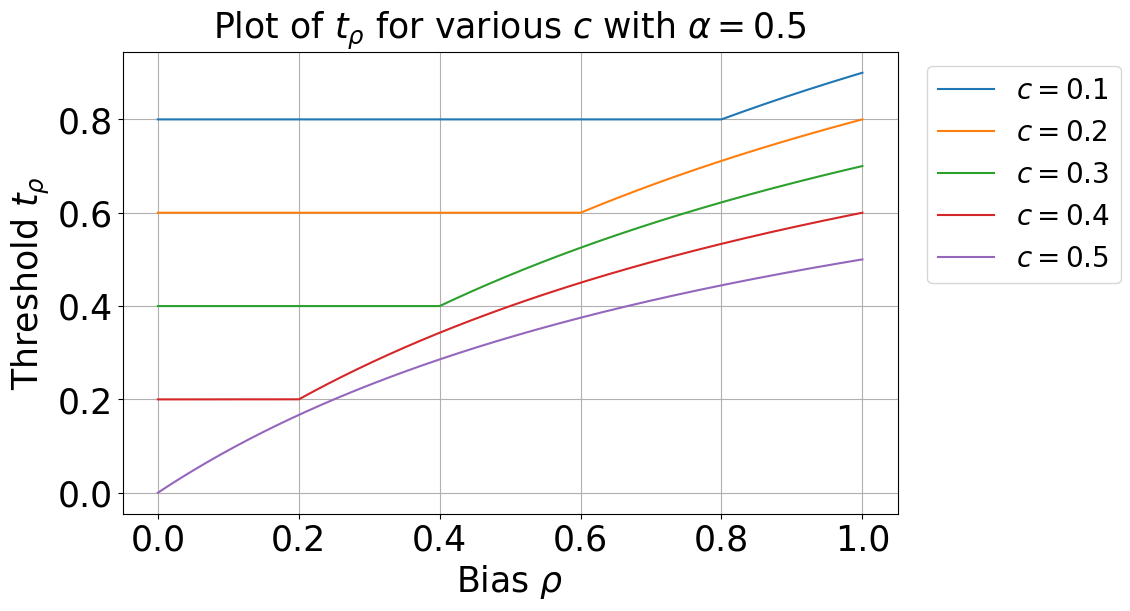}
		}
		%  %%\vspace{-0.2in}
		\caption{\small Plots of $\sol$ versus $\rho$ for various $c$ with $\alpha = 0.5$ for the uniform distribution.}
		\label{fig:uniform_ability}
	\end{figure}
	
	{\bf Variation of metrics with $\rho$ and $c$.}
	Using Theorems~\ref{thm:two_general} and~\ref{thm:metrics}, we analyze how the metrics—representation ratio \(r_{\mathcal{R}}(A)\), social welfare ratio \(r_{\mathcal{S}}(A)\), and average revenue \(\mathcal{RV}(A,m)\)—respond to changes in the bias parameter \(\rho\) and the selectivity parameter \(c\). We study whether these effects are linear or non-linear, and whether they exhibit sharp thresholds.
	
	\smallskip
	\noindent
	{\em Setup.}
	We adopt the setting from Section~\ref{sec:result_bias}: \(p_1\) is uniform on \([0,1]\), \(p_2\) is its \(\rho\)-biased variant, uniform on \([0, \rho]\), and \(p_a\) is a point mass at 0. This isolates the effect of asymmetric valuations while simplifying calculations. The corresponding CDFs are \(F_1(v) = v\) and \(F_2(v) = v/\rho\), with both saturating to 1 outside their support. Unless varied explicitly, we use default values \(\rho = 0.8\), \(c = 0.1\), and \(\alpha = 0.5\), representing moderate bias, high selectivity, and balanced group sizes. We refer to Section~\ref{sec:analysis_normal} for analogous analysis under truncated Gaussian distributions.
	
	\smallskip
	\noindent
	{\em Closed-form metrics.}
	Fixing the density setup above, we apply Theorems~\ref{thm:two_general} and~\ref{thm:metrics} to derive closed-form expressions for \(t\), \(r_{\mathcal{R}}(A)\), and \(r_{\mathcal{S}}(A)\). These are summarized below (proof in Section \ref{sec:analysis_missing}).
	
	\begin{proposition}[\bf Metrics for uniform densities]
		\label{prop:uniform}
		Let \(p_1\) be uniform on \([0,1]\), \(p_2\) uniform on \([0,\rho]\), and \(p_a\) a point mass at 0. Let \(A\) be the NE policy as \(n \to \infty\). Let $\rho_c:=1 - \frac{c}{1-\alpha}$. Then:
		\begin{align*} 
			&  t = 
			1 - \frac{c}{1-\alpha}  &&\text{if } \rho < \rho_c, 
			\quad 
			t = \frac{\rho(1-c)}{\rho -\alpha \rho + \alpha} &&\text{if } \rho \geq  \rho_c, \\ 
			&  r_{\mathcal{R}}(A) = 0 &&\text{if } \rho <  \rho_c, 
			\quad 
			r_{\mathcal{R}}(A) = \frac{\rho - \alpha \rho + \alpha + c - 1}{\alpha - \alpha \rho +c \rho} &&\text{if } \rho \geq  \rho_c, \\
			&  r_{\mathcal{S}}(A) = 0 &&\text{if } \rho <  \rho_c, 
			\quad 
			r_{\mathcal{S}}(A) = \frac{\rho(\rho - \alpha \rho + \alpha + c - 1)^2}{(\alpha-\alpha \rho +c\rho)^2} &&\text{if } \rho \geq  \rho_c.
		\end{align*}
		Moreover, \(\mathcal{RV}(A,m) = m(t)\) for any merit function \(m(\cdot)\); 
		$r_{\mathcal{R}}(A)$ and $r_{\mathcal{S}}(A)$ are monotonically increasing functions of parameters $\rho, c, \alpha$.
	\end{proposition}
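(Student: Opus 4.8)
The plan is to treat the proposition as a substitution exercise layered on top of Theorems~\ref{thm:two_general} and~\ref{thm:metrics}, reserving the genuine work for the monotonicity in $c$ and $\alpha$, which those theorems do not supply. First I would compute $t$. Because $p_a$ is a point mass at $0$, the relevant CDFs collapse to the valuation CDFs $F_1(v) = \min\{v,1\}$ and $F_2(v) = \min\{v/\rho,1\}$. I would then solve Equation~\eqref{eq:key}, i.e.\ $(1-\alpha)F_1(t) + \alpha F_2(t) = 1-c$, by splitting on whether $t \le \rho$. If $t > \rho$ then $F_2(t)=1$ and the equation is linear in $t$, giving $t = 1 - c/(1-\alpha)$, which is consistent with $t > \rho$ exactly when $\rho < \rho_c$. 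If $t \le \rho$ then $F_2(t) = t/\rho$ and solving yields $t = \rho(1-c)/(\rho - \alpha\rho + \alpha)$, valid precisely when $\rho \ge \rho_c$. The two branches agree ($t=\rho$) at $\rho=\rho_c$, so $t$ is continuous; uniqueness is already guaranteed by Theorem~\ref{thm:two_general} since $[0,1]\cup[0,\rho]$ is connected.

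Next I would substitute $t$ into the closed forms of Theorem~\ref{thm:metrics}, using the identity $F_2(t) = F_1(t/\rho)$. For $\rho < \rho_c$ we have $t > \rho$, hence $t/\rho > 1$, so $F_1(t/\rho) = 1$ and $\int_t^\infty (v-t)\,p_2(v)\,dv = 0$; both $r_{\mathcal{R}}(A)$ and $r_{\mathcal{S}}(A)$ therefore vanish, reflecting that the disadvantaged group is entirely shut out. For $\rho \ge \rho_c$ one has $F_1(t/\rho)=t/\rho$, and the uniform integrals evaluate to $(1-t)^2/2$ and $(\rho-t)^2/(2\rho)$, giving $r_{\mathcal{R}}(A) = (\rho-t)/(\rho(1-t))$ and $r_{\mathcal{S}}(A) = (\rho-t)^2/(\rho(1-t)^2)$. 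Plugging in the branch-B value of $t$ and simplifying the rational expressions produces the stated formulas, while $\mathcal{RV}(A,m)=m(t)$ is immediate from Theorem~\ref{thm:metrics}.

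Finally, for monotonicity I would handle $\rho$ separately from $c$ and $\alpha$. Monotone increase of $r_{\mathcal{R}}$ and $r_{\mathcal{S}}$ in $\rho$ is inherited directly from Theorem~\ref{thm:metrics}, as the uniform density is a special case. For $c$ and $\alpha$, which that theorem does not cover, I would route everything through $t$: Theorem~\ref{thm:two_general} gives that $t$ decreases in $c$, and the branch-B formula $t = \rho(1-c)/(\rho + \alpha(1-\rho))$ shows $t$ also decreases in $\alpha$ (the denominator grows since $1-\rho \ge 0$). Setting $g(t) := (\rho-t)/(1-t)$, a one-line check gives $g'(t) = (\rho-1)/(1-t)^2 \le 0$, so $r_{\mathcal{R}} = g(t)/\rho$ is decreasing in $t$; composing with $t$ decreasing in $c$ and in $\alpha$ yields the claimed increase, and since $r_{\mathcal{S}} = \rho\, r_{\mathcal{R}}^2$ with $\rho$ held fixed, the monotonicity of $r_{\mathcal{S}}$ follows from that of $r_{\mathcal{R}}$. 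The main obstacle is the bookkeeping across the regime boundary: increasing $c$ or $\alpha$ lowers $\rho_c$ and can push a point from branch~A (metrics $=0$) into branch~B, so I must confirm that both metrics are continuous at $\rho=\rho_c$ (each branch gives $0$ there) in order to upgrade per-branch monotonicity into global monotonicity of the piecewise function.
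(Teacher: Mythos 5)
Your proposal is correct, and its skeleton matches the paper's proof: both solve the piecewise-linear equation $(1-\alpha)F_1(t)+\alpha\min\{1,t/\rho\}=1-c$ branch by branch to get $t$, both read the metric formulas off Theorem~\ref{thm:metrics} (with the same uniform integrals $(1-t)^2/2$ and $(\rho-t)^2/(2\rho)$, and the same identity $r_{\mathcal{S}}(A)=\rho\, r_{\mathcal{R}}(A)^2$), and both note $\mathcal{RV}(A,m)=m(t)$ is immediate. Where you genuinely diverge is the monotonicity in $c$ and $\alpha$. The paper works directly with the closed form: it writes $r_{\mathcal{R}}(A)=1-\frac{(1-c)(1-\rho)}{\alpha-\alpha\rho+c\rho}$ on the branch $\rho\ge\rho_c$ and computes the three partial derivatives explicitly (e.g.\ $\partial f/\partial\rho = c(1-c)/Y^2\ge 0$, $\partial f/\partial\alpha=(1-\rho)/Y^2\ge 0$), then observes that $\rho_c$ decreases in $c,\alpha$ to patch the two branches together. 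You instead factor the dependence through the threshold: $t$ is decreasing in $c$ (Theorem~\ref{thm:two_general}) and in $\alpha$ (visible from the branch-B formula), and $r_{\mathcal{R}}(A)=g(t)/\rho$ with $g(t)=(\rho-t)/(1-t)$ decreasing in $t$, so the composition is increasing; the $\rho$-monotonicity you simply inherit from Theorem~\ref{thm:metrics} rather than re-deriving it. Your route is more modular and avoids redundant computation, and it is in fact the viewpoint the paper itself adopts later in Remark~\ref{remark:rS_monotone} (where it cautions that the decrease of $r_{\mathcal{R}}$ in $t$ is a feature of the uniform density, not a general fact—your explicit check of $g'(t)=(\rho-1)/(1-t)^2\le 0$ is exactly the step that licenses it here). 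The paper's derivative computation, by contrast, yields slightly more quantitative information and keeps the proposition self-contained. Your handling of the regime boundary (both branches vanish at $\rho=\rho_c$, so per-branch monotonicity upgrades to global monotonicity) is also the same patching argument the paper uses, just stated more carefully.
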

	
	\smallskip
	\noindent
	As in Equation~\eqref{eq:t_uniform_va}, Proposition~\ref{prop:uniform} reveals a sharp threshold at \(\rho = 1 - \frac{c}{1 - \alpha}\). When \(\rho < 1 - \frac{c}{1 - \alpha}\),  \(t\) lies above the maximum valuation in \(G_2\), implying that no agents from that group participate. Consequently, \(r_{\mathcal{R}}(A)\) and \(r_{\mathcal{S}}(A)\) are zero and independent of \(\rho\). When \(\rho\) crosses this threshold, these metrics become positive and increase monotonically with \(\rho\), reaching 1 at \(\rho = 1\), the symmetric case.

	\smallskip
	\noindent
	\emph{Metric behavior.}
	Figure~\ref{fig:uniform} plots the representation ratio \(r_{\mathcal{R}}(A)\) and the social welfare ratio \(r_{\mathcal{S}}(A)\) as functions of the bias parameter \(\rho\) and selectivity parameter \(c\). The corresponding threshold values \(t\) are shown in Figure~\ref{fig:uniform_t}.
	Both \(r_{\mathcal{R}}(A)\) and \(r_{\mathcal{S}}(A)\) exhibit non-linear growth with increasing \(\rho\) and \(c\), and drop sharply—super-linearly—when these parameters decrease. For instance, with \(c = 0.1\), \(\alpha = 0.5\), and \(\rho \leq 0.85\), we observe that \(r_{\mathcal{R}}(A) \leq 0.2\), indicating notably low representation for group \(G_2\).
	This highlights a key practical insight: in highly selective environments, such as contests with a 1-in-10 selection rate, strategic behavior amplifies disparities. These trends echo empirical findings on under-representation in competitive domains~\cite{JBHE2023BlackStudents,NCES2020Condition}.
	Moreover, reductions in \(c\) (i.e., increased selectivity) lead to pronounced declines in both representation and social welfare.

	These trends of metrics offer designers of meritocratic selection processes critical insights into strategies for countering under-representation and elevating $r_{\mathcal{R}}(A)$ (or mitigating disparities in average payoffs and elevating $r_{\mathcal{S}}(A)$). 
	We recall the two main criteria for identifying representation bias: 1) Ensuring the selection of at least one agent from every group, and 2) { adhering to the 80\% rule, which serves as a guideline for identifying potential adverse impact if the hiring rate for $G_2$ falls below 80\% of that for $G_1$}, i.e., 
	$r_{\mathcal{R}}(A) \geq 0.8$.
	Given the fixed nature of 
	$\alpha$ within the population structure, the main avenues for interventions aimed at improving 
	$r_{\mathcal{R}}(A)$ focus on adjusting the parameters 
	$\rho$ or $c$.
	Below, we explore potential interventions for both approaches:
	
	\smallskip
	\noindent
	(1) Increasing $\rho$ effectively means increasing the valuation of agents in $G_2$. 
	Various strategies have been proposed and implemented to achieve this goal. 
	For instance, \cite{Goldin} highlights several approaches to narrow the pay gap, including enhancing workplace flexibility, decreasing the cost associated with temporal flexibility, and improving the availability of high-quality, affordable childcare.  
	These interventions aim to increase job valuation for women, analogous to increasing $\rho$. 
	Figure \ref{fig:uniform_rR_rho} quantifies the required increase in $\rho$: to ensure at least one agent from $G_2$ is selected, $\rho$ must exceed 0.8; to adhere to the 80\%-rule, it should be at least 0.976.
	
	\smallskip
	\noindent
	(2) As shown in Figure \ref{fig:uniform_rR_c}, raising $c$ above 0.1 satisfies the criterion for selecting at least one agent from $G_2$, while elevating it to 0.5 meets the 80\%-rule requirement. Increasing $c$ represents a more straightforward approach than boosting $\rho$ and might be more feasible for institutions. This could involve pre-selecting a larger subset of candidates and applying a distinct selection process to {this subset}, based on institutional priorities and the likelihood of successful candidates following the expected trajectories.

	Finally, regarding the average revenue $\mathcal{RV}(A,m) = m(t)$, it immediately follows from Proposition \ref{prop:uniform} that $\mathcal{RV}(A,m)$ significantly decreases as competition within the entire population intensifies--either through a decrease in $\rho$ or an increase in $c$.  
	Given that average revenue is indicative of the benefit of the institute, this trend underscores the critical need for contest designers to mitigate systemic biases in valuations. 
	The decline in average revenue with increasing bias compromises not only the fairness of the contest, but also the overall quality of the outcomes it produces.
	This aligns with research that has explored the losses attributed to systemic biases \cite{HewlettMarshallSherbin2013DiversityInnovation,McKinseyGlobalInstitute2011WomenEconomy,Page2017DiversityBonus}.

	\smallskip
	\noindent
	\textbf{Optimizing interventions under cost and fairness constraints.}
	Having identified two policy levers—reducing bias (\(\rho\)) and increasing selectivity (\(c\))—a natural question arises: how should institutions choose between these interventions to improve outcomes such as the representation ratio or social welfare ratio?
	To address this, we formulate a constrained optimization problem for cost-effective intervention design.
	We allow two interventions: increasing \(\rho\) by \(\Delta_\rho \in [0, 1 - \rho]\), and increasing \(c\) by \(\Delta_c \in [0, 1 - c]\). Let \(r_{\mathcal{R}}(\Delta_\rho, \Delta_c)\) denote the representation ratio under the NE policy with updated parameters \(\rho + \Delta_\rho\), \(c + \Delta_c\). The goal is to ensure \(r_{\mathcal{R}}(\Delta_\rho, \Delta_c) \geq \tau\) while minimizing intervention cost.
	We define two components of the cost function:
	
	\smallskip
	\noindent
	(1) {\em Resource cost of increasing \(\rho\).} Let \(f: [0,1-\rho] \to \mathbb{R}_{\geq 0}\) be monotonic, modeling the institutional cost of boosting valuation. 
	A simple form is linear: \(f(\Delta_\rho) = a \Delta_\rho\), justified by first-order Taylor approximation when \(\Delta_\rho\) is small. 
	Other variants include \(f(\Delta_\rho) = a \Delta_\rho^\beta \) for $\beta > 1$, representing the increase in the marginal cost of continuously improving bias.
	
	\smallskip
	\noindent
	(2) {\em Cost via revenue loss.} Increasing \(c\) reduces average revenue \(\mathcal{RV}(A, m) = m(t)\), as it lowers the score threshold \(t\). Let 
	\[
	g(\Delta_\rho, \Delta_c) = m(t(\rho, c)) - m(t(\rho + \Delta_\rho, c + \Delta_c)),
	\]
	represent the revenue decline. Since the institution seeks to maximize value, this loss contributes to total intervention cost.
	
	\smallskip
	\noindent
	{\em Optimization problem.}
	We formalize the intervention design as:
	\begin{eqnarray}
		\label{eq:intervention}
		\begin{aligned} 
			\min_{\Delta_\rho \in [0, 1 - \rho],\ \Delta_c \in [0, 1 - c]} 
			&\quad f(\Delta_\rho) + g(\Delta_\rho, \Delta_c) \quad 
			\text{s.t.} &\quad r_{\mathcal{R}}(\Delta_\rho, \Delta_c) \geq \tau.
		\end{aligned}
	\end{eqnarray}
	This framework also applies to reducing welfare disparities by replacing \(r_{\mathcal{R}}\) with \(r_{\mathcal{S}}\) in the constraint.
	
	\smallskip
	\noindent
	\emph{Empirical calibration.}
	To demonstrate real-world applicability, we calibrate the model using gender-disaggregated data from \emph{JEE Advanced 2024}, a highly competitive entrance exam for India’s IITs. Of 180,200 candidates, 139,180 were male and 41,020 female; 40,284 males and 7,964 females qualified. This yields admit rates of 28.9\% for males and 19.4\% for females, giving an observed representation ratio of \(r_{\text{obs}} \approx 0.671\). The overall selection rate is \(c \approx 0.268\), and the female applicant fraction is \(\alpha \approx 0.228\). These values anchor our analysis of strategic disparities and potential interventions. In Section \ref{sec:JEE}, under the uniform density setup in Proposition~\ref{prop:uniform}, we compute $\rho \approx 0.882$ using the explicit form of 
	\[r_{\text{obs}} = r_{\mathcal{R}}(A) = 1 - \frac{(1-c)(1-\rho)}{\alpha - \alpha \rho + c \rho}.
	\]
	
	\smallskip
	\noindent
	{\em Explicit solution under uniform densities.}
	We now solve the optimization problem under this uniform density setup with $(\rho, c, \alpha) = (0.882, 0.268, 0.228)$, setting \(m(e) = e\), \(f(\Delta_\rho) = 5\Delta_\rho^{1.1} \), and \(g(\Delta_\rho, \Delta_c) = t(\rho, c) - t(\rho+\Delta_\rho, c+\Delta_c)\). This yields the objective:
	\[5\Delta_\rho^{1.1} - \left[t(\rho+\Delta_\rho, c+\Delta_c) - t(\rho, c)\right],
	\]
	subject to the condition \(r_{\mathcal{R}}(\Delta_\rho, \Delta_c) \geq \tau\) and feasibility constraints.
	
	\smallskip
	\noindent
	{\em Insights.}
	Figure~\ref{fig:intervention} shows the optimal intervention as a function of the target threshold \(\tau\). For \(\tau \leq 0.92\), increasing \(c\) (lowering selectivity) is more cost-effective. For \(\tau > 0.92\), increasing \(\rho\) (mitigating bias) becomes preferable. This suggests that expanding access is more impactful under high disparity, while improving group valuation is better when gaps are narrower.

	We conducted additional simulations by varying $\alpha$ and $c$ beyond the default values (see Section~\ref{sec:robustness}).
	The results remain consistent with our main findings, confirming the robustness of the above key insights.
	In Section \ref{sec:example_summary}, we also offer a concrete example to illustrate how our model supports interpretable predictions and can inform data-grounded interventions—while also noting what is required to operationalize it in practice.

	\begin{figure}[htp]
		\begin{minipage}{0.45\textwidth}
			\begin{align*}
				\label{eq:intervention_uniform}
				\begin{aligned} 
					& \min_{\Delta_\rho\in [0, 1-\rho], \Delta_c\in [0, 1-c]}  5\Delta_\rho^{1.1} - \frac{(\rho + \Delta_\rho)(1-c - \Delta_c)}{(1-\alpha)(\rho + \Delta_\rho) + \alpha} \\
					& \quad \quad \quad \quad s.t. \ \frac{(1-\alpha) (\rho + \Delta_\rho) + \alpha + c + \Delta_c - 1}{\alpha - \alpha (\rho + \Delta_\rho) + (c + \Delta_c) (\rho + \Delta_\rho)}\geq \tau  \\ 
					& \quad \quad \quad \quad \quad \quad  \rho + \Delta_\rho \geq 1 - \frac{c + \Delta_c}{1-\alpha}.
				\end{aligned}
			\end{align*}
			%  \vspace{-0.2in}
			\caption{\small Explicit form of Problem \eqref{eq:intervention}.}
		\end{minipage}%
		%\hfill
		\hspace{13mm}
		\begin{minipage}{0.37\textwidth}
			\centering
			\includegraphics[width=1\linewidth]{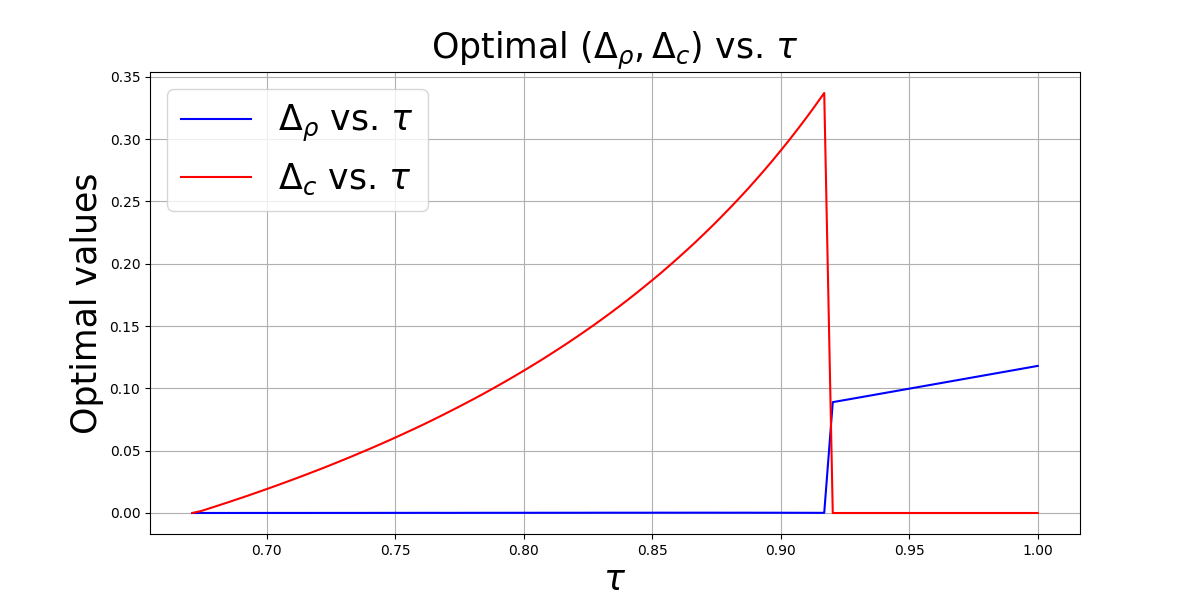}
			%    \vspace{-0.3in}
			\caption{\small Plot of optimal interventions $(\Delta_\rho, \Delta_c)$ for various $\tau\in (0.671,1]$.}
			\label{fig:intervention}
		\end{minipage}
		%\vspace{-0.1in}
	\end{figure}
	
	\smallskip
	\noindent
	{\bf Alternative potential interventions.}
	Having discussed interventions based on adjusting $\rho$ and $c$, we next consider alternative approaches that modify the contest structure itself. 
	These interventions can further reduce disparities in representation or social welfare ratios, though they depart from the baseline two-group contest formulation. 
	A detailed analysis of these extensions is provided in Section~\ref{sec:intervention}.
	
	\smallskip
	\noindent
	{\em Introducing preference heterogeneity.} 
	One approach is to apply group-specific merit mappings of the form $m_\ell(s) = x_\ell s + y_\ell$ for group $G_\ell$ ($\ell = 1,2$), with parameters $x_\ell, y_\ell \ge 0$. 
	Here, $x_\ell$ acts as a scaling factor or ``handicap,'' and $y_\ell$ as an offset or ``head start'' (see also Section~\ref{sec:related_work}). 
	With this intervention, we can still compute the Nash equilibrium for infinite $n$ (Theorem~\ref{thm:merit}), which implies:
	\begin{itemize}
		\item $r_{\mathrm{R}}(A)$ and $r_{\mathcal{S}}(A)$ increase with $\rho$, $c$, and $\alpha$, as well as with $x_2,y_2$, and decrease with $x_1,y_1$;
		\item Choosing merit parameters with $x_2 > x_1$ and $y_2 > y_1$ can sustain high representation and welfare ratios (e.g., $r_{\mathrm{R}}(A), r_{\mathcal{S}}(A) \ge 0.8$) even in highly selective settings;
		\item Increasing $x_2$ or $y_2$ can thus serve as an effective disparity-reducing intervention.
	\end{itemize}
	
	\smallskip
	\noindent
	{\em Incorporating outside options.}
	Another possibility is to assign each agent in group $G_\ell$ a reservation payoff $\lambda_\ell \ge 0$ if not selected. 
	Because this payoff is earned only upon losing, a higher $\lambda_\ell$ lowers the marginal benefit of effort, acting opposite to the merit parameters $x_\ell$ and $y_\ell$. 
	Hence, increasing $\lambda_1$ (the outside option for the advantaged group) reduces their effort incentives and can help narrow representation and welfare gaps.
	
	\smallskip
	\noindent
	{\em Setting group-specific selection rates.} 
	Finally, the institution can set separate capacity constraints for each group—for instance, selecting a $c$-fraction of agents from $G_1$ and $G_2$ independently. 
	This decomposes the overall model into two within-group contests, fixing $r_{\mathcal{R}}(A) = 1$ under equal selection rates. 
	Compared to the combined contest, agents in $G_2$ now face a lower bar for selection and exert more effort on average.

	%%%%%%%%%%%%%%%%%%%%%%%%%%%%%%%%%%%%%%%%%%%%%%%%%%%%%%%%%%%%%%
	\section{Proof of Theorem \ref{thm:two_general}: two-group contest}
	\label{sec:proof_two}
	
	In this section, we begin by providing a more detailed technical overview of the proof of Theorem \ref{thm:two_general} (Section \ref{sec:overview}). 
	Next, we present a more comprehensive version of Theorem \ref{thm:two_general}, including the explicit form of $A^{(n)}$ (Theorem \ref{thm:two_formal} in Section \ref{sec:two_formal}). 
	Finally, we provide the proof of Theorem \ref{thm:two_formal} (Section \ref{sec:proof}).

	\subsection{Technical overview}
	\label{sec:overview}
	
	We present an overview of the proof of Theorem \ref{thm:two_general}, which characterizes an NE policy for the two-group contest.  
	Recall that, there are $n$ agents belonging to one of the two disjoint groups $G_1, G_2$ with $|G_1|=(1-\alpha)n$ and $|G_2|=\alpha n$.
	The valuations of agents in $G_1$ come from the density $p_1$ supported on $\Omega_1$ and those of $G_2$ come from the density $p_2$ supported on $\Omega_2$.
	Each agent has an initial ability drawn from the density $p_a$.
	The selectivity of the contest is a constant $0<c<1$.
	Theorem \ref{thm:two_general} asserts that, as $n \rightarrow \infty$, there is an NE policy for the agents which is determined by a threshold $t \in (\Omega_1 \cup \Omega_2) + \Omega_a$ when the $(\Omega_1 \cup \Omega_2) + \Omega_a$ is a connected subset of $\mathbb{R}_{\geq 0}$.

	For ease of analysis, we first consider the simple case where $p_a$ is a point mass at $0$, so that agents' policies only depend on their valuations. 
	We then show that the extension to a general $p_a$ is straightforward.
	We start by first quickly showing how to compute the NE policy in the special case when $p_1=p_2$ for finite $n$ and why this approach does not extend to the two-group setting of interest (Section \ref{sec:difficulty}).  
	In Section \ref{sec:conjecture} we show that even though there are major challenges in extending the one-group case, it leads us to the right form of the NE policy for the two-group case as $n \to \infty$: a single threshold function that defines the strategies of agents in both $G_1$ and $G_2$.
	This also explains how we arrive at Equation \eqref{eq:key} that characterizes the threshold $t$.
	Finally, in Section \ref{sec:stability}, we present the approach to formally argue about and prove the convergence of the finite $n$ two-group contest to this pair of NE policies.
	This analysis also reveals why the conjectured policy is a Nash equilibrium. 
	With all the background, we conclude Theorem \ref{thm:two_general}.

	\subsubsection{NE policy for the undifferentiated contest for finite $n$ and obstacle in extending it}
	\label{sec:difficulty}

	Here, we consider the special case when $p_1=p_2=p$ is the density of the uniform distribution on $[0,1]$, and $p_a$ is a point mass at 0.
	The argument for other densities is similar.
	First note that by symmetry among agents, it is reasonable to assume that, at equilibrium, each agent will follow the same policy $s$. 
	Moreover, since the domain of $p_1$ is nonnegative reals, it is reasonable to assume that the NE
	policy is monotone in an agent's valuation. 
	The following calculations show that both symmetry and monotonicity hold.

	Recall that an agent $i$ with valuation $v$ is selected if the effort $s(v)$ is among the top $c$ fraction of efforts.
	Thus, assuming that all agents follow $s$,  agent $i$ is selected if and only if there are at least $(1-c)n$ distinct agents $j$ with $s(v_j) < s(v)$. (We ignore the issue that there may be ties for this discussion.)
	Since $s$ is a monotone function, $s(v_j)<s(v)$ holds if and only if $v_j<v$.
	Thus, the probability of selection of this agent is $$ P_i(s(v); A_{-i})=\sum_{j=(1-c)n}^{n-1} {n-1 \choose j} v^j (1-v)^{n-1-j}.$$ 
	Hence, its expected payoff is $\pi_i(v,s(v); A_{-i}) = P_i(s(v); A_{-i})\cdot v - s(v)$. (See Section \ref{sec:model} for notation).
	A key observation is that the calculation of $P_i(s(v); A_{-i})$ only depends on density $p_1$ and $v$, and is independent of the choice of $s$, under the monotonicity and symmetry assumptions on $s$.
	If $s$ is to be an NE, then it must satisfy that for any other effort $e$, $\pi_i(v, s(v); A_{-i}) \geq \pi_i(v, e; A_{-i})$. 
	This follows from the condition that the derivative with respect to $v$, $\pi'_i(v,s(v);A_{-i})= P'_i(s(v); A_{-i})\cdot v - s'(v) = 0$.
	As noted above, $P_i(s(v); A_{-i})$ does not depend on $s$, so we get a simple differential equation involving the derivative of $s'(v) = \left(\sum_{j=(1-c)n}^{n-1} {n-1 \choose j} v^{j} (1-v)^{n-1-j}\right)'\cdot  v$.
	Thus, 
	\begin{eqnarray} 
		\label{eq:undifferentiated}
		s(v)= (1-c)\cdot\sum_{j=(1-c)n+1}^n {n \choose j}v^j (1-v)^{n-j}
	\end{eqnarray}
	is the unique NE for the undifferentiated contest and it that this $s$ is monotone. 
	For a general density $p$, we can apply a similar analysis to obtain that 
	\begin{eqnarray}
		\label{eq:undifferentiated_general}
		s(v) := Q_p(v)\cdot v - \int_{\underline{\Omega}}^{v} Q_p(x) \, dx,
	\end{eqnarray}
	where $Q_p(v) = \sum_{i = n-k}^{n-1} {n-1 \choose i} \cdot F_p(v)^{i} \cdot (1 - F_p(v))^{n-i-1}$ for any $v \in \Omega$.
	We also note that the computation of $s$ can become significantly more complicated for a general density $p_a$, as it requires considering the stability condition for two partial derivatives: $\frac{\partial s(v,a)}{\partial v}$ and $\frac{\partial s(v,a)}{\partial a}$.

	We now attempt to extend the analysis above to the two-group contest.
	Since each agent in each group uses the same valuation density, we can still hope for a symmetric and monotone NE policy for each group; say $s_1$ for $G_1$ and $s_2$ for $G_2$.
	However, we can no longer assume that $s_1 = s_2$. 
	There are simple examples (see Section \ref{sec:example}) for which it can be shown that $s_1\neq s_2$. 
	This considerably complicates the calculation of probability $P_i$ for the $i$th agent getting selected since the order of agents' valuations may differ from that of agents' efforts.
	For instance, it is now possible that for agent $i\in G_1$ and agent $j\in G_2$,  $v_i < v_j$ but $s_1(v_i) > s_2(v_j)$. 
	Thus, $P_i$ must depend on functions $s_1$ and $s_2$, instead of only depending on density $p = p_1$ and $v$ as in the undifferentiated case.
	Thus, it is no longer possible to write a simple differential equation as in the undifferentiated case.

	\subsubsection{A conjectured NE policy in two-group contests for large $n$}
	\label{sec:conjecture}
	
	Since it seems intractable to find an NE policy for the two-group case, we study whether the situation becomes easier when $n$ is large.
	This hope is rooted in the observation that for the undifferentiated contest when $n$ is large, the NE policy $s(v)$ defined in Equation \eqref{eq:undifferentiated} converges to a threshold function.
	To see this, recall that for the uniform distribution, $s(v)= (1-c)\cdot \sum_{j=(1-c)n+1}^n {n \choose j}v^j (1-v)^{n-j}$.
	Since $s(v)$ is the probability associated with a sum of i.i.d. random variables, it follows from the Chernoff bound that as $n\rightarrow \infty$, $s(v) \rightarrow 1-c$ for any $v > 1-c$ and $s(v) \rightarrow 0$ for any $v < 1-c$.
	This argument is not specific to the uniform distribution and extends to any density $p_1$ with NE policy defined in Equation \eqref{eq:undifferentiated_general}.
	In particular, if $F_1$ denotes the CDF of $p_1$, the limiting NE is given by $s(v) = F_1^{-1}(1-c)$ if $v\geq F_1^{-1}(1-c)$ and $s(v) = 0$ otherwise.  
	The threshold $F_1^{-1}(1-c)$ guarantees that the expected fraction of agents that put in a nonzero effort is $1 - F_1(F_1^{-1}(1-c)) = c$.
	The rationale for $s(v)= F_1^{-1}(1-c)$ when $v$ is above the threshold is twofold: 1) agents would not exert effort beyond their valuation, ensuring $s(F_1^{-1}(1-c)) \leq F_1^{-1}(1-c)$, and 2) agents with valuations below $F_1^{-1}(1-c)$ are disincentivized from participating, leading to $s(F_1^{-1}(1-c)) \geq F_1^{-1}(1-c)$. 

	Thus, one may hope that in a two-group contest, as $n\rightarrow \infty$, the NE policy might similarly converge to two threshold functions $s_1$ and $s_2$, each with a corresponding threshold $t_\ell$.
	While this assumption allows us to give an explicit form for the probability $P_i$ of agent $i$ getting selected, this expression is quite complicated and, importantly,  depends on $t_1$ and $t_2$.  
	Thus, we are unable to obtain conditions that determine $t_1$ and $t_2$ from the NE condition.

	Going back to the setting when $p_1$ is the density of the uniform distribution over $[0,1]$ and $p_2 = p_2$ is the density of the uniform distribution over $[0,\rho]$, first observe that as $\rho \to 0$, $t_2 \to 0$.
	Thus, one would expect $t_1$ to be more than $t_2$.
	We now argue that counterintuitive to the above observation,  $t_1>t_2$ cannot lead to an NE. 
	To see this, first observe that when $t_1 > t_2$, if an agent puts in effort  $t_1$, then it
	will get selected. 
	Thus, the probability of an agent in $G_1$ getting selected is $1 - F_1(t_1)$.
	Hence, if $1 - F_1(t_1) < \frac{c}{1-\alpha}$, fewer than $cn$ agents in $G_1$ get selected. 
	Thus, agents in $G_1$ getting selected will find that putting in effort slightly larger than $t_2$ instead of $t_1$ suffices to ensure their effort is larger than all agents in $G_2$, and consequently, they will still be selected.
	Through this reduction in effort, they can gain an additional payoff of $t_1 - t_2$, which violates the stability condition.
	A similar argument holds for $G_2$ when  $1 - F_1(t_1) > \frac{c}{1-\alpha}$.
	Thus, $t_1>t_2$ leads to instability.
	Similarly, we can argue that $t_1<t_2$ also leads to instability.
	Thus, in case the NE policies for $G_1$ and $G_2$ are thresholds, it must be the case that 
	$t_1 = t_2$ when $n$ is large.
	However, it is not clear how this can hold given that the domains of $p_1$ and $p_2$ are different.

	To explore this, we consider a scenario when $\alpha =  0.5$, $p_1$ is the density of the uniform distribution over $[0,1]$ and $p_2$ is the density of the uniform distribution over $[0,0.5]$ ($\rho=0.5$). 
	With high probability, there would be more than $0.1n$ agents from $G_1$ whose valuation is larger than $0.5$.
	Hence, if we set $c=0.1$, no agent from $G_2$ will have any incentive to put in an effort, while for agents in $G_1$ a threshold of $t_1=0.8$ suffices. 
	The key observation is that even though the two policies are different, the policy of $G_2$ is just $0$.
	This suggests that both policies can be seen as restrictions of the same threshold function to their respective domains.

	Now we show how to compute the threshold $t$.
	The idea is to reduce the two-group contest to an undifferentiated one whose density $p = (1-\alpha)p_1 + \alpha p_2$, supported on the domain $\Omega_1 \cup \Omega_2$.
	In this undifferentiated contest, as $n\rightarrow \infty$, it is likely that $(1-\alpha)$-fraction of agents with valuation come from $p_1$ and $\alpha$-fraction of agents come from $p_2$.
	This suggests that these two contests are increasingly indistinguishable as $n$ grows, leading to the same limiting threshold $t = F_{p}^{-1}(1-c)$. 
	Thus, threshold $t$ is the solution of the equation $(1-\alpha) F_1(v) + \alpha F_2(v) = 1-c$ -- the one denoted in Equation \eqref{eq:key}.
	This argument can be extended to general $p_a$, resulting in the following lemma.
	
	\begin{lemma}[\bf{Unique solution}]
		\label{lm:unique} 
		Let $\alpha, c\in (0,1)$, $p_1$ be a density supported on domain $\Omega_1\subseteq \R_{\geq 0}$, $p_2$ be a density supported on domain $\Omega_2\subseteq \R_{\geq 0}$, and $p_a$ is a density supported on domain $\Omega_a\subseteq \R_{\geq 0}$.
		If $(\Omega_1\cup \Omega_2) + \Omega_a$ is connected and each density $p_1, p_2, p_a$ is positive at any point of its domain, then there exists a unique solution $t\in \Omega_1\cup \Omega_2$ for the following equation:
		$
		(1-\alpha) F_1(\zeta) + \alpha F_2(\zeta) = 1-c,
		$
		where for any $\zeta\in \R_{\geq 0}$, $F_\ell(\zeta) = \Pr_{v\sim p_\ell, a\sim p_a}\left[v+a\leq \zeta\right]$.
	\end{lemma}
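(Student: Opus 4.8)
The plan is to recognize the left-hand side of the equation as a cumulative distribution function and reduce the claim to the strict monotonicity of that CDF on the connected domain. Write $G(\zeta) := (1-\alpha)F_1(\zeta) + \alpha F_2(\zeta)$. Since each $F_\ell$ is the CDF of $v + a$ with $v \sim p_\ell$ and $a \sim p_a$, the function $G$ is exactly the CDF of $V + A$, where $V$ is drawn from the mixture density $p := (1-\alpha)p_1 + \alpha p_2$ (supported on $\Omega_1 \cup \Omega_2$) and $A \sim p_a$ independently. Using the convolution identity $F_\ell(\zeta) = \int_{\Omega_\ell} p_\ell(v) F_a(\zeta - v)\, dv$ recorded above, we get $G(\zeta) = \int p(v) F_a(\zeta - v)\, dv$. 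Because $p_a$ is a genuine density, $F_a$ is continuous, so $G$ is continuous and non-decreasing, with $G \equiv 0$ below $\inf H$ and $G \to 1$ as $\zeta \to \sup H$, where $H := (\Omega_1 \cup \Omega_2) + \Omega_a$. Since $1 - c \in (0,1)$, the intermediate value theorem already yields \emph{existence} of a solution $t$; the whole difficulty lies in \emph{uniqueness}, i.e.\ strict monotonicity.

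For uniqueness I would show that the density $q := p \star p_a$ of $V + A$ is strictly positive at every interior point of $H$. Fix such a point $\zeta$. By definition $q(\zeta) = \int p(v)\, p_a(\zeta - v)\, dv$, and this integral is positive precisely when the set $S_\zeta := \{v : v \in \mathrm{supp}(p) \text{ and } \zeta - v \in \Omega_a\}$ has positive Lebesgue measure, since $p$ and $p_a$ are strictly positive on their respective domains by hypothesis. Because $\zeta$ lies in the interior of the Minkowski sum $\mathrm{supp}(p) + \Omega_a = H$, the set $S_\zeta$ is not a single point but contains an interval of positive length, so $q(\zeta) > 0$. Consequently, for any $\zeta_1 < \zeta_2$ in $H$ we have $G(\zeta_2) - G(\zeta_1) = \int_{\zeta_1}^{\zeta_2} q(\zeta)\, d\zeta > 0$, establishing that $G$ is strictly increasing on the interval $H$.

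Combining the two parts, $G$ is a continuous, strictly increasing bijection from $\overline{H}$ onto $[0,1]$, so there is exactly one $t$ with $G(t) = 1 - c$. The role of the connectedness hypothesis appears precisely here: it forces $H$ to be an interval, ruling out a ``gap'' on which $q \equiv 0$ and $G$ is flat. Such a plateau at height $1 - c$ is exactly what produces multiple solutions in the disconnected example with $\Omega_1 = [0,1]$ and $\Omega_2 = [2,3]$ discussed after the theorem.

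The main obstacle is the positivity of the convolution density $q$ on the interior of the Minkowski sum — specifically, arguing that $S_\zeta$ has positive measure rather than being measure-zero (as could happen at the two endpoints of $H$). I expect this to require a short geometric argument that, for interior $\zeta$, one can slide $v$ through a small interval while keeping both $v \in \mathrm{supp}(p)$ and $\zeta - v \in \Omega_a$. I would also remark that the case actually used in Theorem~\ref{thm:two_general}, where $p_a$ is a point mass at $0$, is even simpler: then $F_\ell$ is just the CDF of $p_\ell$, and strict monotonicity of $G$ on the connected set $\Omega_1 \cup \Omega_2$ follows immediately from the positivity of $p_1$ and $p_2$, giving the same conclusion.
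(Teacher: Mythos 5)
Your overall strategy---rewriting the left-hand side as the CDF $G$ of $V+A$ with $V$ drawn from the mixture $(1-\alpha)p_1+\alpha p_2$, getting existence from continuity plus the intermediate value theorem, and getting uniqueness from strict monotonicity on the connected Minkowski sum---is the same reduction the paper makes (the paper just phrases the monotonicity group-by-group rather than through the mixture). However, the step you yourself flag as the crux is false as stated. It is \emph{not} true that every interior point $\zeta$ of $H=(\Omega_1\cup\Omega_2)+\Omega_a$ has a slice $S_\zeta$ of positive measure: the lemma only assumes the \emph{sum} is connected, while the individual domains may be disconnected, and that is exactly where the ``sliding'' argument breaks. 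Concretely, take $p_1=p_2=\mathrm{Unif}[0,1]$ and let $p_a$ be uniform on $\Omega_a=[0,1]\cup[2,3]$. All hypotheses of the lemma hold and $H=[0,4]$ is connected, but at the interior point $\zeta=2$,
\[
S_2=\bigl\{v\in[0,1]:\ 2-v\in[0,1]\cup[2,3]\bigr\}=\{0,1\},
\]
a two-point set, so the convolution density satisfies $q(2)=0$. Membership of $\zeta$ in the interior of a Minkowski sum only guarantees \emph{one} representation $\zeta=v+a$, not a positive-measure family of them, so pointwise positivity of $q$ on the interior of $H$ simply fails in the generality of the lemma.

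The good news is that your conclusion does not need pointwise positivity of $q$; it only needs that $G$ increases strictly across every nondegenerate subinterval of $H$, i.e.\ that $\Pr[V+A\in(\zeta_1,\zeta_2)]>0$ for all $\zeta_1<\zeta_2$ in $H$. This follows from a support argument: since $H$ is connected it is an interval, so $(\zeta_1,\zeta_2)\subseteq H$, and $(\zeta_1,\zeta_2)$ is an open set meeting the support of the law of $V+A$ (the closure of $\mathrm{supp}(p)+\Omega_a$); by definition of support, every open set meeting it carries positive probability. Equivalently---and this is essentially how the paper argues---the interval $(\zeta_1,\zeta_2)$ must meet $\Omega_\ell+\Omega_a$ for at least one $\ell$, whence $F_\ell(\zeta_2)-F_\ell(\zeta_1)=\Pr[v+a\in(\zeta_1,\zeta_2]]>0$ for that $\ell$, and the convex combination $G$ strictly increases. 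In my counterexample $G$ is indeed strictly increasing even though $q(2)=0$, so the lemma survives; it is your justification, not the statement, that needs repair. With this replacement your proof is correct, and your closing remark about the point-mass case ($p_a=\delta_0$) is fine, since there the slice degeneracy cannot occur.
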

	
	\noindent
	The assumption is naturally met in cases such as the uniform distribution and the truncated normal distribution discussed in Section \ref{sec:model}. 
	Since $F_\ell$ is a CDF, it must be strictly monotonic across its domain $\Omega_\ell + \Omega_a$.
	For any $\zeta,\zeta'\in (\Omega_1\cup \Omega_2)+\Omega_a$ with $\zeta < \zeta'$, if $(1-\alpha) F_1(\zeta) + \alpha F_2(\zeta) = (1-\alpha) F_1(\zeta') + \alpha F_2(\zeta')$, we must have both $F_\ell(\zeta) = F_\ell(\zeta')$ holds. 
	Then $(\zeta,\zeta')\cap \left((\Omega_1\cup \Omega_2) + \Omega\right) = \emptyset$, which contradicts the connected domain assumption. 
	Hence, $(1-\alpha) F_1(\zeta) + \alpha F_2(\zeta)$ is strictly monotonic across domain $(\Omega_1 \cup \Omega_2) + \Omega_a$, which ensures the uniqueness of solution $t$.
	The proof can be found in Section \ref{sec:proof_unique}.

	\subsubsection{Proving convergence to the conjectured NE policy}
	\label{sec:stability}
	
	Now we outline how to prove that, as $n \to \infty$, the NE policy for the two-group contest converges to the threshold policy corresponding to $t$ as guaranteed by Lemma \ref{lm:unique}.
	To do so, first, we have to make it precise what convergence means. 
	Towards this, we revisit the undifferentiated contest.
	While we argued that in this case, as $n \to \infty$, the NE policy tends to a threshold function, recall that we used the explicit form of the NE policy for finite $n$.
	Unfortunately, since we do not have an explicit form for the two-group contest (for finite $n$), we need a strategy for the undifferentiated case that works without the knowledge of the explicit NE for finite $n$.

	Let $A$ be the NE policy of the undifferentiated contest as $n\rightarrow \infty$, characterized by a function $s$ which is a threshold with parameter $t$.  
	It suffices to prove that for policy $A$ and every $\eps>0$ there is an $n_\eps$ such that for $n \geq n_\eps$, $A$ is an $\eps$-NE.  
	However, we find that there exists an $\eps > 0$ such that for any $n\geq 1$, $A$ is not an $\eps$-NE policy.
	We revisit the simple example of uniform distribution discussed above, in which $c = 0.5$, $p_1$ is the density of the uniform distribution on $[0,1]$, and $p_a$ is a point mass at 0.
	Recall that the threshold $t = 1 - c = 0.5$. 
	Then, in expectation, $0.5 n$ agents have valuations at least $0.5$ and put in effort $0.5$. 
	As $n\rightarrow \infty$, it follows from symmetry that the probability that fewer than $0.5 n$ agents with valuation $\geq t$ approaches approximately 0.5. 
	Thus, an agent $i$ with a valuation $v = 0.2$ and putting in an effort $e = 0.01$ would have about a 0.5 probability of being selected, i.e., $P_i(e; A_{-i}) \approx 0.5$. 
	Since $s(v) = 0$, the probability $P_i(A_i(v); A_{-i}) = 0$.
	Thus, we have
	$$
	\pi_i(v, e; A_{-i}) - \pi_i(v, A_i(v); A_{-i}) = P_i(e; A_{-i}) \cdot v - e - 0 \approx 0.2 * 0.5 - 0.01 \gg 0.
	$$
	This inequality implies that when $\eps = 0.08$, for any $n\geq 1$, $A$ is not an $\eps$-NE policy. 

	To bypass this,
	we consider a sequence of proxies for $A$, denoted by $A^{(n)}$, and characterized by threshold functions $s^{(n)}$. 
	These proxies aim to ensure that the winning probability $P_i(e; A_{-i}) \approx 0$ and hence, serve as an approximate NE policy (see Definition \ref{def:eps_NE}). 
	Therefore, we need $A^{(n)}$ to satisfy two conditions: 
	\begin{enumerate}
		\item $A^{(n)}$ converges to $A$ as $n$ approaches infinity, i.e.,  $\lim_{n\rightarrow \infty} s^{(n)} = s$, and 
		\item The winning probability under $A^{(n)}$ approaches zero in the limit, i.e., $\lim_{n\rightarrow \infty} P_i(e; A^{(n)}_{-i}) \rightarrow 0$. 
	\end{enumerate}

	\noindent
	Ensuring $P_i(e; A^{(n)}_{-i}) \approx 0$ essentially involves guaranteeing that the probability of having fewer than $c n$ agents with valuation $\geq t$ is negligible. 
	Specifically, for the uniform case, by adjusting the threshold by $\sqrt{\log n/ n} = o(1)$, we define the policy $s^{(n)}(v)$ as follows:
	$s^{(n)}(v) = t$ if $v \geq t - \sqrt{\log n/ n}$ and $s^{(n)}(v) = 0$ otherwise.
	By concentration, this $s^{(n)}$ ensures that the probability $P_i(e; A^{(n)}_{-i})$ is bounded above by $\sqrt{1/n}$ (Lemma \ref{lm:winning_prob}). 
	This bounded probability ensures $A^{(n)}$ to be an $\sqrt{1/n}$-NE policy (Lemma \ref{lm:eps_NE}).
	This concludes the proof that the policy $A$ is an NE in the large $n$ limit for the uniform distribution case for one group.
	
	Finally, we adapt this new proof technique of constructing $A^{(n)}$ to the two-group case with general densities $p_1$ and $p_2$.
	Mirroring the strategy employed in the uniform distribution case, we would like to shift the threshold of $s^{(n)}$ to ensure that, in expectation, a $(c+\sqrt{\log n/n})$-fraction of agents put in effort $t$.
	To satisfy this, we define the following threshold $\Delta_n$ (Definitions \ref{def:n_t} and \ref{def:Delta}) for the policy $A^{(n)}$:
	$$ \Delta_n := \min\left\{F_1^{-1}(F_1(t) - \sqrt{\log n/n}), F_2^{-1}(F_2(t) - \sqrt{\log n/n}) \right\}.
	$$
	Consequently, we define the policy $s^{(n)}$ as follows: $s^{(n)}(v) = t$ if $v\geq t - \Delta_n$ and $s^{(n)}(v) = 0$ otherwise (see Theorem \ref{thm:two_formal}).
	We find that $\lim_{n\rightarrow \infty} \Delta_n = t$, indicating that $s^{(n)}$ converges to $s$.
	Crucially, such a $\Delta_n$ ensures that $(1 - \alpha)F_1(\Delta_n) + \alpha F_2(\Delta_n) \leq 1 - (c+\sqrt{\log n/n})$, thereby maintaining at least a $(c+\sqrt{\log n/n})$-fraction of agents, in expectation, putting in effort $t$.
	Using this property, we establish a bound for the winning probability $P_i$ analogous to the undifferentiated contest: 
	\[P_i(e; A_{-i}^{(n)}) \leq n^{-(1-\alpha)} + n^{-\alpha}
	\]
	if $e < t$ (see Lemma \ref{lm:winning_prob}).
	The factors $n^{-(1-\alpha)}$ and $n^{-\alpha}$ derive from concentration bounds applicable to group $G_1$ and $G_2$, respectively.
	Furthermore, this minimal winning probability guarantees that $A^{(n)}$ is an $\eps_n$-NE with $\lim_{n\rightarrow \infty} \eps_n = 0$ (Lemma \ref{lm:eps_NE}).
	The extension to the general $p_a$ is straightforward. 
	The main difference is that the initial ability $a$ may already surpass the threshold $t$, in which case the agent does not need to exert any effort to be selected. 
	This is characterized by the amount of effort $s^{(n)}(v,a) = \max\left\{t-a, 0\right\}$ if $v\geq t - \Delta_n$.

	To summarize,  we first arrived at a policy $A$ that is characterized by a function $s$ which is parameterized by a threshold $t$ defined by Lemma \ref{lm:unique} and then 
	we constructed a sequence of ``proxies'' $A^{(n)}$ that converge to $A$ as $n\rightarrow \infty$.
	Moreover, we construct a sequence $\eps_1,\ldots, \eps_n,\ldots$ with limit $0$ such that $A^{(n)}$ is an $\eps_n$-NE policy for every $n$.
	This implies that $A$ is an NE policy as $n\rightarrow \infty$.
	Thus, the overview above allows us to prove Theorem \ref{thm:two_general}.

	\subsection{A comprehensive version of Theorem \ref{thm:two_general}: convergence form}
	\label{sec:two_formal}
	
	Now we show how to construct a series of policies $\{A^{(n)}\}_n$ that approach $A$, the NE policy from Equation \eqref{eq:A} in Theorem \ref{thm:two_general}, as $n \to \infty$. 
	The most technical part will be to prove that $A^{(n)}$ acts as an $\eps_n$-NE policy, where $\eps_n \to 0$ with increasing $n$.

	Suppose $(\Omega_1\cup \Omega_2) + \Omega_a$ is connected and let $t\in (\Omega_1\cup \Omega_2) + \Omega_a$ be a unique solution of the equation
	$
	(1-\alpha) F_1(\zeta) + \alpha F_2(\zeta) = 1-c
	$
	(the uniqueness of $t$ is ensured by Lemma \ref{lm:unique}).
	Since $c\in (0,1)$, we have that either $F_1(t) > 0$ or $F_2(t) > 0$.
	Accordingly, we define a threshold $n_t$ as follows.
	
	\begin{definition}[\bf{Threshold \boldmath{$n_t$}}]
		\label{def:n_t}
		Given a value $t\in \Omega_1\cup \Omega_2$, we define a threshold $n_t$ as follows:
		\begin{itemize}
			\item If both $F_1(t) > 0 $ and $F_2(t) > 0$, let $n_t$ be the smallest integer such that $F_\ell(t) - \sqrt{\frac{\log n_t}{n_t}} > 0$ for $\ell = 1,2$.
			\item If $F_1(t) > 0$ and $F_2(t) = 0$, let $n_t$ be the smallest integer such that $F_1(t) - \sqrt{\frac{\log n_t}{n_t}} > 0$.
			\item If $F_1(t) = 0$ and $F_2(t) > 0$, let $n_t$ be the smallest integer such that $F_2(t) - \sqrt{\frac{\log n_t}{n_t}} > 0$.
		\end{itemize}
	\end{definition}
	
	\noindent
	Note that such $n_t$ is finite and always exists since $\sqrt{\frac{\log n}{n}}\rightarrow 0$ as $n\rightarrow \infty$.
	Also note that $n_t$ is monotonically decreasing to $t$ across the domain $\Omega_1\cup \Omega_2$.
	This value of $n_t$ is useful for defining $\Delta_n$, which is essential for the construction of policy $A^{(n)}$.

	\begin{definition}[\bf{Threshold \boldmath{$\Delta_n$}}]
		\label{def:Delta}
		\sloppy
		Let $n\geq n_t$ be an integer.
		We define a threshold $\Delta_n$ as follows:
		\begin{itemize}
			\item If both $F_1(t) > 0 $ and $F_2(t) > 0$, let
			\footnote{If the inverse function $F_\ell^{-1}(F_\ell(t) - \sqrt{\frac{\log n}{n}})$ yields multiple values, it is defined to be the maximum of these values.}
			$$
			\Delta_n := \min\left\{F_1^{-1}(F_1(t) - \sqrt{\frac{\log n}{n}}), F_2^{-1}(F_2(t) - \sqrt{\frac{\log n}{n}}) \right\}.
			$$
			\item If $F_1(t) > 0$ and $F_2(t) = 0$, let
			$
			\Delta_n := F_1^{-1}(F_1(t) - \sqrt{\frac{\log n}{n}}).
			$
			\item Otherwise if $F_1(t) = 0$ and $F_2(t) > 0$, let
			$
			\Delta_n := F_2^{-1}(F_2(t) - \sqrt{\frac{\log n}{n}}).
			$
		\end{itemize}
	\end{definition}
	
	\noindent
	The requirement that $n \geq n_t$ ensures the proper definition of $\Delta_n$. 
	As the number of agents $n$ grows indefinitely, the term $\sqrt{\frac{\log n}{n}}$ approaches 0, leading $\Delta_n$ to converge towards $t$. 
	The threshold function $s^{(n)}$ defined as in Equation \eqref{eq:s_n}, designed as a threshold function, incorporates $\Delta_n$ as its threshold. 
	The convergence of $\Delta_n$ to $t$ as $n \to \infty$ is crucial for ensuring that $A^{(n)}$ gradually aligns with the policy $A$ over large populations.

	We are ready to provide the formal statement of Theorem \ref{thm:two_general}.

	\begin{theorem}[\bf{Two-group contest: Large \boldmath{$n$} limit}]
		\label{thm:two_formal}
		Let $\alpha, c\in (0,1)$. 
		For $\ell = 1,2$, let $p_\ell$ be a density supported on a domain $\Omega_{\ell}\subseteq \R_{\geq 0}$.
		Let $m: \R_{\geq 0} \rightarrow \R_{\geq 0}$ be a merit function that is strictly increasing.
		Suppose $\Omega_1\cup \Omega_2$ is connected and each density $p_{\ell}$ is positive at any point of domain $\Omega_{\ell}$. 
		Let $t\in \Omega_1\cup \Omega_2$ be a unique solution of the equation
		$
		(1-\alpha) F_1(v) + \alpha F_2(v) = 1-c
		$
		(the uniqueness of $t$ is ensured by Lemma \ref{lm:unique}).
		Let $n_t$ be defined as in Definition \ref{def:n_t}. 
		Let $n\geq n_t$ be an integer and let $\Delta_n$ be defined as in Definition \ref{def:Delta}. 
		Define
		\begin{eqnarray} 
			\label{eq:s_n}
			s^{(n)}(v, a) := \left\{ \begin{array}{ll}
				0 & \mbox{   if $v < \Delta_n$ } \\
				\max\{t-a, 0\} & \mbox{   if $v \geq \Delta_n$ } \\
			\end{array} \right. 
		\end{eqnarray}
		gives rise to a policy for the two-group contest:  
		Under this policy, agent $i \in G_1$ uses the restriction $A^{(n)}_i = s^{(n)}|_{\Omega_1}$, while each agent $j \in G_2$ uses the restriction $A^{(n)}_j = s^{(n)}|_{\Omega_2}$. 
		We have $\lim_{n\rightarrow \infty} s^{(n)} = s$, where $s$ is the threshold function defined as in Equation \eqref{eq:A}.
		Moreover, the sequence of policies $A^{(n_t)}, A^{(n_t+1)},\ldots$ satisfies the following property:
		\begin{eqnarray}
			\label{eq:eps_to_0}
			\forall \eps > 0, ~ \exists n_\eps \geq n_t,~~ s.t. ~ \forall n\geq n_\eps, \text{ $A^{(n)}$ is an $\eps$-NE policy}.
		\end{eqnarray}
	\end{theorem}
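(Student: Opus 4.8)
\textbf{Proof plan for Theorem~\ref{thm:two_formal}.}
The plan is to establish the two claimed properties separately: first the convergence $\lim_{n\to\infty} s^{(n)} = s$, which is essentially a continuity argument, and then the more substantial claim~\eqref{eq:eps_to_0} that $A^{(n)}$ is an $\eps_n$-NE with $\eps_n \to 0$. For the convergence, I would argue pointwise: since $F_\ell$ is strictly monotone and continuous on its domain (guaranteed by positivity of $p_\ell$ and connectedness, as in Lemma~\ref{lm:unique}), the inverse $F_\ell^{-1}$ is continuous, so $F_\ell^{-1}(F_\ell(t) - \sqrt{\log n/n}) \to F_\ell^{-1}(F_\ell(t)) = t$ as $\sqrt{\log n/n} \to 0$. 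Taking the minimum over $\ell$ preserves this, giving $\Delta_n \to t$; comparing the definitions of $s^{(n)}$ in~\eqref{eq:s_n} and $s$ in~\eqref{eq:A} then yields $s^{(n)}(v,a) \to s(v,a)$ at every point where $v \neq t$ (the threshold itself being a measure-zero boundary).

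For the main claim~\eqref{eq:eps_to_0}, the strategy is the one sketched in Section~\ref{sec:stability}: bound the deviation payoff by controlling the winning probability $P_i(e; A^{(n)}_{-i})$ for any sub-threshold effort. The key quantitative step is to verify that the chosen $\Delta_n$ forces at least a $(c + \sqrt{\log n/n})$-fraction of agents, \emph{in expectation}, to exert the full effort $t$; concretely, one checks that $(1-\alpha)F_1(\Delta_n) + \alpha F_2(\Delta_n) \le 1 - (c + \sqrt{\log n/n})$, which follows directly from the definition of $\Delta_n$ as a minimum of the two inverse images. I would then invoke (or prove as a lemma) the concentration bound of Lemma~\ref{lm:winning_prob}, showing $P_i(e; A^{(n)}_{-i}) \le n^{-(1-\alpha)} + n^{-\alpha}$ whenever $e < t$, where the two terms arise from a Chernoff/Hoeffding bound applied independently to the number of high-effort agents in $G_1$ and $G_2$. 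The slack $\sqrt{\log n/n}$ in the fraction of competitors is exactly what makes the probability that fewer than $cn$ agents clear the threshold decay polynomially in $n$.

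With the winning-probability bound in hand, I would complete the $\eps_n$-NE verification (the content of Lemma~\ref{lm:eps_NE}) by a case analysis on the deviating effort $e$ for an agent of type $(v,a)$. If the agent follows $s^{(n)}$, its payoff is already optimal up to the concentration error; any deviation to effort $e$ with $e + a < t$ yields winning probability at most $n^{-(1-\alpha)} + n^{-\alpha}$, so the payoff gain is at most $v \cdot (n^{-(1-\alpha)} + n^{-\alpha})$, which is $o(1)$ once valuations are bounded on the support (or handled via a tail argument if $\Omega_\ell$ is unbounded, as in the Pareto case). Deviations to effort with $e + a \ge t$ cost at least as much as the prescribed effort while not improving selection, so they cannot help. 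Setting $\eps_n$ to the resulting uniform bound and noting $\eps_n \to 0$ establishes~\eqref{eq:eps_to_0}.

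\textbf{Main obstacle.} I expect the hard part to be the concentration argument underlying Lemma~\ref{lm:winning_prob} for \emph{general} asymmetric densities $p_1, p_2$, rather than the clean uniform case. The subtlety is that an agent's selection depends on the \emph{joint} order statistics of efforts across two groups with different distributions and different domains, so one cannot simply count i.i.d. indicators from a single population; instead the count of full-effort agents must be split by group and each piece bounded separately, which is precisely why the bound carries the two terms $n^{-(1-\alpha)}$ and $n^{-\alpha}$. Handling unbounded support (so that $v$ in the deviation-gain bound $v \cdot P_i$ need not be uniformly bounded) and ensuring the error terms combine to a single $\eps_n \to 0$ uniformly over all types is the most delicate bookkeeping.
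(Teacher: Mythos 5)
Your overall strategy coincides with the paper's: the same proxy policies, the same verification that $\Delta_n$ keeps a $(c+\sqrt{\log n/n})$-fraction of agents above threshold in expectation, the same group-wise Chernoff bounds yielding $P_i(e;A^{(n)}_{-i}) \le n^{-(1-\alpha)} + n^{-\alpha}$ for $e<t$ (this is Lemma~\ref{lm:winning_prob}), the same case analysis to extract $\eps_n$ (Lemma~\ref{lm:eps_NE}), and the same continuity argument $\Delta_n \to t$ for $s^{(n)} \to s$.

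There is, however, a genuine gap in your deviation-gain bound for agents prescribed to exert full effort, and the fix you propose for it would fail. For a type with $v \ge \Delta_n$ the equilibrium payoff is $v-t$ (winning probability $1$ by Lemma~\ref{lm:winning_prob}), so the gain from deviating to $e<t$ is
\[
\bigl(P_i(e;A^{(n)}_{-i})\,v - e\bigr) - (v-t) \;=\; t - e - v\bigl(1 - P_i(e;A^{(n)}_{-i})\bigr),
\]
not $v\cdot P_i$ as you state: your bound misses the additive term $t-e-v$ (which is positive when $\Delta_n \le v < t$ and $e$ is small), and, more importantly, it suggests the wrong dependence on $v$. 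The exact gain above is \emph{decreasing} in $v$, because deviating below the threshold forfeits the near-certain prize; hence it is maximized at $v=\Delta_n$ and is bounded by $(n^{-\alpha}+n^{-(1-\alpha)})\Delta_n + (t-\Delta_n)$ uniformly over the entire support. This is precisely the paper's $\eps_n$, and it requires no boundedness of $\Omega_\ell$ at all. By contrast, your fallback ``tail argument'' for unbounded supports (e.g., the Pareto case) cannot be made to work: Definition~\ref{def:eps_NE} demands the $\eps$-stability inequality for \emph{every} type $(v,a)$ in the support, not merely with high probability over the draw of $v$, so truncating a low-probability tail of valuations does not remove those types from the quantifier. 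Once you replace $v\cdot P_i$ by the exact gain expression, the ``main obstacle'' you flag disappears and the remainder of your argument goes through as written.
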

	
	\noindent
	This theorem establishes how an NE policy for the two-group contest approaches a limit as the number of agents, $n$, grows indefinitely. 
	It reveals that the sequence of policies $\{A^{(n)}\}_n$ not only converges to $A$ but also aligns with an NE policy for the two-group contest. 
	Thus, it validates the assertion made in Theorem \ref{thm:two_general} that $A$ serves as an NE policy for the two-group contest in the limit as $n \to \infty$.

	\subsection{Proof of Theorem \ref{thm:two_formal}}
	\label{sec:proof}
	
	We provide an overview of the proof, summarized as follows.
	
	\begin{enumerate}
		\item In Section \ref{sec:proof_unique}, we prove Lemma \ref{lm:unique} for the uniqueness of solution $t$ that decides the threshold function $s$. 
		\item In Section \ref{sec:winning_prob}, we bound the winning probabilities $P_i(e; A_{-i}^{(n)})$ under policy $A^{(n)}$; summarized by Lemma \ref{lm:winning_prob}. 
		Its proof relies on the winning probability for the undifferentiated contest (Lemma \ref{lm:Q_p}), whose computation is via an auxiliary function defined in Definition \ref{def:Q_p}.
		\item In Section \ref{sec:eps_NE}, we apply Lemma \ref{lm:winning_prob} to prove that $A^{(n)}$ is approximate NE (Lemma \ref{lm:eps_NE}).
		\item Finally in Section \ref{sec:completing_thm}, we show that Theorem \ref{thm:two_formal} is a corollary of Lemma \ref{lm:eps_NE}.
	\end{enumerate}
	
	\noindent
	For simplicity, we first assume that $p_a$ is a point mass at 0, such that policies depend solely on valuations. 
	In this case, $s(v,a), P_i(e; a, A_{-i}), \pi(v,a,e; A_{-i})$ are simplified to $s(v), P_i(e; A_{-i}), \pi(v,e; A_{-i})$ respectively.
	At the end, we will show how to extend this to a general $p_a$.

	\subsubsection{Proof of Lemma \ref{lm:unique}: solution uniqueness}
	\label{sec:proof_unique}
	
	Instead of proving Lemma \ref{lm:unique}, we directly prove for the general multi-group case. 
	Let $G_1,\ldots, G_m$ be $m\geq 2$ groups where each $G_\ell$ has size $n_\ell = \alpha_\ell n$ and valuation distribution $p_\ell$ on domain $\Omega_\ell\subseteq \R_{\geq 0}$.
	We have $\alpha_\ell\in (0,1)$ for every $\ell\in [m]$ and $\sum_{\ell\in [m]} \alpha_\ell = 1$.
	We have the following lemma that generalizes Lemma \ref{lm:unique}.
	
	\begin{lemma}[\bf{Unique solution for multiple groups}]
		\label{lm:unique_multiple}
		Suppose $(\cup_{\ell\in [m]}\Omega_\ell) + \Omega_a$ is connected and each density $p_\ell$ and $p_a$ is positive at any point of its domain.
		There exists a unique solution $t\in \cup_{\ell\in [m]}\Omega_\ell$ for the equation $\sum_{\ell\in [m]}\alpha_\ell F_\ell(\zeta) = 1-c$,
		where for any $\zeta\in \R_{\geq 0}$, $F_\ell(\zeta) = \Pr_{v\sim p_\ell, a\sim p_a}\left[v+a\leq \zeta\right]$.
	\end{lemma}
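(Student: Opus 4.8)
The plan is to prove the existence and uniqueness of the solution $t$ to the equation $\Phi(\zeta) := \sum_{\ell \in [m]} \alpha_\ell F_\ell(\zeta) = 1-c$ by establishing that $\Phi$ is a continuous, strictly monotonically increasing function on the connected domain $D := (\cup_{\ell \in [m]} \Omega_\ell) + \Omega_a$, and then invoking the intermediate value theorem together with strict monotonicity. Here each $F_\ell(\zeta) = \Pr_{v \sim p_\ell, a \sim p_a}[v + a \le \zeta]$ is the CDF of the convolution of $p_\ell$ and $p_a$, and $\Phi$ is a convex combination of these CDFs (with positive weights summing to $1$), hence itself a valid CDF on $D$.

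First I would establish \emph{existence}. Since each $F_\ell$ is a CDF of a nonnegative random variable $v+a$, we have $\Phi(\zeta) \to 0$ as $\zeta \to 0^-$ (or the left endpoint of $D$) and $\Phi(\zeta) \to 1$ as $\zeta \to \infty$. Because every $p_\ell$ and $p_a$ is positive on its domain, each $F_\ell$ is continuous, so $\Phi$ is continuous. As $1 - c \in (0,1)$ since $c \in (0,1)$, the intermediate value theorem guarantees at least one $\zeta \in D$ with $\Phi(\zeta) = 1-c$.

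Next I would establish \emph{uniqueness} via strict monotonicity of $\Phi$ on $D$, which is the crux of the argument and where the connectedness hypothesis is essential. The key observation, exactly as sketched after Lemma~\ref{lm:unique} in the excerpt, is the following: suppose toward contradiction that there exist $\zeta < \zeta'$ in $D$ with $\Phi(\zeta) = \Phi(\zeta')$. Since each $F_\ell$ is non-decreasing and the weights $\alpha_\ell$ are strictly positive, the equality $\sum_\ell \alpha_\ell F_\ell(\zeta) = \sum_\ell \alpha_\ell F_\ell(\zeta')$ forces $F_\ell(\zeta) = F_\ell(\zeta')$ for every $\ell$. This means the convolved density of $v+a$ assigns zero mass to the interval $(\zeta, \zeta')$ for each group, which in turn would require that no group's support $\Omega_\ell + \Omega_a$ intersects $(\zeta, \zeta')$; hence $(\zeta, \zeta') \cap D = \emptyset$. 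But $\zeta, \zeta' \in D$ with $\zeta < \zeta'$, and $D$ is connected (an interval in $\R_{\geq 0}$), so $(\zeta, \zeta') \subseteq D$, a contradiction. Therefore $\Phi$ is strictly increasing on $D$, and the solution found in the existence step is unique.

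The main obstacle is the step deducing $(\zeta, \zeta') \cap D = \emptyset$ from $F_\ell(\zeta) = F_\ell(\zeta')$ for all $\ell$. One must argue carefully that positivity of each $p_\ell$ on $\Omega_\ell$ and of $p_a$ on $\Omega_a$ implies the convolution $p_\ell * p_a$ is strictly positive on the interior of $\Omega_\ell + \Omega_a$, so that $F_\ell$ is strictly increasing precisely on that Minkowski sum; thus $F_\ell(\zeta) = F_\ell(\zeta')$ can hold only if $(\zeta,\zeta')$ lies outside $\Omega_\ell + \Omega_a$. Intersecting over all $\ell$ yields that $(\zeta,\zeta')$ avoids $D = \bigcup_\ell (\Omega_\ell + \Omega_a)$, completing the contradiction with connectedness. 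The two-group Lemma~\ref{lm:unique} is then the special case $m = 2$, $\alpha_1 = 1-\alpha$, $\alpha_2 = \alpha$.
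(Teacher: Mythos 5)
Your proposal is correct and follows essentially the same route as the paper's proof: both establish existence via the boundary values $0$ and $1$ of the convex combination $\sum_\ell \alpha_\ell F_\ell$ together with continuity, and both obtain uniqueness by showing this combination is strictly increasing on the connected Minkowski-sum domain—your contrapositive formulation (equality of $\Phi$ at two points forces $F_\ell(\zeta)=F_\ell(\zeta')$ for every $\ell$, hence a gap in the domain contradicting connectedness) is exactly the argument the paper gives, just stated in reverse. No meaningful difference in substance; if anything, your explicit note that positivity of the densities makes the convolution strictly positive on $\Omega_\ell+\Omega_a$ fills in a step the paper treats as an unproved "property."
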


	\begin{proof}
		Fix $\ell \in [m]$.
		Recall that we expand the domain of every CDF $F_\ell$ to $\R_{\geq 0}$.
		We have the following properties for $F_\ell$: 
		\begin{enumerate}
			\item $F_\ell(\cdot)$ is non-decreasing across domain $\R_{\geq 0}$, i.e., for any $\zeta,\zeta' \in \R_{\geq 0}$ with $\zeta < \zeta'$, $F_\ell(\zeta)\leq F_\ell(\zeta')$ holds. 
			\item $F_\ell(\cdot)$ is strictly monotonous across domain $\Omega_{\ell} + \Omega_a$, i.e., for any $\zeta,\zeta'\in \Omega_\ell + \Omega_a$ with $\zeta < \zeta'$ and $(\zeta, \zeta')\cap (\Omega_\ell + \Omega_a) \neq \emptyset$, we have $F_\ell(v) < F_\ell(v')$. 
		\end{enumerate}
		Define a function $g: \R_{\geq 0} \rightarrow \R_{\geq 0}$ such that for any $\zeta \in \R_{\geq 0}$, $g(\zeta) = \sum_{\ell \in [m]} \alpha_\ell F_\ell(\zeta)$.
		Since $g(\cdot)$ is a convex combination of $F_\ell(\cdot)$'s, we know that $g(\cdot)$ is also non-decreasing across domain $\R_{\geq 0}$.
		Moreover, since $(\bigcup_{\ell\in [m]} \Omega_\ell) + \Omega_a$ is connected, for any $\zeta,\zeta'\in (\bigcup_{\ell\in [m]} \Omega_\ell) + \Omega_a$ with $\zeta < \zeta'$, there must exist at least one $\ell\in [m]$ such that $F_\ell(\zeta) < F_\ell(\zeta')$ and $(\zeta, \zeta')\cap (\Omega_\ell + \Omega_a)$.
		This implies that $g(\cdot)$ is strictly monotonous across domain $(\bigcup_{\ell\in [m]} \Omega_\ell) + \Omega_a$.

		Now let $L$ and $U$ denote the infimum and the supremum of domain $(\bigcup_{\ell\in [m]} \Omega_\ell) + \Omega_a$ respectively.
		We have $0 = g(L) < 1-c < g(U) = 1$.
		Thus, there must exist a unique point $t\in (\bigcup_{\ell\in [m]} \Omega_\ell) + \Omega_a$ such that $g(t) = 1-c$.
		This completes the proof.
	\end{proof}
	
	\subsubsection{Bounding winning probability}
	\label{sec:winning_prob}
	
	We first have the following lemma that bounds the winning probability under policy $A^{(n)}$.

	\begin{lemma}[\bf{Bounding winning probability}]
		\label{lm:winning_prob}
		For every integer $n\geq n_t$, we have
		\[
		\forall i\in [n], \ P_i(e; A_{-i}^{(n)}) = 1 \ \text{if }  e \geq t; \text{ and } P_i(e; A_{-i}^{(n)}) \leq n^{-\alpha} + n^{-(1-\alpha)} \ \text{if } e < t.
		\]
	\end{lemma}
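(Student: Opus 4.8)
The plan is to reduce the selection event under the proxy policy $A^{(n)}$ to a counting statement about how many competitors exert the full effort $t$, and then control that count by a group-wise concentration argument. I would work first in the point-mass case $p_a=\delta_0$, where every agent's score equals its effort, so that under $s^{(n)}$ each competitor's score is either $t$ (when $v_j\ge\Delta_n$) or $0$ (when $v_j<\Delta_n$). Let $N:=|\{j\ne i: v_j\ge\Delta_n\}|$ be the number of competitors scoring $t$; this is a sum of independent indicators, $N=N_1+N_2$, where $N_1$ counts the $G_1$ competitors (each present with probability $q_1:=1-F_1(\Delta_n)$) and $N_2$ the $G_2$ competitors (each with probability $q_2:=1-F_2(\Delta_n)$). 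The easy direction is immediate: if $e\ge t$, agent $i$'s score weakly dominates every competitor's score (all at most $t$), and strictly so for $e>t$, making $i$ the unique top scorer, hence $P_i(e;A^{(n)}_{-i})=1$ (the boundary $e=t$ being covered by the tie-breaking convention). For the substantive case $e<t$, the only scores exceeding $e$ are the $N$ scores at level $t$, so $i$ lands in the top $k=cn$ only if $N\le k-1$, giving
\[
P_i(e;A^{(n)}_{-i}) \;\le\; \Pr[N<cn].
\]

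The heart of the argument is a lower bound on $\E[N]$ extracted from the definition of $\Delta_n$. Since $\Delta_n\le F_\ell^{-1}\!\big(F_\ell(t)-\sqrt{\log n/n}\big)$ and each $F_\ell$ is nondecreasing, we get $F_\ell(\Delta_n)\le F_\ell(t)-\sqrt{\log n/n}$ for $\ell=1,2$, whence, using that $t$ solves Equation~\eqref{eq:key},
\[
(1-\alpha)F_1(\Delta_n)+\alpha F_2(\Delta_n)\;\le\;(1-\alpha)F_1(t)+\alpha F_2(t)-\sqrt{\tfrac{\log n}{n}}\;=\;(1-c)-\sqrt{\tfrac{\log n}{n}}.
\]
Equivalently $(1-\alpha)q_1+\alpha q_2\ge c+\sqrt{\log n/n}$, so that after accounting for the single removed agent $i$ (an $O(1)$ correction) one obtains $\E[N]\ge cn+\sqrt{n\log n}-1$. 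The edge cases $F_1(t)=0$ or $F_2(t)=0$ invoke the corresponding one-sided clause of Definition~\ref{def:Delta} and yield the same conclusion; note this bound holds uniformly across regimes, and is in fact far from tight when $t$ lies above one group's support.

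Finally I would split the deviation between the two groups to land exactly on the stated exponents. If both $N_1\ge\E[N_1]-\delta_1$ and $N_2\ge\E[N_2]-\delta_2$ with $\delta_1+\delta_2\le\E[N]-cn$, then $N\ge cn$; contrapositively,
\[
\Pr[N<cn]\;\le\;\Pr[N_1<\E[N_1]-\delta_1]+\Pr[N_2<\E[N_2]-\delta_2].
\]
The natural choice $\delta_1=(1-\alpha)\sqrt{n\log n/2}$ and $\delta_2=\alpha\sqrt{n\log n/2}$ gives $\delta_1+\delta_2=\sqrt{n\log n/2}$, comfortably within the $\Theta(\sqrt{n\log n})$ slack, and Hoeffding's inequality on the $\le(1-\alpha)n$ indicators of $N_1$ and the $\le\alpha n$ indicators of $N_2$ yields $\exp\!\big(-2\delta_1^2/((1-\alpha)n)\big)=n^{-(1-\alpha)}$ and $\exp\!\big(-2\delta_2^2/(\alpha n)\big)=n^{-\alpha}$. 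The finitely many small $n$ are vacuous since there $n^{-\alpha}+n^{-(1-\alpha)}\ge 2/\sqrt n\ge 1$. The extension to general $p_a$ reuses the same counting reduction, with effort-putters now scoring $\max\{t,a\}\ge t$.

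I expect the main obstacle to be the $\E[N]$ lower bound together with the budget allocation, rather than the reduction, which is routine once the two-point score structure is noted. Concretely, one must verify the $\Delta_n$ inequality across all three clauses of Definition~\ref{def:Delta} (including the regime where $t$ exceeds a group's support), and then divide the $\Theta(\sqrt{n\log n})$ concentration budget between the groups so the two Hoeffding tails hit precisely $n^{-(1-\alpha)}$ and $n^{-\alpha}$; the only remaining care is the tie handling at $e=t$ and the bookkeeping for removing agent $i$ from its own group's count.
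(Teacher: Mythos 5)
Your proposal is correct and follows essentially the same route as the paper's proof: reduce selection under $A^{(n)}$ to the count of competitors exerting effort $t$, extract the slack $(1-\alpha)F_1(\Delta_n)+\alpha F_2(\Delta_n)\leq (1-c)-\sqrt{\log n/n}$ from Definition~\ref{def:Delta}, and apply group-wise Chernoff/Hoeffding bounds with deviations proportional to group sizes, landing exactly on the tails $n^{-(1-\alpha)}$ and $n^{-\alpha}$. The differences are cosmetic---you take a union bound on $N_1+N_2$ with a shared budget $\sqrt{n\log n/2}$, whereas the paper uses independent per-group target events $E_1^{(n,e)},E_2^{(n,e)}$ whose targets $(1-F_\ell(t))\cdot|G_\ell|$ sum to $cn$ and exploits each group's own $\sqrt{n\log n}$-slack---plus some constant-level bookkeeping slips (in the one-sided clauses the slack is only $(1-\alpha)\sqrt{n\log n}$ or $\alpha\sqrt{n\log n}$, not $\sqrt{n\log n}$, though there the other group's count is deterministic and needs no budget; and your vacuousness claim covers $n\leq 4$ while your budget allocation needs roughly $n\geq 7$) that are no looser than what the paper's own proof glosses over in its ``the other two cases are identical'' step.
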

	
	\noindent
	For preparation, we define the following function that is useful for computing the winning probability $P_i(e; A_{-i})$ for the undifferentiated contest.
	
	\begin{definition}[\bf{Function for computing winning probability}]
		\label{def:Q_p}
		Given integers $n,k\geq 1$ and a density $p_1$ supported on $\Omega\subseteq \R_{\geq 0}$, we denote a function $Q_p^{(n,k)}: \Omega \rightarrow \R_{\geq 0}$ to be for any $v\in \Omega$, 
		\begin{eqnarray}
			\label{eq:Q_p}
			Q_p^{(n,k)}(v) =  \sum_{i = n-k}^{n-1} {n-1 \choose i} \cdot F_1(v)^{i}\cdot (1-F_1(v))^{n-i-1}
			= \sum_{i = n-k}^{n-1} B(n-1, i, F_1(v)),
		\end{eqnarray}
		where $B(n,k,x)={n \choose k} x^k (1-x)^{n-k}$ is the Bernstein polynomial.
	\end{definition}
	
	\noindent
	By definition, $Q^{(n,k)}_p(v)$ represents the probability that, when sampling $n-1$ independent and identically distributed (i.i.d.) values $v_1, \ldots, v_{n-1}$ from distribution $p_1$, the value $v$ ranks among the top $k$ values in the set $\{v_1, \ldots, v_{n-1}, v\}$. 
	Given its algebraic significance, the function $Q^{(n,k)}_p(\cdot)$ is monotonically increasing to $v$ across the domain $\Omega$. 
	This means that as $v$ increases, the probability of $v$ being in the top $k$ also increases.
	Also note that for any integers $n, n'\geq 1$ with $n < n'$,
	\begin{eqnarray}
		\label{eq:Q_p_monotone}
		Q^{(n,k)}(v) \geq Q^{(n',k)}(v).
	\end{eqnarray}
	This means that as the number of agents $n$ increases, $v$ is less likely to be in the top $k$.
	This function can be used to compute $P_i(e; A_{-i})$ for the undifferentiated contest in the following sense.

	\begin{lemma}[\bf{Computation of winning probability for the undifferentiated contest}]
		\label{lm:Q_p}
		Let $n,k\geq 1$ be integers and $p_1$ be a density supported on $\Omega\subseteq \R_{\geq 0}$.
		Let $A = (A_1,\ldots, A_n)$ be a symmetric policy for the undifferentiated contest satisfying that every $A_i$ is strictly monotonically increasing to $v$ across the domain $\Omega$.
		Then for every $i\in [n]$ and $v\in \Omega$, we have $P_i(A_i(v); A_{-i}) = Q^{(n,k)}_p(v)$.
	\end{lemma}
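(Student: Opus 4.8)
The plan is to prove Lemma \ref{lm:Q_p} by directly unwinding the definition of the winning probability $P_i$ and exploiting the strict monotonicity of the symmetric policy. Fix an agent $i \in [n]$ with valuation $v \in \Omega$. By definition,
\[
P_i(A_i(v); A_{-i}) = \Pr\left[A_i(v) \text{ is among the top } k \text{ scores of } \{A_j(v_j)\}_{j\neq i} \cup \{A_i(v)\}\right],
\]
where the $v_j$ are drawn i.i.d.\ from $p_1$ (recall $p_a$ is a point mass at $0$, so scores equal efforts). Since the policy is symmetric, $A_i = A_j =: A$ for all $j$, and the effort $A_i(v)$ is among the top $k$ if and only if at most $k-1$ of the $n-1$ other agents exert strictly greater effort. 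The key step is to translate the event ``$A(v_j) > A(v)$'' into an event about valuations: because $A$ is strictly monotonically increasing on $\Omega$, we have $A(v_j) > A(v) \iff v_j > v$. Hence the selection event is exactly the event that at most $k-1$ of the $n-1$ i.i.d.\ valuations exceed $v$, equivalently that $v$ ranks among the top $k$ in $\{v_1,\dots,v_{n-1},v\}$.

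Next I would compute this probability explicitly. Each other agent independently satisfies $v_j \le v$ with probability $F_1(v)$ and $v_j > v$ with probability $1 - F_1(v)$. Letting $N$ be the number of the $n-1$ other agents with $v_j \le v$, we have $N \sim \mathrm{Binomial}(n-1, F_1(v))$, and $v$ is among the top $k$ precisely when $N \geq n - k$ (i.e.\ at least $n-k$ agents fall at or below $v$, so at most $k-1$ lie strictly above). Therefore
\[
P_i(A_i(v); A_{-i}) = \Pr[N \geq n-k] = \sum_{i=n-k}^{n-1} \binom{n-1}{i} F_1(v)^{i} (1-F_1(v))^{n-1-i} = Q_p^{(n,k)}(v),
\]
matching Definition \ref{def:Q_p} exactly. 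This is the content of the lemma.

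The main obstacle, and the only subtlety worth addressing carefully, is the treatment of ties—the event $v_j = v$. Under the strict monotonicity of $A$, ties in valuation produce ties in effort, and the indicator of being ``among the top $k$'' must be made precise at such boundaries. However, since $p_1$ is a density (absolutely continuous), the event $v_j = v$ has probability zero, so ties occur with probability zero and do not affect the computation; I would note this explicitly to justify identifying $\{A(v_j) > A(v)\}$ with $\{v_j > v\}$ up to a null event, and to justify placing the boundary case $N = n-k$ on the ``selected'' side without ambiguity. A secondary point to verify is the index bookkeeping: that ``at most $k-1$ agents strictly above'' corresponds to the summation range $i = n-k$ to $n-1$, which follows since $N \geq n-k$ is equivalent to $(n-1) - N \leq k-1$. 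With these points settled, the lemma follows immediately from the binomial distribution of $N$.
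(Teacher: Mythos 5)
Your proposal is correct and follows essentially the same route as the paper's proof: strict monotonicity of the symmetric policy turns the effort ranking into the valuation ranking, and the winning probability is then exactly the binomial tail $Q_p^{(n,k)}(v)$ from Definition \ref{def:Q_p}. Your explicit handling of ties (a null event since $p_1$ is a density) and the binomial index bookkeeping are details the paper leaves implicit, but they do not change the argument.
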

	
	\begin{proof}
		By symmetric, we only need to prove the lemma for $i = n$, i.e., proving $P_n(A_n(v); A_{-n}) = Q^{(n,k)}_p(v)$.
		Let $v_1, \ldots, v_{n-1}$ be i.i.d. samples from $p_1$.
		Since $A_i$ is strictly monotonically increasing to $v$ across domain $\Omega$, we note that the sequence $v, v_1,\ldots, v_{n-1}$ should have the same order as the sequence $A_n(v), A_1(v_1), \ldots, A_{n-1}(v_{n-1})$.
		Hence, $A_n(v)$ is among the top $k$ of $\left\{A_1(v_1),\ldots, A_{n-1}(v_{n-1}), A_n(v)\right\}$ if and only if $v$ is among the top $k$ of $\left\{v_1, \ldots, v_{n-1}, v\right\}$.
		By the definition of winning probabilities and Definition \ref{def:Q_p}, this implies that $P_n(A_n(v); A_{-n}) = Q^{(n,k)}_p(v)$.
		This completes the proof of Lemma \ref{lm:Q_p}.
	\end{proof}
	
	\noindent
	Now we are ready to prove Lemma \ref{lm:winning_prob}.
	
	\begin{proofof}[of Lemma \ref{lm:winning_prob}]
		It suffices to prove for the case that both $F_1(t) > 0 $ and $F_2(t) > 0$.
		Proof for the other two cases is identical.
		By Definition \ref{def:Delta}, we have 
		$$
		\Delta_n = \min\left\{F_1^{-1}(F_1(t) - \sqrt{\frac{\log n}{n}}), F_2^{-1}(F_2(t) - \sqrt{\frac{\log n}{n}}) \right\}.
		$$
		Let event $E_1^{(n,e)}$ be that there are at least $(1-F_1(t)) (1-\alpha) n$ agents in $G_1\setminus \{i\}$ that put in effort larger than $e$; and let $E_2^{(n,e)}$ be that there are at least $(1 - F_2(t))\cdot \alpha n$ agents in $G_2\setminus \{i\}$ that put in effort larger than $e$.
		Note that
		\[
		(1-F_1(t)) (1-\alpha) n + (1 - F_2(t))\cdot \alpha n = cn.
		\]
		When $e\geq t$, we have
		\[
		P_i(e; A_{-i}^{(n)}) \geq \Pr\left[\overline{E_1^{(n,e)}}\cup \overline{E_2^{(n,e)}}\right] \geq 1 - \Pr\left[E_1^{(n,e)}\right] - \Pr\left[E_2^{(n,e)}\right].
		\]
		Then to prove $P_i(e; A_{-i}^{(n)}) = 1$, it suffices to show that $\Pr\left[E_1^{(n,e)}\right] = \Pr\left[E_2^{(n,e)}\right] = 0$.
		Note that by policy $A^{(n)}$, the maximum effort put in by an agent is $t \leq v$.
		Hence, no agent can put in effort larger than $e$, which implies that $\Pr\left[E_1^{(n,e)}\right] = \Pr\left[E_2^{(n,e)}\right] = 0$.
		This completes the proof of $P_i(e; A_{-i}^{(n)}) = 1$ when $e \geq t $.
		
		When $e < t$, we note that if both $E_1^{(n,e)}$ and $E_2^{(n,e)}$ happen, there are at least $k$ agents that put in effort $t$.
		Since events $E_1^{(n,e)}$ and $E_2^{(n,e)}$ are independent, we have
		\[
		P_i(e; A_{-i}^{(n)}) \leq 1 - \Pr\left[{E_1^{(n,e)}}\cap {E_2^{(n,e)}}\right] = 1 - \Pr\left[E_1^{(n,e)}\right] \cdot \Pr\left[E_2^{(n,e)}\right].
		\]
		Then to prove $P_i(e; A_{-i}^{(n)}) \leq n^{-\alpha} + n^{-(1-\alpha)}$, it suffices to show that $\Pr\left[E_1^{(n,e)}\right] \geq 1 - n^{-(1-\alpha)}$ and $\Pr\left[E_2^{(n,e)}\right] \geq 1 - n^{-\alpha}$.

		We first bound $\Pr\left[E_1^{(n,e)}\right]$. 
		Note that there are at least $(1-\alpha)n$ agents in $G_1\cup \{i\}$.
		Also, note that an agent $j\in G_1\setminus \{i\}$ puts in effort $A_j(v_j) > e$ if and only if their valuation $v_j\geq \Delta_n$ holds.
		Now consider the undifferentiated contest among $G_1 \cup \{i\}$ with $k_1 = (1-F_1(t)) (1-\alpha) n$ and density $p_1$.
		By Lemma \ref{lm:Q_p}, we have
		\begin{eqnarray}
			\label{eq:thm_0}
			\begin{aligned}
				\Pr\left[E_1^{(n,e)}\right] = & \quad 1 - Q_{p_1}^{(|G_1\cup \{i\}|, k_1)}(\Delta_n) & (\text{Lemma \ref{lm:Q_p}})\\
				\geq & \quad 1 - Q_{p_1}^{((1-\alpha) n, k_1)}(\Delta_n) & (\text{Ineq. \eqref{eq:Q_p_monotone}})\\
				\geq & \quad 1 - Q_{p_1}^{((1-\alpha) n, k_1)}(F_1^{-1}(F_1(t) - \sqrt{\frac{\log n}{n}})) & (\text{Defn. of $\Delta_n$}) \\
				=& \quad 1 - \sum_{j = (1-\alpha) n -1 - k_1}^{ (1-\alpha) n -1} B((1-\alpha) n - 1, j, F_1(t)-\sqrt{\frac{\log n}{n}}). & (\text{Eq. \eqref{eq:Q_p}})
			\end{aligned}
		\end{eqnarray}
		Let $X_1,\ldots, X_{n}$ be $(1-\alpha)n-1$ i.i.d. random variables where each $X_i = 0$ with probability $F_1(t)-\sqrt{\frac{\log n}{n}}$ and otherwise $X_i = 1$.
		We note that $\sum_{j = (1-\alpha) n - 1 - k_1}^{ (1-\alpha) n -1} B((1-\alpha) n - 1, j, F_1(t)-\sqrt{\frac{\log n}{n}})$ is equivalent to the probability that $\sum_{i\in [n-1]} X_i \leq k_1 - 1$.
		Also note that 
		\begin{eqnarray}
			\label{eq:thm_1}
			\mathbb{E}\left[\sum_{i\in [(1-\alpha)n-1]} X_i\right] = \left((1-\alpha) n - 1\right)\cdot (1 - F_1(t) + \sqrt{\frac{\log n}{n}}).
		\end{eqnarray}
		Then by the Chernoff bound, we have
		\begin{align*}
			& \quad \sum_{j = (1-\alpha) n - 1 - k_1}^{ (1-\alpha) n -1} B((1-\alpha) n - 1, j, F_1(t)-\sqrt{\frac{\log n}{n}}) &\\
			= & \quad \Pr\left[\sum_{i\in [(1-\alpha)n-1]} X_i \leq k_1 - 1\right] &\\
			\leq & \quad \Pr\left[\sum_{i\in [(1-\alpha)n-1]} X_i \leq \mathbb{E}\left[\sum_{i\in [(1-\alpha)n-1]} X_i\right] - (1-\alpha) n\cdot \sqrt{\frac{\log n}{n}}\right] & (\text{Eq. \eqref{eq:thm_1} and Defn. of $k_1$})\\
			\leq & \quad e^{-\frac{ 2(1-\alpha)^2 n \log n}{(1-\alpha)n - 1}} \leq n^{-(1-\alpha)}. & (\text{Chernoff bound})
		\end{align*}
		Combining with Inequality \eqref{eq:thm_0}, we prove that $\Pr\left[E_1^{(n,e)}\right] \geq 1 - n^{-(1-\alpha)}$.
		By a similar argument, we can also prove $\Pr\left[E_2^{(n,e)}\right] \geq 1 - n^{-\alpha}$.
		Overall, we prove that $P_i(e; A_{-i}^{(n)}) \leq n^{-\alpha} + n^{-(1-\alpha)}$ when $e < t$.
		This completes the proof of Lemma \ref{lm:winning_prob}.
	\end{proofof}

	\subsubsection{Proof that \boldmath{$A^{(n)}$} is approximate NE}
	\label{sec:eps_NE}
	
	Based on Lemma \ref{lm:winning_prob}, we are now ready to prove the approximate degree of $A^{(n)}$ to be an NE policy.

	\begin{lemma}[\bf{\boldmath{$A^{(n)}$} is approximate NE}]
		\label{lm:eps_NE}
		For any $n\geq n_t$, $A^{(n)}$ is an $\eps_n$-NE policy, where $\eps_n = (n^{-\alpha} + n^{-(1-\alpha)})\Delta_n + t - \Delta_n$.
	\end{lemma}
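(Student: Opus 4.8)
The plan is to bound, uniformly over all valuations $v$ and all deviating efforts $e \geq 0$, the \emph{gain from deviation}
\[
\delta(v,e) := \pi_i(v,e;A^{(n)}_{-i}) - \pi_i(v,A^{(n)}_i(v);A^{(n)}_{-i}),
\]
and to show $\delta(v,e)\le \eps_n$ everywhere (working, as in the rest of Section~\ref{sec:proof}, with $p_a$ a point mass at $0$, so that $A^{(n)}_i(v)\in\{0,t\}$). Since $\pi_i(v,e;A^{(n)}_{-i}) = P_i(e;A^{(n)}_{-i})\cdot v - e$, the first step is to feed the two regimes of Lemma~\ref{lm:winning_prob} into this expression. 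For $e \ge t$ we get $P_i = 1$, so $\pi_i(v,e;A^{(n)}_{-i}) = v - e \le v - t$, with the supremum attained at $e=t$. For $e < t$ we get $P_i \le n^{-\alpha}+n^{-(1-\alpha)}$, so $\pi_i(v,e;A^{(n)}_{-i}) \le (n^{-\alpha}+n^{-(1-\alpha)})v - e \le (n^{-\alpha}+n^{-(1-\alpha)})v$, using $-e\le 0$. Hence the best achievable deviation payoff is at most $\max\{v-t,\ (n^{-\alpha}+n^{-(1-\alpha)})v\}$, a bound that no longer references $e$.

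The second step is to evaluate the equilibrium payoff $\pi_i(v,A^{(n)}_i(v);A^{(n)}_{-i})$ and subtract, splitting on the threshold $\Delta_n$. If $v \ge \Delta_n$, then $A^{(n)}_i(v)=t$ and, since $P_i(t;\cdot)=1$, the equilibrium payoff equals $v-t$; the gain is therefore at most $\max\{0,\ (n^{-\alpha}+n^{-(1-\alpha)})v-(v-t)\} = \max\{0,\ t-(1-n^{-\alpha}-n^{-(1-\alpha)})v\}$, which for large $n$ is decreasing in $v$ and thus maximized at the left endpoint $v=\Delta_n$, giving exactly $(n^{-\alpha}+n^{-(1-\alpha)})\Delta_n + (t-\Delta_n) = \eps_n$. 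If instead $v < \Delta_n$, then $A^{(n)}_i(v)=0$ with equilibrium payoff $P_i(0;\cdot)\,v \ge 0$, while $v-t<0$ makes the $e\ge t$ branch irrelevant; the gain is then at most $(n^{-\alpha}+n^{-(1-\alpha)})v \le (n^{-\alpha}+n^{-(1-\alpha)})\Delta_n \le \eps_n$. Combining the two cases yields $\delta(v,e)\le \eps_n$ for all $v,e$, which is exactly the definition of an $\eps_n$-NE policy.

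The step I expect to be the crux — and the reason $\eps_n$ takes this form — is the subregime $\Delta_n \le v < t$ inside the first case. There the prescribed action is to exert effort $t$, but since $v<t$ this yields the \emph{negative} payoff $v-t$, so such an agent strictly prefers zero effort. This over-exertion is an artifact of shifting the participation threshold down from $t$ to $\Delta_n$ in order to secure the concentration bound of Lemma~\ref{lm:winning_prob}, and its cost is precisely the additive term $t-\Delta_n$; the residual $(n^{-\alpha}+n^{-(1-\alpha)})\Delta_n$ captures the small probability that a non-participating deviator is nonetheless selected. Verifying that $v=\Delta_n$ is genuinely the worst case requires monotonicity of $t-(1-n^{-\alpha}-n^{-(1-\alpha)})v$ in $v$, i.e.\ $n^{-\alpha}+n^{-(1-\alpha)}<1$, which holds once $n$ is large; for the finitely many smaller $n\ge n_t$ one instead invokes the trivial bound $P_i\le 1$, under which any deviation gain is at most $t\le \eps_n$. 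The extension to general $p_a$ replaces the prescribed effort $t$ by $\max\{t-a,0\}$ and is otherwise identical.
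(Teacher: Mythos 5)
Your proof is correct and follows essentially the same route as the paper's: bound the deviation payoff via the two regimes of Lemma~\ref{lm:winning_prob}, split on $v \ge \Delta_n$ versus $v < \Delta_n$, and observe that the worst case $(n^{-\alpha}+n^{-(1-\alpha)})\Delta_n + t - \Delta_n$ arises at $v = \Delta_n$ with $e < t$. You are in fact slightly more careful than the paper, which silently uses $n^{-\alpha}+n^{-(1-\alpha)} < 1$ when maximizing over $v \ge \Delta_n$; your fallback to the trivial bound $P_i \le 1$ for the finitely many small $n \ge n_t$ (where $\eps_n \ge t$, so any deviation gain of at most $t$ is still within budget) closes that edge case.
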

	
	\begin{proof}
		Fix $\ell = 1,2$, $i\in G_\ell$, and $v,e\in \Omega_\ell$.
		We discuss the value $\pi_i(v, e; A^{(n)}_{-i}) - \pi_i(v, A_i^{(n)}(v); A^{(n)}_{-i})$.
		By Lemma \ref{lm:winning_prob}, we know that
		\[
		\pi_i(v, e; A^{(n)}_{-i}) = v - e \text{ if } e\geq t; \text{ and } \pi_i(v, e; A^{(n)}_{-i}) \leq (n^{-\alpha} + n^{-(1-\alpha)}) v - e \text{ if } e<t.
		\]
		Then if $v < \Delta_n$, we have $\pi_i(v, A_i^{(n)}(v); A^{(n)}_{-i}) = \pi_i(v, 0; A^{(n)}_{-i}) = 0$, which implies that
		\[
		\pi_i(v, e; A^{(n)}_{-i}) - \pi_i(v, A_i^{(n)}(v); A^{(n)}_{-i}) \leq\left\{ \begin{array}{ll}
			(n^{-\alpha} + n^{-(1-\alpha)}) v - e \leq (n^{-\alpha} + n^{-(1-\alpha)})\Delta_n & \mbox{   if $e < t$ } \\
			v - e\leq 0 & \mbox{   if $e\geq t$ } \\
		\end{array} \right.  
		\]
		Otherwise if $v \geq \Delta_n$, we have $\pi_i(v, A_i^{(n)}(v); A^{(n)}_{-i}) = \pi_i(v, t; A^{(n)}_{-i}) = v - t$, which implies that
		\[
		\pi_i(u, v; A^{(n)}_{-i}) - \pi_i(u, A_i^{(n)}(v); A^{(n)}_{-i}) \leq\left\{ \begin{array}{ll}
			(n^{-\alpha} + n^{-(1-\alpha)}) v - e + t -v  & \mbox{   if $e < t$ } \\
			t-e\leq 0 & \mbox{   if $e\geq t$ } \\
		\end{array} \right.  
		\]
		Note that when $v \geq \Delta_n$ and $e < t$,
		\[
		(n^{-\alpha} + n^{-(1-\alpha)}) v - e + t -v \leq (n^{-\alpha} + n^{-(1-\alpha)})\Delta_n + t - \Delta_n.
		\]
		Overall, we conclude that the following inequality always holds:
		\[
		\pi_i(v, e; A^{(n)}_{-i}) - \pi_i(v, A_i^{(n)}(v); A^{(n)}_{-i}) \leq (n^{-\alpha} + n^{-(1-\alpha)})\Delta_n + t - \Delta_n = \eps_n.
		\]
		This verifies that $A^{(n)}$ is an $\eps_n$-NE policy for the two-group contest.
	\end{proof}
	
	\subsubsection{Completing the proof of Theorem \ref{thm:two_formal}}
	\label{sec:completing_thm}
	
	\begin{proofof}[of Theorem \ref{thm:two_formal}]
		Assume $p_a$ is a point mass at 0.
		We first prove that $\lim_{n\rightarrow \infty} s^{(n)} = s$.
		This is a direct corollary of the fact that $\lim_{n\rightarrow \infty} \Delta_n = t$.
		Consequently, for any $v\in \R_\geq 0$, there exists $n_v$ such that for any integer $n\geq n_v$, $s^{(n)}(v) = s(v)$ holds.
		
		By Lemma \ref{lm:eps_NE}, $A^{(n)}$ is a $\eps_n$-NE policy, where $\eps_n = (n^{-\alpha} + n^{-(1-\alpha)})\Delta_n + t - \Delta_n$.
		Since $\lim_{n\rightarrow \infty}\Delta_n = t$, we have
		\[
		\lim_{n\rightarrow \infty} \eps_n = \lim_{n\rightarrow \infty} (n^{-\alpha} + n^{-(1-\alpha)})\Delta_n + t - \Delta_n = 0,
		\]
		This completes the proof of Equation \eqref{eq:eps_to_0}.

		\paragraph{Uniqueness of \boldmath{$A$}.}
		To prove that $A$ is the unique NE, we first recall Corollary 3.2 of \cite{chawla2013auctions} that says that a subset of symmetric agents should have the same policy in an NE.
		Thus, assuming $A'$ is an NE policy for the two-group contest as $n\rightarrow \R_{\geq 0}$, all agents $i\in G_1$ use a common threshold policy $s_1$, and those in $G_2$ use $s_2$. 
		Suppose the threshold for $s_\ell$ is $t_\ell$.
		We next prove that $t_1 = t_2$, which implies that $s_1 = s_2$.
		When $t_1 > t_2$, if an agent puts in effort $t_1$, then it
		will get selected. 
		Thus, the probability of an agent in $G_1$ getting selected is $1 - F_1(t_1)$.
		Hence, if $1 - F_1(t_1) < \frac{c}{1-\alpha}$, fewer than $cn$ agents in $G_1$ get selected. 
		Thus, agents in $G_1$ getting selected will find that putting in effort slightly larger than $t_2$ instead of $t_1$ suffices to ensure their effort is larger than all agents in $G_2$, and consequently, they will still be selected.
		Through this reduction in effort, they can gain an additional payoff of $t_1 - t_2$, which violates the stability condition.
		A similar argument holds for $G_2$ when $1 - F_1(t_1) > \frac{c}{1-\alpha}$.
		Thus, $A'$ is not an NE when $t_1>t_2$.
		Similarly, we can prove that $A'$ is not an NE when $t_1<t_2$.
		Thus, we must have $t_1 = t_2 = t$ and $s_1 = s_2 = s$.

		If $(1-\alpha) F_1(t) + \alpha F_2(t) > 1-c$, then fewer than $cn$ agents put in a non-zero effort and get selected.
		Thus, an agent with valuation $t' < t$, has a willingness to put in an effort $\eps$ slightly larger than 0 instead of 0.
		Through this increase in effort, it can gain an additional payoff of $t'-\eps$.
		Thus, $A'$ is not an NE.
		Similarly, we can prove that $A'$ is not an NE if $(1-\alpha) F_1(t) + \alpha F_2(t) < 1-c$.
		Thus, for an NE policy, $t$ must be the solution of $(1-\alpha) F_1(v) + \alpha F_2(v) = 1-c$.
		By Lemma \ref{lm:unique}, the equation $(1-\alpha) F_1(v) + \alpha F_2(v) = 1-c$ has a unique solution.
		Thus, $A' = A$, which proves the uniqueness.
		
		\paragraph{Extension to general \boldmath{$p_a$}.}
		For general $p_a$, the solution $t$ of Equation \eqref{eq:A} represents a score that a $c$-fraction of agents can achieve without making their expected payoff negative ($v + a \geq t$). 
		The proof is almost identical to that when $p_a$ is a point mass at 0, except that the effort an agent with $v + a \geq t$ is willing to put in should be $s(v,a) = \max\{t-a, 0\}$ instead of $t$. 
		This is because $t$ represents the score that the agent aims to reach, rather than the effort itself.

		Overall, we complete the proof of Theorem \ref{thm:two_formal}.
	\end{proofof}
	
	\begin{remark}[\bf{Extension of Theorem \ref{thm:two_general}}]
		\label{remark:extension}
		Using the same proof technique, Theorem \ref{thm:two_general} can be extended to handle multiple groups and non-identical effort costs.
		Let $G_1, \ldots, G_m$ represent $m \geq 2$ groups, where $|G_\ell| = \alpha_\ell n$ and the valuation density for each group is $p_\ell$ over the domain $\Omega_\ell \subseteq \mathbb{R}_{\geq 0}$. Each $\alpha_\ell \in (0, 1)$ satisfies the condition $\sum_{\ell \in [m]} \alpha_\ell = 1$.
		Recall that $p_a$ represents the initial ability density over the domain $\Omega_a \subseteq \mathbb{R}_{\geq 0}$, and we introduce an additional effort cost density $p_\kappa$ over the domain $\Omega_\kappa \subseteq \mathbb{R}_{> 0}$. 
		Each agent $i \in [n]$ knows its type $\theta_i = (v_i, a_i, \kappa_i)$ and selects an effort level $e_i \geq 0$. 
		The agent’s score is given by $m(e_i + a_i)$, and their expected payoff is $P_i v_i - \kappa_i e_i$. 
		It is important to note that agents' costs of effort $\kappa_i$ may vary and affect only their expected payoff, not their score.
		
		\sloppy
		In this extended multi-group contest, for $\ell \in [m]$, we extend CDF $F_\ell$ to be $F_\ell(\zeta) = \Pr_{v\sim p_\ell, a\sim p_a, \kappa \sim p_\kappa}\left[\frac{v}{\kappa}+a\leq \zeta\right]$.
		Now suppose domains $\bigcup_{\ell\in [m]} \Omega_\ell$, $\Omega_a$, $\Omega_\kappa$ are connected and densities $p_\ell, p_a, p_\kappa$ are positive at any point of their own domains.
		Let $t$ be the unique solution to the equation $\sum_{\ell \in [m]} \alpha_\ell F_\ell(\zeta) = 1-c$.
		The infinite NE policy \eqref{eq:A} extends to:
		\[
		s(v,a) := 
		0  \mbox{ if $\frac{v}{\kappa}+a < t$  and  } 
		s(v) := \max\left\{t-a, 0\right\}  \mbox{ if $\frac{v}{\kappa}+a \geq t$.}
		\]
	\end{remark}
	
	\subsection{Comparing with a distributional two-group contest}
	\label{sec:distributional}
	
	Recall that the technical challenges for Theorem \ref{thm:two_general} are mainly caused by the asymmetric strategic environment across groups.
	To avoid asymmetry, one may consider the following variant of the two-group contest. 
	Note that for simplicity, we also assume that $p_a$ is a point mass at 0.

	\begin{definition}[\bf{Distributional two-group contest}]
		\label{def:distributional}
		Let $n\geq k\geq 1$ be integers, $\alpha \in (0,1)$, $\rho\in (0,1]$, and $p_\ell$ be a density supported on a domain $\Omega_\ell\subseteq \R_{\geq 0}$ for $\ell = 1,2$. 
		Let each agent $i\in [n]$ belong to $G_1$ with probability $1-\alpha$ and belong to $G_2$ with probability $\alpha$ independently.
		Let the valuation of each agent in $G_1$ be drawn i.i.d. from $p_1$, and the valuation of each agent in $G_2$ be drawn i.i.d. from $p_2$.
		Assume that each agent $i\in G_\ell$ ($\ell = 1,2$) knows $n_1$, $n_2$, $k$, $p_1$, $p_2$, the group it belongs to and its valuation, and has to choose a policy $A_i:\Omega_{\ell} \to \mathbb{R}_{\geq 0}$ to maximize its expected payoff.
		The goal of the distributional two-group contest is to compute the NE policy satisfying Equation \eqref{eq:NE}.
	\end{definition}
	
	\noindent
	The main difference from the two-group contest is that this distributional variant's group identity is random and the valuation density of each agent is identical, say $p = (1-\alpha)p_1 + \alpha p_2$.
	Thus, using a similar argument as in an undifferentiated contest, it is easy to verify that the NE policy of the distributional two-group contest is identical to that of an undifferentiated contest with density $p_1$.
	Consequently, in the infinite $n$ case, the NE policy of the distributional two-group contest is identical to that of the two-group contest, say $A$ in Theorem \ref{thm:two_general}.
	Then one may wonder whether this distributional two-group contest can also be used to simplify the proof of Theorem \ref{thm:two_general}, as the infinite contest does.
	Below, we show that this is not the case and discuss the essential differences between the two models.
	For simplicity, we call the two-group contest \emph{Model I} and call the distributional two-group contest \emph{Model II}.

	\paragraph{Model distinction.}
	Firstly, Model I itself is of relevant interest as it captures real-world examples in which group sizes are well understood, while in Model II the group sizes are random variables.

	\paragraph{Convergence distinction.}
	Though Model I and Model II share the same NE policy $A$ in the infinite case, we would like to clarify that our main convergence result (Theorem \ref{thm:two_formal}) cannot be inferred simply from knowing that the limit of the NEs of the two models is the same. 
	To put it in simplest terms, consider two sequences $a_1, \ldots, a_n, \ldots$ and $b_1, \ldots, b_n, \ldots$ that converge to the same limit point $z$. 
	The proof of convergence for the first sequence does not necessarily provide any information about the convergence of the second sequence. 
	Therefore, the convergence result for our model cannot be simply inferred from prior work.

	\paragraph{Analysis distinction.}
	Moreover, the analysis of Model II relies heavily on symmetric policies for all agents (enabled precisely by the fact that group sizes are random), allowing the use of order statistics of $p_1$. 
	In contrast, in Model I, we expect asymmetric policies across groups. 
	For example, consider a two-agent case with $k=1$: Agent 1's valuation follows a uniform distribution on $[0,0.5]$, while Agent 2's valuation follows a uniform distribution on $[0.5,1]$.
	In Section \ref{sec:example}, we show that the NE policy for this example must be asymmetric. 
	Any symmetric policy $A$ would result in a near-zero winning probability for Agent 1, leading to a negative expected payoff and implying that $A$ is not an NE. 
	This negative payoff arises from the asymmetric strategic environment faced by Agents 1 and 2, where the density of the highest valuation among other agents differs for each agent. 
	Consequently, the order of winning probabilities of agents ($P_i$) can differ from the order of valuations ($v_i$), posing a significant mathematical challenge for determining the NE. 
	E.g., for strategies $s_1$ and $s_2$, let $F_{s_\ell}(v)$ denote the cumulative distribution of efforts $s_\ell(v)$ when $v \sim p_\ell$. 
	The cumulative distribution of the $(k-1)$-th effort $e^\star$ from an agent in $G_1$ is then given by:
	
	\begin{align*}
		& \quad \Pr[e^\star \leq v] \\
		= & \quad \sum_{i=0}^{n-1} \sum_{j=(1-c)n-i}^{n-1-i}\binom{n-1}{i}\binom{n-1-i}{j}F_{s_1}(v)^i (1-F_{s_1}(v))^{(1-\alpha)n-1-i} F_{s_2}(v)^j (1-F_{s_2}(v))^{\alpha n-j}.
	\end{align*}
	
	Compare this to the expression for the symmetric case 
	$$\Pr[e^\star \leq v]=\sum_{i=(1-c)n}^{n-1} \binom{n-1}{i} F_{s}(v)^i (1-F_{s}(v))^{n - 1 - i}.$$
	Thus, the calculus and approximations for the expression for the two-group contest are significantly more difficult, making it much harder to arrive at the equilibrium policies than for Model II.
	
	\subsection{An alternative proof using an infinite contest}
	\label{sec:infinite}

	Recall that Theorem \ref{thm:two_general} studies the case of $n\rightarrow \infty$ for the two-group contest. 
	To increase the understanding of the infinite case, we propose the following infinite version of the two-group contest, where every real number in the interval $[0, 1-\alpha]$ corresponds to an agent in $G_1$ and in the interval $(1-\alpha, 1]$ corresponds to an agent in $G_2$.
	For simplicity, we still assume that $p_a$ is a point mass at 0.
	Formally, we provide the following definition.

	\begin{definition}[\bf{Infinite contest}]
		\label{def:infinite}
		Let $k\geq 1$ be integers, $\alpha \in (0,1)$, $\rho\in (0,1]$, and $p_\ell$ be a density supported on a domain $\Omega_\ell\subseteq \R_{\geq 0}$ for $\ell = 1,2$. 
		Let every real number in the interval $[0, 1-\alpha]$ correspond to an agent in group $G_1$, and in the interval $(1-\alpha, 1]$ correspond to an agent in group $G_2$.
		For $\ell\in \left\{1,2\right\}$, let the valuation of every agent in $G_\ell$ draw i.i.d. from $p_\ell$.
		Assume that each agent $i\in G_\ell$ ($\ell = 1,2$) knows $\alpha$, $k$, $p_1$, $p_2$, the group it belongs to and its valuation, and has to choose a policy $A_i:\Omega_{\ell} \to \mathbb{R}_{\geq 0}$ to maximize its expected payoff.
	\end{definition}
	
	\noindent
	There are countless agents in this infinite contest.
	Also, note that $G_1$ contains $(1-\alpha)$-fraction of agents while $G_2$ contains the remaining $\alpha$-fraction.
	Below, we show how to use this infinite contest to hypothesize the NE policy $A$ for the two-group contest defined in Theorem \ref{thm:two_general}.
	It mainly consists of two steps: Showing that two-group contests converge to the infinite contest as $n\rightarrow \infty$ and computing NE for the infinite contest.

	\paragraph{Showing two-group contests converge to the infinite contest as \boldmath{$n\rightarrow \infty$}.}
	We first show that the infinite contest is the limit of two-group contests. 
	Let $g_n$ denote a two-group contest as defined in the two-group contest with an NE policy $A_n$.
	Let $g$ denote the infinite game as defined in Problem \ref{def:infinite}.
	We view $g_n$ as a collection of $n$ density functions of agents, with the $i$-th agent represented by the real number $\frac{i-1}{n-1}$. 
	Agent $i$ belongs to group $G_1$ if $1 \leq i \leq (1-\alpha)n$ and to group $G_2$ otherwise. 
	From this viewpoint, we propose the following theorem.

	\begin{theorem}[\bf{Two-group contests converge to the infinite contest}]
		\label{thm:infinite_converge}
		$g_n$ converges to $g$ in the following sense: 
		For any $\varepsilon > 0$, there exists a sufficiently large $n_0$ such that for all $n \geq n_0$,
		\begin{itemize}
			\item For any $t \in [0, 1-\alpha]$, $|\int_{0}^{t} dx - \frac{|\left\{i\in G_1: \frac{i-1}{n-1}\leq t\right\}|}{n}|\leq \eps$, i.e., the difference in the fraction of agents in $G_1$ associated with real number at most $t$ between $g$ and $g_n$, is at most $\eps$.
			\item For any $t \in (1-\alpha, 1]$, $|\int_{t}^{1} dx - \frac{|\left\{i\in G_2: \frac{i-1}{n-1}\geq t\right\}|}{n}|\leq \eps$, i.e., the difference in the fraction of agents in $G_2$ associated with real number at least $t$ between $g$ and $g_n$, is at most $\eps$.
		\end{itemize}
	\end{theorem}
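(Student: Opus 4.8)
The plan is to recognize that Theorem~\ref{thm:infinite_converge} is purely a deterministic statement about the agent positions $x_i := \frac{i-1}{n-1}$, which are $n$ equally spaced points in $[0,1]$; the random valuations play no role here. The two displayed integrals $\int_0^t dx = t$ and $\int_t^1 dx = 1-t$ are simply the Lebesgue measures of the relevant sub-intervals of $G_1 = [0,1-\alpha]$ and $G_2 = (1-\alpha,1]$ in the infinite contest, so the claim reduces to a discrepancy estimate: writing $N_1(t) := |\{i\in G_1 : x_i \le t\}|$ and $N_2(t) := |\{i\in G_2 : x_i \ge t\}|$, I must show that $|N_1(t)/n - t|$ and $|N_2(t)/n - (1-t)|$ are $O(1/n)$ uniformly in $t$.

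First I would handle the $G_1$ bullet. Since $x_i$ is increasing in $i$ and $G_1 = \{1,\dots,(1-\alpha)n\}$ is a prefix of the index set, the membership condition $i \le (1-\alpha)n$ and the position condition $x_i \le t$ (equivalently $i \le \lfloor t(n-1)+1\rfloor$) both cut out prefixes, so
\[
N_1(t) = \min\bigl(\lfloor t(n-1)+1\rfloor,\ (1-\alpha)n\bigr).
\]
I would then split into two cases. If $\lfloor t(n-1)+1\rfloor \le (1-\alpha)n$, then $N_1(t) = \lfloor t(n-1)+1\rfloor$, and the elementary bounds $t(n-1) < \lfloor t(n-1)+1\rfloor \le t(n-1)+1$ give $|N_1(t)/n - t| \le 1/n$. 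If instead the group constraint binds, then $N_1(t) = (1-\alpha)n$, and the binding condition $t(n-1)+1 \ge (1-\alpha)n$ confines $t$ to within $\tfrac{\alpha}{n-1}$ of $1-\alpha$, whence $|N_1(t)/n - t| = (1-\alpha)-t \le \tfrac{\alpha}{n-1}$. Either way the discrepancy is at most $1/(n-1)$, uniformly in $t\in[0,1-\alpha]$.

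The $G_2$ bullet is symmetric: $G_2=\{(1-\alpha)n+1,\dots,n\}$ is a suffix and $\{i : x_i \ge t\}$ is a suffix, so $N_2(t) = n - \lceil t(n-1)+1\rceil + 1$, provided one first verifies that for every $t > 1-\alpha$ the position threshold dominates the group boundary, i.e.\ $t(n-1)+1 > (1-\alpha)n$; this holds for all $n$ because the equivalent inequality $n(t-1+\alpha) > t-1$ has a positive left-hand side and a nonpositive right-hand side. The same floor/ceiling bookkeeping then yields $|N_2(t)/n - (1-t)| \le 1/n$. Finally, given $\varepsilon > 0$, taking $n_0 := \lceil 1/\varepsilon\rceil + 1$ forces $1/(n-1) \le \varepsilon$ for all $n \ge n_0$, so both discrepancies are at most $\varepsilon$, which is exactly the asserted convergence.

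The argument is essentially routine, so I do not anticipate a genuine obstacle; the one point demanding care is the interaction of the group-membership boundary with the position threshold near $t = 1-\alpha$, which is precisely why $N_1(t)$ is written as a $\min$ (and $N_2(t)$ as a suffix length) and split into binding and non-binding cases. A secondary bookkeeping subtlety is the tacit assumption that $(1-\alpha)n$ is an integer; if it is not, one replaces it by $\lfloor(1-\alpha)n\rfloor$ throughout, which perturbs each count by $O(1)$ and each fraction by $O(1/n)$, leaving the conclusion intact.
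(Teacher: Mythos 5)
Your proof is correct and follows essentially the same route as the paper's: both reduce the claim to a deterministic floor/ceiling count of the equally spaced points $\frac{i-1}{n-1}$, bound the discrepancy by $O(1/n)$ uniformly in $t$, and take $n_0 \approx 1/\eps$. The only difference is cosmetic --- you carry the $\min$ with the group boundary and write out the $G_2$ suffix count explicitly, whereas the paper observes that for $t \le 1-\alpha$ the count is exactly $\lfloor (n-1)t+1\rfloor$ (the boundary never binds when $(1-\alpha)n$ is an integer) and treats the $G_2$ bullet ``similarly.''
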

	
	\noindent
	Note that the agents in $g_n$ can be captured by a uniform distribution $\mu_n$ over real numbers $\frac{i-1}{n-1}$ ($i\in [n]$).
	The convergence conditions in the theorem state that the limit of $\mu_n$ is the uniform density $\mu$ over $[0,1]$, where $\mu$ represents the density of agents in $g$.
	
	\begin{proof}[Proof of Theorem \ref{thm:infinite_converge}]
		Let $n_0 = \lceil \eps^{-1} \rceil$.
		Then we have $n\geq \eps^{-1}$.
		For any $t \in [0, 1-\alpha]$, we have
		\[
		\left|\int_{0}^{t} dx - \frac{|\left\{i\in G_1: \frac{i-1}{n-1}\leq t\right\}|}{n}\right| = \left|t - \frac{\lfloor (n-1)t + 1\rfloor}{n}\right|
		\]
		and
		\[
		t - \frac{t}{n} \leq \frac{\lfloor (n-1)t + 1\rfloor}{n} \leq t + \frac{1-t}{n}.
		\]
		We conclude that 
		\[
		\left|\int_{0}^{t} dx - \frac{|\left\{i\in G_1: \frac{i-1}{n-1}\leq t\right\}|}{n}\right| \leq \max\left\{\frac{t}{n}, \frac{1-t}{n}\right\} \stackrel{t\in [0,1-\alpha]}{\leq} \frac{1}{n} \stackrel{n\geq \eps^{-1}}{\leq} \eps.
		\]
		Similarly, for any $t \in (1-\alpha, 1]$, we can prove that $\left|\int_{t}^{1} dx - \frac{|\left\{i\in G_2: \frac{i-1}{n-1}\geq t\right\}|}{n}\right|\leq \eps$.
		This completes the proof of Theorem \ref{thm:infinite_converge}.
	\end{proof}
	
	\paragraph{Computing NE for the infinite contest.}
	It follows from Theorem \ref{thm:infinite_converge} that the limit of the two-group contests $g_n$ is the infinite contest $g$.
	Then, assuming the NE policy of $g_n$ is $A^{(n)}$, we can infer that the limit of $A^{(n)}$ is the NE policy of the infinite contest.
	Thus, to hypothesize the NE policy for $g_n$ as $n\rightarrow \infty$, it suffices to compute the NE policy for the infinite contest.

	We first observe that the strategic environment for all agents in the infinite contest is the same, i.e., the probability that a given valuation $v$ is among the top $c$-fraction is the same for all agents.
	This property reduces the infinite contest to an undifferentiated contest (except for the different domains of valuation densities), leading to a symmetric NE policy. 
	Formally, we provide the following lemma that shows that $A$ is exactly the unique NE policy for the infinite contest.

	\begin{lemma}[\bf{The infinite contest}]
		\label{lm:infinite}
		Suppose $\Omega_1\cup \Omega_2$ is connected and each density $p_{\ell}$ is positive at any point of domain $\Omega_{\ell}$.
		Then policy $A$ defined in Equation \eqref{eq:A} is the unique NE policy for the infinite contest.
	\end{lemma}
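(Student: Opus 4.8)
The plan is to exploit the defining feature of the infinite contest: selection is anonymous (it depends only on the ranking of efforts) and each agent is infinitesimal, so a single agent cannot move the aggregate. Hence, once all other agents fix a policy, the empirical effort distribution over the continuum is \emph{deterministic} (an exact law of large numbers, since valuations are i.i.d.), and every agent faces the \emph{same} winning-probability function $P(\cdot)$. Because a fraction exactly $c$ is selected by a hard top-$c$ threshold applied to a \emph{deterministic} distribution, $P$ is essentially a step function: writing $e^\star$ for the $(1-c)$-quantile of the aggregate effort distribution, $P(e)=1$ for $e>e^\star$, $P(e)=0$ for $e<e^\star$, with $P(e^\star)$ given by the rationed value (capacity $c$ divided by the mass sitting at $e^\star$). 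This sharp step structure is exactly what collapses the analysis to one dimension, as in the undifferentiated contest.

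To verify that $A$ is an NE, I would first compute the aggregate distribution it induces. Under $A$ an agent exerts effort $t$ precisely when $v\ge t$ and $0$ otherwise, so the mass at effort $t$ is $(1-\alpha)(1-F_1(t))+\alpha(1-F_2(t))=1-\big[(1-\alpha)F_1(t)+\alpha F_2(t)\big]=c$, using that $t$ solves Equation \eqref{eq:key}, while the remaining mass $1-c$ sits at $0$. Thus the mass at $t$ exactly fills the capacity, giving $P(e)=\mathbb{I}[e\ge t]$ with no rationing. A best-responding agent with valuation $v$ then chooses between not competing (payoff $0$) and competing at the minimal winning effort $t$ (payoff $v-t$), and prefers the latter iff $v\ge t$; exceeding $t$ is wasteful and undercutting $t$ never wins. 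This best response is exactly $A$, so $A$ is an NE.

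For uniqueness, let $A'$ be any NE. Since $P(e)=0$ for $e<e^\star$, no agent ever uses an effort in $(0,e^\star)$ — it is dominated by exerting $0$ — so the effort distribution concentrates on $\{0\}$ together with $[e^\star,\infty)$, and any surely-winning competitor strictly prefers the smallest such effort, so competing agents pool at a single threshold $\tau$. Writing $q:=(1-\alpha)(1-F_1(\tau))+\alpha(1-F_2(\tau))$ for the competing mass, I would rule out $q\neq c$. If $q<c$, the unfilled capacity $c-q$ is awarded to effort-$0$ agents, who then enjoy strictly positive selection for free, so a marginal competitor (valuation just above $\tau$) strictly prefers to stop competing, contradicting equilibrium. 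If $q>c$, competing agents are rationed at $\tau$ (selected with probability $c/q<1$), so any one of them — being measure zero — gains by raising effort infinitesimally above $\tau$ to win with certainty, again a profitable deviation. Hence $q=c$, i.e.\ $(1-\alpha)F_1(\tau)+\alpha F_2(\tau)=1-c$, which by Lemma \ref{lm:unique} forces $\tau=t$; therefore $A'=A$.

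The main obstacle is making the continuum arguments rigorous: justifying the deterministic aggregate (the exact law of large numbers over a continuum of i.i.d.\ types) and, above all, treating the atom and rationing at the threshold $e^\star$ with care, since the entire equilibrium logic hinges on the behavior exactly at $\tau$ and on ruling out efforts spreading above it. The hypotheses also enter precisely here: positivity of each $p_\ell$ together with connectedness of $\Omega_1\cup\Omega_2$ make the aggregate CDF $(1-\alpha)F_1+\alpha F_2$ continuous and strictly increasing, which is what guarantees the competing mass equals $c$ at a single threshold and lets Lemma \ref{lm:unique} deliver uniqueness.
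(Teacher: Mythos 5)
Your proposal is correct in substance and reaches the same conclusion, but by a genuinely more self-contained route than the paper. The paper's proof computes each agent's winning probability as the $n\to\infty$ limit of the binomial sum $\sum_{i=(1-c)n}^{n-1}\binom{n-1}{i}F(v)^i(1-F(v))^{n-1-i}$ built from the mixture CDF $F=(1-\alpha)F_1+\alpha F_2$, observes that it is a step function at $t$, then invokes Corollary 3.2 of \cite{chawla2013auctions} to restrict attention to symmetric policies, and finally recycles the equilibrium-condition computation of Equation \eqref{eq:undifferentiated_general} from the undifferentiated contest to pin down the threshold form. You instead work directly on the continuum: the aggregate effort distribution is deterministic, the competing mass under $A$ is exactly $c$ by Equation \eqref{eq:key}, so $P(e)=\mathbb{I}[e\ge t]$ and a one-line best-response comparison verifies the equilibrium; your uniqueness argument (competitors must pool at a single atom, the atom's mass must clear capacity exactly, then Lemma \ref{lm:unique}) is more explicit than the paper's, which leans on the external symmetry corollary and the earlier ODE-style argument. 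Your route buys independence from both of those ingredients; the paper's buys brevity and consistency with the rest of its Section 6. Both share the unresolved foundational issue you flag (a rigorous exact law of large numbers for the continuum aggregate), which the paper does not treat rigorously either.

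Two steps in your uniqueness argument should be tightened. First, in the case $q<c$ you assert that the marginal competitor \emph{strictly} prefers to stop competing; that is not forced, since indifference $w-\tau=w(c-q)/(1-q)$ at the marginal valuation $w$ is perfectly consistent with equilibrium conditions at the valuation margin. The actual contradiction when $q<c$ is undercutting: with total competing mass below $c$, \emph{every} positive effort wins with certainty, so a competitor at $\tau>0$ profitably deviates to $\tau/2$, and there is no smallest surely-winning positive effort; the all-zero profile is separately ruled out because any agent with $v>0$ gains from an arbitrarily small positive bid (such agents exist since the densities are positive). Second, your formula $q=(1-\alpha)(1-F_1(\tau))+\alpha(1-F_2(\tau))$ implicitly identifies the valuation cutoff for competing with the effort level $\tau$; a priori these differ in a candidate equilibrium (the cutoff is $\tau/(P(\tau)-P(0))$), and they coincide only after you have established $P(\tau)=1$ and $P(0)=0$, i.e., $q=c$. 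So define $q$ abstractly as the competing mass, rule out $q\neq c$, and only then conclude that the cutoff equals $\tau$ and that $(1-\alpha)F_1(\tau)+\alpha F_2(\tau)=1-c$, whence $\tau=t$ by Lemma \ref{lm:unique}. With these repairs your argument is complete.
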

	
	\begin{proof}
		We first note that for any agent (whether in $G_1$ or $G_2$), the probability that a given valuation $v$ is among the top $c$-fraction is given by:
		$$
		p_1(v) := \lim_{n\rightarrow \infty} \sum_{i=(1-c)n}^{n-1} \binom{n-1}{i} F(v)^i (1-F(v))^{n-1-i},
		$$
		where $F$ is the CDF of the joint density $(1-\alpha)p_1 + \alpha p_2$. 
		Recall that $t$ is the unique solution to the equation 
		\[
		F(v) = (1-\alpha) F_1(v) + \alpha F_2(v) = 1-c,
		\]
		i.e., $t = F^{-1}(1-c)$.
		Then through a straightforward calculation, it follows that $p_1(v) = 1$ for $v > t$ and $p_1(v) = 0$ for $v < t$. 
		Since the winning probability function $p_1$ is identical for all agents, the strategic environment for all agents in the infinite contest is the same.
		Recall that by Corollary 3.2 of \cite{chawla2013auctions}, symmetric agents will use a symmetric policy in an NE.
		Thus, we can assume an increasing symmetric policy $s: \Omega_1 \cup \Omega_2 \rightarrow \mathbb{R}_{\geq 0}$ for all agents. 

		By the equilibrium condition, for any valuation $v$ and effort $e$,
		\[
		p_1(v) v - s(v) \geq P(s^{-1}(e)) v - e.
		\]
		By a similar argument as for Equation \eqref{eq:undifferentiated_general} (undifferentiated contest), we can compute that $s(v) = t$ for $v > t$ and $s(v) = 0$ for $v<t$.
		This turns out to be the threshold function defined in Equation \eqref{eq:A}. 
		Thus, the policy $A$, where each agent restricts $s$ to its valuation domain, is indeed the unique NE for the infinite contest.
		This completes the proof of Lemma \ref{lm:infinite}.
	\end{proof}

	\paragraph{Using the infinite game to provide an alternative proof of Theorem \ref{thm:two_formal}.}
	As shown in Lemma \ref{lm:infinite}, instead of relying on observations from the finite case as in Section \ref{sec:conjecture}, we can use this infinite contest to hypothesize the NE policy $A$ for the two-group contest in the infinite $n$ case. 

	Once we have a solid guess for the NE policy $A$ using the infinite contest, we need to show that $A$ remains an NE as $n \rightarrow \infty$. 
	While this approach simplifies the initial hypothesis, the challenge remains in proving that $A$ is indeed an NE. 
	Similar to our current proof of Theorem \ref{thm:two_general}, we must construct a series of proxies that converge to $A$ and increasingly approximate an NE policy. 
	As detailed in Section \ref{sec:two_formal}, this step remains technically challenging.

	Overall, using the infinite contest could provide an alternative proof of Theorem \ref{thm:two_formal}.
	Moreover, the symmetric strategic environment of this infinite contest can provide deeper insights into why the NE policy remains symmetric, even when valuations are asymmetric across groups.
	%
	
	%%%%%%%%%%%%%%%%%%%%%%%%%%%%%%%%%%%%%%%%%%%%%%%%%%%%%%%%%%%
	\section{Omitted details for uniform distribution analysis from Sections \ref{sec:result_bias} and \ref{sec:analysis}}
	\label{sec:analysis_missing}

	\begin{restatable}[\bf Restatement of Theorem \ref{thm:metrics}]{theorem}{thm:metrics_restatement}
		Assume $p_2(v) = \frac{1}{\rho} p_1\left(\frac{v}{\rho}\right)$ for some $\rho\in (0,1]$ and $p_a$ is a mass point at 0.
		Let policy $A$ be defined as in Theorem \ref{thm:two_general}, characterized by $t$ being the unique solution of Equation \eqref{eq:t_bias}.
		Then for any density $p_1$,
		\[ 
		r_{\mathcal{R}}(A) = \frac{1 - F_1(t/\rho)}{1 - F_1(t)}, \ r_{\mathcal{S}}(A) = \frac{\rho \int_{t/\rho}^{\infty} (v - t/\rho) p_2(v) d v}{\int_{t}^{\infty} (v - t) p_1(v) d v}, \text{ and } \mathcal{RV}(A,m) = m(t).
		\]
		Moreover, $r_\mathcal{R}(A)$ is monotonically increasing w.r.t. $\rho$, $c$, and $\alpha$, while $\mathcal{RV}(A,m)$ is monotonically increasing w.r.t. $\rho$ and monotonically decreasing w.r.t. $c$ and $\alpha$, for any merit function $m$.
	\end{restatable}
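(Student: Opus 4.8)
The plan is to first turn the three metrics---each an expectation of a random quantity---into deterministic limiting values, and then read off the closed forms from the threshold structure of $A$. Since $p_a$ is a point mass at $0$, under $A$ every agent with valuation $v\ge t$ exerts effort $t$ and attains score exactly $t$, while agents with $v<t$ exert nothing; by Theorem~\ref{thm:two_general} a $c$-fraction is selected in the limit and all selected agents share the common score $t$, which immediately gives $\mathcal{RV}(A,m)=m(t)$ with no averaging subtlety. For the two ratios I would invoke the same concentration used in Lemma~\ref{lm:winning_prob} to show that the random quantities $\mathcal{R}_\ell(A)$ and $\mathcal{S}_\ell(A)$ concentrate on $1-F_\ell(t)$ and $\int_t^\infty(v-t)p_\ell(v)\,dv$ respectively. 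Using $F_2(v)=F_1(v/\rho)$ together with $\rho\le 1$ (hence $t/\rho\ge t$) gives $\mathcal{R}_2=1-F_1(t/\rho)\le 1-F_1(t)=\mathcal{R}_1$, so the inner $\min$ defining $r_{\mathcal{R}}$ resolves deterministically to $\mathcal{R}_2/\mathcal{R}_1$, and bounded convergence (the ratio lies in $[0,1]$) passes the expectation to the limit, yielding $r_{\mathcal{R}}(A)=(1-F_1(t/\rho))/(1-F_1(t))$. The welfare ratio is handled identically, with the extra step of applying the change of variables $w=v/\rho$ and the scaling relation $p_2(v)=\rho^{-1}p_1(v/\rho)$ to rewrite $\mathcal{S}_2=\int_t^\infty(v-t)p_2(v)\,dv$ in the scaled form appearing in the numerator of the stated expression.

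The monotonicity claims I would drive entirely through implicit differentiation of the defining relation $\Phi(t,\rho,c,\alpha):=(1-\alpha)F_1(t)+\alpha F_1(t/\rho)-(1-c)=0$. Its $t$-derivative $(1-\alpha)p_1(t)+\alpha\rho^{-1}p_1(t/\rho)$ is strictly positive, so the implicit function theorem applies and $dt/dx$ has sign opposite to $\partial\Phi/\partial x$. Computing $\partial\Phi/\partial c=1>0$, $\partial\Phi/\partial\alpha=F_1(t/\rho)-F_1(t)\ge 0$, and $\partial\Phi/\partial\rho=-\alpha t\rho^{-2}p_1(t/\rho)\le 0$ shows that $t$ decreases in $c$ and $\alpha$ and increases in $\rho$; since $m$ is strictly increasing, $\mathcal{RV}=m(t)$ inherits exactly the stated monotonicities. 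For $r_{\mathcal{R}}$ in $\rho$ I would exploit the conserved identity $(1-\alpha)\mathcal{R}_1+\alpha\mathcal{R}_2=c$ (which follows from the defining equation since $(1-\alpha)(1-F_1(t))+\alpha(1-F_2(t))=1-(1-c)=c$): it then suffices to show $\mathcal{R}_2=1-F_1(t/\rho)$ increases in $\rho$, because then $\mathcal{R}_1=(c-\alpha\mathcal{R}_2)/(1-\alpha)$ decreases and the ratio $\mathcal{R}_2/\mathcal{R}_1$ increases. This reduces to $t/\rho$ being decreasing in $\rho$, i.e.\ to $dt/d\rho<t/\rho$, which holds because the implicit-differentiation quotient equals $t/\rho$ times $\alpha\rho^{-1}p_1(t/\rho)/\bigl((1-\alpha)p_1(t)+\alpha\rho^{-1}p_1(t/\rho)\bigr)<1$. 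An analogous argument gives $r_{\mathcal{S}}$ increasing in $\rho$: $\mathcal{S}_1=\int_t^\infty(v-t)p_1(v)\,dv$ decreases as $t$ grows with $\rho$, while $\mathcal{S}_2$ increases because both its prefactor $\rho$ grows and its lower limit $t/\rho$ falls.

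I expect the delicate step to be the monotonicity of $r_{\mathcal{R}}$ (and $r_{\mathcal{S}}$) with respect to $c$ and $\alpha$. Writing $R(t):=(1-F_1(t/\rho))/(1-F_1(t))$, both the numerator and the denominator move in the same direction as $t$, so after substituting $dt/dc<0$ and $dt/d\alpha<0$ the sign of the derivative is governed by $dR/dt$, whose numerator is $p_1(t)(1-F_1(t/\rho))-\rho^{-1}p_1(t/\rho)(1-F_1(t))$. This is a hazard-rate-type comparison that is \emph{not} settled by the scaling structure alone: I would verify $dR/dt\le 0$ for the densities of interest---for $p_1$ uniform on $[0,1]$ the numerator equals $(\rho-1)/\rho\le 0$, recovering Proposition~\ref{prop:uniform}---which is precisely the density-dependent behaviour flagged in Remark~\ref{remark:rS_monotone}. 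Pinning down this sign (and exhibiting where it can fail) for general $p_1$, rather than the closed-form derivations, is the crux of the argument.
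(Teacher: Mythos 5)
Your proposal is correct and, for everything the paper actually proves, follows essentially the same route as the paper's own proof: the threshold structure plus concentration (Chernoff/LLN) gives the limiting values $\mathcal{R}_\ell \to 1-F_\ell(t)$ and $\mathcal{S}_\ell \to \int_t^\infty (v-t)\,p_\ell(v)\,dv$, the scaling relation $F_2(\cdot)=F_1(\cdot/\rho)$ resolves the $\min$ toward the $G_2/G_1$ ratio, $\mathcal{RV}(A,m)=m(t)$ is immediate because every selected agent scores exactly $t$, and the $\rho$-monotonicity of the ratios rests on the same two facts the paper uses, namely the conserved identity $(1-\alpha)(1-F_1(t))+\alpha(1-F_2(t))=c$ and the fact that $t/\rho$ decreases in $\rho$. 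Your use of implicit differentiation of $(1-\alpha)F_1(t)+\alpha F_1(t/\rho)=1-c$ where the paper argues qualitatively is a cosmetic difference; incidentally, your version of the $r_{\mathcal{S}}$ argument silently corrects a sign slip in the paper's proof, which asserts that $g(t)$ is increasing in $\rho$ when it is in fact decreasing (and it is the decrease of the denominator that makes the ratio increase).

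Your third paragraph flags a real issue, but it is an issue with the statement rather than a gap in your argument. This restatement claims that $r_{\mathcal{R}}(A)$ is increasing in $c$ and $\alpha$ for \emph{any} density $p_1$; the paper's own proof never establishes this (it proves only $\rho$-monotonicity for $r_{\mathcal{R}}$ and $r_{\mathcal{S}}$, together with the $\mathcal{RV}$ claims), the main-text version of Theorem~\ref{thm:metrics} asserts only $\rho$-monotonicity for the two ratios, and Remark~\ref{remark:rS_monotone} exhibits a density for which monotonicity in $c$ and $\alpha$ fails. Your hazard-rate computation of the sign of $dR/dt$, with the uniform check $p_1(t)\bigl(1-F_1(t/\rho)\bigr)-\rho^{-1}p_1(t/\rho)\bigl(1-F_1(t)\bigr)=(\rho-1)/\rho\le 0$ recovering Proposition~\ref{prop:uniform}, pinpoints exactly where the density dependence enters, which is precisely the content of that remark. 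So declining to prove the $c,\alpha$ clause in general is the right call: that clause is an inconsistency in the restatement itself, not something your proof (or the paper's) could or should deliver.
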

	
	\begin{proof}
		We discuss three metrics separately.
		
		\paragraph{Metric \boldmath{$\mathcal{RV}(A, m)$}.}
		Recall that $A$ is a threshold function characterized by $t$.
		Thus, agents with the sum of valuation and initial ability $v+a > t$ get spots.
		Since $p_a$ is a point mass at 0, we know that the score of each selected agent is exactly $t$.
		Thus, the average revenue $\mathcal{RV}(A, m) = m(t)$.

		Next, we prove the monotonicity of $\mathcal{RV}(A, m)$.
		Since $m$ is monotonically increasing, we only need to prove the monotonicity of $t$ with respect to $\rho, c, \alpha$.
		Recall that $t$ is the solution of Equation \eqref{eq:t_bias}, which can be rewritten as
		\[
		(1-\alpha) F_1(\zeta) + \alpha F_1(\frac{\zeta}{\rho}) = 1 - c.
		\]
		Let $f(\zeta) = (1-\alpha) F_1(\zeta) + \alpha F_1(\frac{\zeta}{\rho}) + c$.
		Then $t$ is the solution of $f(\zeta) = 1$.

		Since $F_1(\frac{\zeta}{\rho})$ is a monotonically decreasing function of $\rho$, we know that $f(\zeta)$ is also a monotonically decreasing function of $\rho$.
		Thus, the solution $t$ increases with $\rho$.
		Since $F_a(\zeta - \rho v) \geq F_a(\zeta - v)$, $f(\zeta)$ is an increasing function of $\alpha$.
		Also note that $f(\zeta)$ is an increasing function of $c$.
		Thus, as $c$ or $\alpha$ increase, solution $t$ decreases.

		Overall, we prove that $\mathcal{RV}(A,m)$ is monotonically increasing w.r.t. $\rho$ and monotonically decreasing w.r.t. $c$ and $\alpha$, for any merit function $m$.
		
		\paragraph{Metric \boldmath{$r_{\mathcal{R}}(A)$}.}
		Recall that $F_\ell$ is a cumulative density function (CDF) of the sum of valuation and initial ability such that for any $\zeta\in \R_{\geq 0}$, $F_\ell(\zeta) = \Pr_{v\sim p_\ell, a\sim p_a}\left[v+a\leq \zeta\right]$.
		Thus, $F_\ell$ is the CDF of $p_\ell$ when $p_a$ is a point mass at 0.
		Then the linearity of expectation yields: 
		\[
		\mathbb{E}_{v_i, a_i}[|S\cap G_1|] = (1 - \alpha) n (1 - F_1(t)) \text{ and } \mathbb{E}_{v_i, a_i}[|S\cap G_2|] = \alpha n (1 - F_2(t)).
		\]
		This translates to:
		\[
		\mathbb{E}[\mathcal{R}_1(A)] = \frac{\mathbb{E}_{v_i, a_i}[|S\cap G_1|]}{(1 - \alpha) n} = 1 - F_1(t)
		\text{ and } 
		\mathbb{E}[\mathcal{R}_2(A)] = \frac{\mathbb{E}_{v_i, a_i}[|S\cap G_2|]}{\alpha n} = 1 - F_2(t).
		\]
		Since $p_2(v) = \frac{1}{\rho} p_1\left(\frac{v}{\rho}\right)$, we know that $F_2(t) = F_1(t/\rho)$ and $F_1(t) \leq F_2(t)$.
		Thus, we have $\mathbb{E}[\mathcal{R}_2(A)] \leq \mathbb{E}[\mathcal{R}_1(A)]$.

		As $n\rightarrow \infty$, it suffices to prove that
		\[
		r_{\mathcal{R}}(A) \rightarrow \min\left\{\frac{\mathbb{E}[\mathcal{R}_1(A)]}{\mathbb{E}[\mathcal{R}_2(A)]}, \frac{\mathbb{E}[\mathcal{R}_2(A)]}{\mathbb{E}[\mathcal{R}_1(A)]}\right\} = \frac{1 - F_2(\sol)}{1 - F_1(\sol)} = \frac{1 - F_1(\sol/\rho)}{1 - F_1(\sol)}.
		\]
		We first note that $|S\cap G_\ell|$ is highly concentrated at $\mathbb{E}_{v_i, a_i}[|S\cap G_1|]$ since all agents in $G_\ell$ are i.i.d.
		Concretely, the following inequality holds for any $t > 0$ by the Chernoff bound:
		\[
		\Pr\left[\left||S\cap G_\ell| - \mathbb{E}_{v_i, a_i}[|S\cap G_\ell|] \right| \geq t\cdot \mathbb{E}_{v_i, a_i}[|S\cap G_\ell|]\right] \leq 2 e^{-\frac{t^2\cdot \mathbb{E}_{v_i, a_i}[|S\cap G_\ell|]}{3}}.
		\]
		Hence, for $t = o(\sqrt{\frac{1}{n}})$, we have $\Pr\left[\left||S\cap G_\ell| - \mathbb{E}_{v_i, a_i}[|S\cap G_\ell|] \right| \geq t\cdot \mathbb{E}_{v_i, a_i}[|S\cap G_\ell|]\right] \rightarrow 0$.
		This implies that as $n\rightarrow \infty$,
		\begin{align*}
			r_{\mathcal{R}}(A) = \mathbb{E}_{v_i, a_i} \left[\min\left\{\frac{\mathcal{R}_2 (A)}{\mathcal{R}_1 (A)}, \frac{\mathcal{R}_1 (A)}{\mathcal{R}_2 (A)}\right\}\right]
			\rightarrow \min\left\{\frac{\mathbb{E}[\mathcal{R}_1(A)]}{\mathbb{E}[\mathcal{R}_2(A)]}, \frac{\mathbb{E}[\mathcal{R}_2(A)]}{\mathbb{E}[\mathcal{R}_1(A)]}\right\},
		\end{align*}
		which completes the proof of the formula of $r_{\mathcal{R}}(A)$.

		Next, we prove the monotonicity of $r_{\mathcal{R}}(A)$ with respect to $\rho$.
		Recall that $t$ is monotonically increasing with $\rho$.
		We know that $1 - F_1(t)$ is a monotonically decreasing function of $\rho$.
		Since $1 - F_2(t) = \frac{1 - c - (1-\alpha)(1 - F_1(t))}{\alpha}$, we know that $1 - F_2(t)$ is monotonically increasing with $\rho$.
		Thus, $r_{\mathcal{R}}(A) = \frac{1 - F_1(t/\rho)}{1 - F_1(t)}$ is an increasing function of $\rho$.

		\paragraph{Metric \boldmath{$r_{\mathcal{S}}(A)$}.}
		By a similar argument as for $r_{\mathcal{R}}(A)$, we first have that as $n\rightarrow \infty$,
		\[
		r_{\mathcal{S}}(A) \rightarrow \min\left\{\frac{\E[\mathcal{S}_1(A)]}{\E[\mathcal{S}_2(A)]}, \frac{\E[\mathcal{S}_2(A)]}{\E[\mathcal{S}_1(A)]} \right\}. 
		\]
		Also note that 
		\[
		\E[\mathcal{S}_\ell(A)] = \E[\frac{1}{|G_\ell|} \sum_{i \in G_\ell} (\mathbb{I}(i\text{ selected})\cdot v_i - e_i)] = \int_t^{\infty} (v - t) p_\ell(v) d v.
		\]
		Then we have
		\[
		\E[\mathcal{S}_2(A)] = \frac{1}{\rho}\int_t^{\infty} (v - t) p_1(\frac{v}{\rho}) d v = \int_{\frac{t}{\rho}}^{\infty} (\rho v - t) p_1(v) d v \leq \E[\mathcal{S}_1(A)].
		\]
		Combining the above all, we obtain that $r_{\mathcal{S}}(A) = \frac{\int_{t}^{\infty} (v - t) p_2(v) d v}{\int_{t}^{\infty} (v - t) p_1(v) d v}$.
		Since $p_2(v) = \frac{1}{\rho} p_1(\frac{v}{\rho})$, we have 
		\[
		r_{\mathcal{S}}(A) = \frac{\int_{t}^{\infty} (v - t) p_2(v) d v}{\int_{t}^{\infty} (v - t) p_1(v) d v} = \frac{\rho\int_{t/\rho}^{\infty} (v - t/\rho) p_1(v) d v}{\int_{t}^{\infty} (v - t) p_1(v) d v}
		\]
		
		\noindent
		Next, we analyze the monotonicity of $r_{\mathcal{S}}(A)$ with respect to $\rho$.
		Let $g(x) = \int_x^{\infty}(v-x) p_1(x) dx$, which is monotonically decreasing of $x$.
		We have $r_{\mathcal{S}}(A) = \frac{\rho \cdot g(t/\rho)}{g(t)}$.
		Since $t$ is monotonically increasing with $\rho$, $g(t)$ is also monotonically increasing with $\rho$. 
		Thus, to prove that $r_{\mathcal{S}}(A)$ is monotonically increasing with $\rho$, it suffices to prove that $t/\rho$ is monotonically decreasing with $\rho$.
		Recall that we have shown that $F_1(t/\rho) = F_2(t)$ is monotonically decreasing with $\rho$.
		This implies that $t/\rho$ is indeed monotonically decreasing with $\rho$, completing the proof.

		Overall, we have completed the proof of the theorem.
	\end{proof}
	
	\begin{remark}[\bf{Monotonicity of \boldmath{$r_{\mathcal{R}}(A)$} and \boldmath{$r_{\mathcal{S}}(A)$} w.r.t. \boldmath{$c, \alpha$}}]
		\label{remark:rS_monotone}
		By Theorem \ref{thm:metrics}, we note that when fixing \(\rho\), both \(r_{\mathcal{R}}(A)\) and \(r_{\mathcal{S}}(A)\) are functions of \(t\).
		Since \(t\) is monotonically decreasing with respect to \(c\) and \(\alpha\), we only need to investigate the monotonicity of \(r_{\mathcal{R}}(A)\) and \(r_{\mathcal{S}}(A)\) with respect to \(t\).
		Proposition \ref{prop:uniform} demonstrates that \(r_{\mathcal{R}}(A)\) and \(r_{\mathcal{S}}(A)\) are monotonically decreasing with \(t\), and hence, monotonically increasing with \(c\) and \(\alpha\).
		Below, we provide an example with \(p_1\) to show that this monotonicity does not always hold.

		Let $\eps > 0 $ be a sufficiently small number and $\rho = 0.5$.
		Let $p_1$ be supported on $\Omega_1 = [0,2]$ such that $p_1(v) = 1 - \eps$ for $v\in [0,0.5]\cup [1.5,2]$ and $p_1(v) = \eps$ for $v\in (0.5, 1.5)$.
		Then $F_1(v) = (1-\eps)v$ for $v\in [0, 0.5]$, $0.5 - \eps + \eps v$ for $v\in (0.5, 1.5)$, and $(1-\eps)v + 2\eps - 1$.
		By Theorem \ref{thm:metrics}, we have $r_{\mathcal{R}}(A) = \frac{1 - F_1(t/\rho)}{1 - F_1(t)}$.
		Then in this case, $r_{\mathcal{R}}(A) = \frac{1 - (1-\eps)/2}{1 - (1-\eps)/4}\approx \frac{2}{3}$ when $t = 0.25$; while $r_{\mathcal{R}}(A) = \frac{1 - (1-\eps)/2 - 0.5 \eps}{1 - (1-\eps)/2} \approx 1$.
		Thus, $r_{\mathcal{R}}(A)$ is not monotonically decreasing with $t$.
		A similar computation can be done for $r_{\mathcal{S}}(A)$.
	\end{remark}
	\noindent
	By a similar argument as for Theorem \ref{thm:metrics}, we provide the following formulas of metrics for general densities.
	The computation is straightforward and we omit here.
	
	\begin{theorem}[\bf{Metrics in general}]
		\label{thm:metrics_general}
		Let policy $A$ be defined as in Theorem \ref{thm:two_general}, characterized by $t$ being the unique solution of Equation \eqref{eq:key}.
		Then for any densities $p_1$, $p_2$, and $p_a$,
		\begin{align*}
			\begin{aligned} 
				&  r_{\mathcal{R}}(A) = \min\left\{\frac{1 - F_2(t)}{1 - F_1(t)}, \frac{1 - F_1(t)}{1-F_2(t)} \right\}, \\
				&  r_{\mathcal{S}}(A) = \min\left\{\frac{\int_{t}^{\infty} \int_{\Omega_2} \min\{v, \zeta - t\} p_2(v) p_a(\zeta - v) dv d\zeta}{\int_{t}^{\infty} \int_{\Omega_1} \min\{v, \zeta - t\} p_1(v) p_a(\zeta - v) dv d\zeta}, \frac{\int_{t}^{\infty} \int_{\Omega_1} \min\{v, \zeta - t\} p_1(v) p_a(\zeta - v) dv d\zeta}{\int_{t}^{\infty} \int_{\Omega_2} \min\{v, \zeta - t\} p_2(v) p_a(\zeta - v) dv d\zeta}\right\}, \\
				&  \mathcal{RV}(A) = \frac{1-\alpha}{1-c}\int_t^{\infty} \int_{\Omega_1} m(\max\{t-\zeta + v, 0\}) p_1(v) p_a(\zeta - v) dv d\zeta \\
				&  \quad \quad \quad \quad \ + \frac{\alpha}{1-c} \int_t^{\infty} \int_{\Omega_2} m(\max\{t-\zeta + v, 0\}) p_1(v) p_a(\zeta - v) dv d\zeta.
			\end{aligned}
		\end{align*}
	\end{theorem}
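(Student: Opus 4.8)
The plan is to leverage the threshold structure of $A$ established in Theorem~\ref{thm:two_general}: under $A$ every agent, regardless of group, follows the same rule—type $(v,a)$ exerts effort $\max\{t-a,0\}$ and is selected exactly when $v+a\geq t$, and otherwise exerts nothing and loses. The only remaining randomness is the i.i.d.\ draw of types within each group, so each of the three metrics reduces to a per-agent expectation over $(v,a)\sim p_\ell\times p_a$ followed by the same concentration-and-limit step already carried out in the proof of Theorem~\ref{thm:metrics}. I would proceed metric by metric. For the representation ratio, integrating the selection indicator gives $\mathbb{E}[\mathcal{R}_\ell(A)]=\Pr_{v\sim p_\ell,a\sim p_a}[v+a\geq t]=1-F_\ell(t)$ directly from the definition of $F_\ell$. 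Since the agents in each group are i.i.d., the same Chernoff argument as in Theorem~\ref{thm:metrics} shows $\mathcal{R}_\ell(A)$ concentrates on its mean, so $\mathbb{E}[\min\{\mathcal{R}_1/\mathcal{R}_2,\mathcal{R}_2/\mathcal{R}_1\}]$ converges to $\min\{(1-F_2(t))/(1-F_1(t)),\,(1-F_1(t))/(1-F_2(t))\}$.

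For the social welfare ratio the key step is the per-agent payoff bookkeeping. A selected agent of type $(v,a)$ earns $v-\max\{t-a,0\}$, which is $v$ when $a\geq t$ and $v+a-t$ when $a<t$, while an unselected agent earns $0$. Writing $\zeta=v+a$ for the score and changing variables from $(v,a)$ to $(v,\zeta)$ (Jacobian $1$, $a=\zeta-v$, density $p_a(\zeta-v)$), the selected region becomes $\{\zeta\geq t\}$ and the payoff collapses to $\min\{v,\zeta-t\}$: indeed $\min\{v,\zeta-t\}=v$ iff $\zeta-t\geq v$ iff $a\geq t$, and equals $\zeta-t=v+a-t$ otherwise. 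This yields $\mathbb{E}[\mathcal{S}_\ell(A)]=\int_t^\infty\int_{\Omega_\ell}\min\{v,\zeta-t\}\,p_\ell(v)\,p_a(\zeta-v)\,dv\,d\zeta$, after which the concentration step converts $r_{\mathcal{S}}(A)$ into the stated $\min$ of ratios.

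For the average revenue I would compute the merit of a selected agent. Its score is $e+a=\max\{t-a,0\}+a=\max\{t,a\}$ (equivalently, its effort is $\max\{t-a,0\}$), so each selected agent of group $G_\ell$ contributes $m(\max\{t,a\})$. Summing over selected agents, splitting the sum across $G_1$ and $G_2$, taking expectations so that each group contributes $|G_\ell|$ times its per-agent expectation, and normalizing by the deterministic $k=cn$, the same change of variables $\zeta=v+a$ produces the group-weighted double integral in the theorem. Because $k$ is deterministic here, linearity of expectation suffices and no further concentration is required.

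The main obstacle is not any individual integral—each per-agent identity is elementary—but the limiting step shared by the first two metrics: justifying that $\mathbb{E}[\min\{\cdot,\cdot\}]$ of the random ratios converges to the $\min$ of the ratios of expectations, and, more delicately, that the effective selection cutoff is exactly $t$ in the limit so that boundary effects (ties at $v+a=t$ and the $O(\sqrt{\log n/n})$ gap between the finite-$n$ threshold and $t$ from Theorem~\ref{thm:two_formal}) contribute vanishing mass. This is handled precisely as in the proof of Theorem~\ref{thm:metrics}, through the Chernoff concentration of $|S\cap G_\ell|$ around $|G_\ell|(1-F_\ell(t))$, which is exactly why the residual computation can be described as straightforward.
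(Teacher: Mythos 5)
Your approach is exactly the one the paper intends: the paper omits the proof of Theorem~\ref{thm:metrics_general}, saying only that it follows ``by a similar argument as for Theorem~\ref{thm:metrics},'' and your reduction to per-agent expectations followed by Chernoff concentration of $|S\cap G_\ell|$ is precisely that argument. Your treatment of $r_{\mathcal{R}}(A)$ and $r_{\mathcal{S}}(A)$ is correct and matches the stated formulas: under the threshold policy the selection indicator is $\mathbb{I}(v+a\geq t)$, giving $\mathbb{E}[\mathcal{R}_\ell(A)]=1-F_\ell(t)$, and your identity $\min\{v,\zeta-t\}=v-\max\{t-a,0\}$ on $\{\zeta\geq t\}$ (with $\zeta=v+a$, Jacobian $1$) is exactly the change of variables that yields the stated double integrals for the welfare ratio.

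One point needs attention: your revenue derivation, which is faithful to the paper's definition $\mathcal{RV}(A,m)=\mathbb{E}\bigl[\frac{1}{k}\sum_{i\ \mathrm{selected}} m(s_i)\bigr]$ with $s_i=e_i+a_i$, does \emph{not} reproduce the formula printed in the theorem, contrary to your claim that it ``produces the group-weighted double integral in the theorem.'' You apply $m$ to the score $\max\{t,a\}=\max\{t,\zeta-v\}$ and normalize by $k=cn$, giving group weights $\frac{1-\alpha}{c}$ and $\frac{\alpha}{c}$; the printed formula instead applies $m$ to the effort $\max\{t-\zeta+v,0\}=\max\{t-a,0\}$, normalizes by $1-c$, and uses $p_1(v)$ (rather than $p_2(v)$) in the second integral. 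Your version is the internally consistent one: specializing to $p_a=\delta_0$, it yields $\mathcal{RV}(A,m)=m(t)$ in agreement with Theorem~\ref{thm:metrics}, whereas the printed formula yields $\frac{c}{1-c}\,m(t)$. So the printed statement appears to contain typos ($1-c$ for $c$, $p_1$ for $p_2$, effort for score inside $m$), and your computation is the correct one relative to the definition --- but a complete answer should explicitly flag this mismatch rather than assert agreement with the formula as literally stated.
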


	\noindent
	In the following, we complete the analysis from Sections \ref{sec:result_bias} and \ref{sec:analysis} for the case where $p_1$ and $p_2$ are uniform densities.
	The visualization of $t$ for them can be found in Figure \ref{fig:uniform_ability}.
	We first give the proof of Equation \eqref{eq:t_uniform_va} for the case that $p_a$ is uniform.
	
	\begin{proposition}[\bf{Complete version of Equation \eqref{eq:t_uniform_va}}]
		\label{prop:t_uniform_va}
		Let $\alpha, c\in (0,1)$ and $\rho \in (0,1]$. 
		Let $p_1$ be uniform on $[0,1]$, $p_2$ be uniform on $[0,\rho]$, and $p_a$ be uniform on $[0,1]$.
		Let $t\in [0,2]$ be the solution to the equation
		$\int_{0}^{1} (1-\alpha)\cdot \min\{1, (\zeta - v)_+\} dv + \int_{0}^{\rho} \frac{\alpha}{\rho} \cdot \min\{1, (\zeta - v)_+\} dv = 1 - c$.
		Then if $0 < c \leq \frac{1-\alpha}{2}$,
		\[
		t = 
		\begin{cases}
			\displaystyle
			2 - \sqrt{\frac{2c}{1-\alpha}},
			& \rho < 1 - \sqrt{\frac{2c}{1-\alpha}}, \\
			\displaystyle
			\frac{2 - \alpha  + \alpha/\rho}{1-\alpha  + \alpha/\rho} - \frac{\sqrt{2c(1-\alpha + \alpha/\rho) - \alpha(1-\alpha) (1-\rho)^2/\rho}}{1-\alpha  + \alpha/\rho},
			& \rho \geq 1 - \sqrt{\frac{2c}{1-\alpha}},
		\end{cases}
		\]
		if $\frac{1-\alpha}{2} < c \leq \frac{1}{2}$,
		\[
		\begin{cases}
			\displaystyle
			\frac{2 - \alpha  + \alpha/\rho}{1-\alpha  + \alpha/\rho} - \frac{\sqrt{2c(1-\alpha + \alpha/\rho) - \alpha(1-\alpha) (1-\rho)^2/\rho}}{1-\alpha  + \alpha/\rho},
			& \rho < \frac{2c - 1 + \alpha}{\alpha}, \\
			\displaystyle
			-\frac{\alpha}{1-\alpha} + \frac{\sqrt{\alpha^2 + (1-\alpha)(2+\alpha \rho - 2c)}}{1-\alpha},
			& \rho \geq \frac{2c - 1 + \alpha}{\alpha}, 
		\end{cases}
		\]
		if $\frac{1}{2} < c \leq 1$,
		\[
		\begin{cases}
			\displaystyle
			-\frac{\alpha}{1-\alpha} + \frac{\sqrt{\alpha^2 + (1-\alpha)(2+\alpha \rho - 2c)}}{1-\alpha},
			& \rho < \frac{-\alpha +\sqrt{\alpha^2 + 8(1-\alpha)(1-c)}}{2(1-\alpha)}, \\
			\displaystyle
			\sqrt{\dfrac{2(1-c)}{1-\alpha+\alpha/\rho}},
			& \rho \geq \frac{-\alpha +\sqrt{\alpha^2 + 8(1-\alpha)(1-c)}}{2(1-\alpha)}.
		\end{cases}
		\]
	\end{proposition}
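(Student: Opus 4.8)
The plan is to invoke Equation~\eqref{eq:key}. Here $p_a$ is \emph{uniform} on $[0,1]$ (not a point mass), so the relevant CDFs $F_1,F_2$ are those of the sum $v+a$, and Lemma~\ref{lm:unique} guarantees that $t$ is the unique root of $G(\zeta):=(1-\alpha)F_1(\zeta)+\alpha F_2(\zeta)=1-c$ on the connected domain $(\Omega_1\cup\Omega_2)+\Omega_a=[0,2]$. Thus the whole proposition reduces to inverting the explicit, continuous, strictly increasing, piecewise-quadratic function $G$, and the three-way case split is nothing more than identifying which quadratic piece the root falls into.

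First I would compute $F_1$ and $F_2$ by convolving each valuation density with $p_a=\mathrm{Unif}[0,1]$. The sum of two independent $\mathrm{Unif}[0,1]$ variables is triangular on $[0,2]$, giving $F_1(\zeta)=\zeta^2/2$ on $[0,1]$ and $F_1(\zeta)=1-(2-\zeta)^2/2$ on $[1,2]$. The sum of $\mathrm{Unif}[0,\rho]$ and $\mathrm{Unif}[0,1]$ is trapezoidal on $[0,1+\rho]$, giving $F_2(\zeta)=\zeta^2/(2\rho)$ on $[0,\rho]$, $F_2(\zeta)=\zeta-\rho/2$ on $[\rho,1]$, and $F_2(\zeta)=1-(1+\rho-\zeta)^2/(2\rho)$ on $[1,1+\rho]$. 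Hence $G$ has breakpoints $\rho,1,1+\rho$, and on each of the four subintervals the equation $G(\zeta)=1-c$ is a single quadratic in $\zeta$.

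Next, on each subinterval I would solve that quadratic and retain the root lying inside it. Writing $A:=1-\alpha+\alpha/\rho$ and $B:=2-\alpha+\alpha/\rho$, the four candidate roots are $\sqrt{2(1-c)/A}$ on $[0,\rho]$; $\bigl(-\alpha+\sqrt{\alpha^2+(1-\alpha)(2+\alpha\rho-2c)}\bigr)/(1-\alpha)$ on $[\rho,1]$; $\bigl(B-\sqrt{2cA-\alpha(1-\alpha)(1-\rho)^2/\rho}\bigr)/A$ on $[1,1+\rho]$; and $2-\sqrt{2c/(1-\alpha)}$ on $[1+\rho,2]$. These are exactly the four expressions appearing in the statement; the only nonroutine algebra is verifying that the discriminant in the $[1,1+\rho]$ case collapses to $2cA-\alpha(1-\alpha)(1-\rho)^2/\rho$ (the cross terms from expanding $(1-\alpha)(2-\zeta)^2/2+\alpha(1+\rho-\zeta)^2/(2\rho)=c$ must cancel) and that in each piece the smaller root is the in-range one.

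Finally, I would decide which subinterval contains $t$ by comparing $1-c$ against the breakpoint values $G(\rho)=\tfrac12\bigl((1-\alpha)\rho^2+\alpha\rho\bigr)$, $G(1)=\tfrac{1-\alpha}{2}+\alpha-\tfrac{\alpha\rho}{2}$, and $G(1+\rho)=1-\tfrac{(1-\alpha)(1-\rho)^2}{2}$. Since $G$ is increasing, $t\in[1+\rho,2]$ iff $c\le(1-\alpha)(1-\rho)^2/2$, i.e.\ iff $\rho\le 1-\sqrt{2c/(1-\alpha)}$; $t\ge 1$ iff $c\le\tfrac{1-\alpha}{2}+\tfrac{\alpha\rho}{2}$, i.e.\ iff $\rho$ lies on one side of $(2c-1+\alpha)/\alpha$; and $t\le\rho$ iff $\rho$ lies on one side of $\bigl(-\alpha+\sqrt{\alpha^2+8(1-\alpha)(1-c)}\bigr)/(2(1-\alpha))$ (the root of $G(\rho)=1-c$ in $\rho$). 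Solving these inequalities produces precisely the stated $\rho$-thresholds, and the three regimes $c\le\tfrac{1-\alpha}{2}$, $\tfrac{1-\alpha}{2}<c\le\tfrac12$, $c>\tfrac12$ emerge because the top region is reachable for some $\rho$ only when $c\le\max_\rho (1-\alpha)(1-\rho)^2/2=\tfrac{1-\alpha}{2}$, and region $[1,1+\rho]$ only when $c\le\max_\rho\bigl(\tfrac{1-\alpha}{2}+\tfrac{\alpha\rho}{2}\bigr)=\tfrac12$. The \emph{main obstacle} is this bookkeeping—consistently tracking region membership across all $(\alpha,c,\rho)$ and matching each closed-form root to the correct parameter cell—together with the discriminant simplification in the $[1,1+\rho]$ piece; the per-interval computation is otherwise routine.
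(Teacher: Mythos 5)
Your approach is the same as the paper's: the paper's proof simply writes $f(\zeta)=(1-\alpha)F_1(\zeta)+\alpha F_2(\zeta)$ in piecewise-polynomial form with breakpoints $\rho,1,1+\rho$ and then states that solving $f(\zeta)=1-c$ yields the proposition. Your convolution formulas for $F_1,F_2$ (triangular and trapezoidal), the four candidate roots, and the collapse of the $[1,1+\rho]$ discriminant to $2cA-\alpha(1-\alpha)(1-\rho)^2/\rho$ are all correct, and they supply exactly the computations the paper leaves implicit.

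However, there is a genuine gap at the one step you left unresolved: deciding \emph{which} side of $(2c-1+\alpha)/\alpha$ corresponds to which piece. Your own (correct) inequality ``$t\ge 1$ iff $c\le\tfrac{1-\alpha}{2}+\tfrac{\alpha\rho}{2}$'' rearranges to $\rho\ge\tfrac{2c-1+\alpha}{\alpha}$, so the $[1,1+\rho]$ root applies when $\rho\ge\tfrac{2c-1+\alpha}{\alpha}$ and the $[\rho,1]$ root applies when $\rho<\tfrac{2c-1+\alpha}{\alpha}$. This is the \emph{opposite} of the assignment printed in the middle case ($\tfrac{1-\alpha}{2}<c\le\tfrac12$) of the proposition, so your closing claim that the inequalities ``produce precisely the stated $\rho$-thresholds'' cannot be completed as written: what your derivation yields is the proposition with the two middle-case branches transposed (the printed middle case appears to be a transposition error; both branches agree at the boundary $\rho=\tfrac{2c-1+\alpha}{\alpha}$, where $t=1$, which is presumably why the slip is easy to miss). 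A numerical check confirms this: for $\alpha=0.5$, $c=0.3$, $\rho=0.1<0.2=\tfrac{2c-1+\alpha}{\alpha}$, the true solution is $t\approx 0.9748\in[\rho,1]$, matching $\tfrac{-\alpha+\sqrt{\alpha^2+(1-\alpha)(2+\alpha\rho-2c)}}{1-\alpha}$, whereas the branch the statement assigns to this regime, $\bigl(B-\sqrt{2cA-\alpha(1-\alpha)(1-\rho)^2/\rho}\bigr)/A$, evaluates to $\approx 0.9765$, which lies outside $[1,1+\rho]$ and does not satisfy the equation. A complete write-up must resolve the ``one side'' explicitly, at which point it will contradict, rather than reproduce, the stated middle case. (Minor slip: your remark that ``in each piece the smaller root is the in-range one'' is false on $[0,\rho]$ and $[\rho,1]$, where the in-range root is the larger one---the excluded root is negative; the roots you listed are nonetheless the correct ones.)
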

	
	\begin{proof}
		Let $f(\zeta) = \int_{0}^{1} (1-\alpha)\cdot \min\{1, (\zeta - v)_+\} dv + \int_{0}^{\rho} \frac{\alpha}{\rho} \cdot \min\{1, (\zeta - v)_+\} dv$.
		We first note that $f(\zeta)$ is a monotone increasing function with $f(0) = 0$ and $f(2) = 1$.
		By analyzing the value of $\min\{1, (\zeta - v)_+\}$, we also have the following equation:
		\[
		\min\{1, (\zeta - v)_+\} =
		\begin{cases}
			\displaystyle
			0, 
			& 0\leq \zeta \leq v, \\ 
			\displaystyle
			\zeta - v, & v\leq \zeta \leq v+1, \\
			\displaystyle
			1, & v+1 \leq \zeta \leq 2.
		\end{cases}
		\]
		Accordingly, we know that
		\[
		f(\zeta) = 
		\begin{cases}
			\displaystyle
			(1-\alpha) (\zeta - v)\mid_{0}^{\zeta} + \frac{\alpha}{\rho} (\zeta - v)\mid_{0}^{\zeta},
			& 0\leq \zeta \leq \rho, \\ 
			\displaystyle
			(1-\alpha) (\zeta - v)\mid_{0}^{\zeta} + \frac{\alpha}{\rho} (\zeta - v)\mid_{0}^{\rho},
			& \rho \leq \zeta \leq 1, \\
			\displaystyle
			(1-\alpha) \left(\zeta - 1 + (\zeta - v)\mid_{\zeta - 1}^{\zeta} \right) + \frac{\alpha}{\rho} \left(\zeta - 1 + (\zeta - v)\mid_{\zeta - 1}^{\rho}\right),
			& 1 \leq \zeta \leq 1+\rho, \\
			\displaystyle
			(1-\alpha) \left(\zeta - 1 + (\zeta - v)\mid_{\zeta - 1}^{\zeta} \right) + \alpha, 
			& 1+\rho \leq \zeta \leq 2.
		\end{cases}
		\]
		Thus, $f$ is a piecewise-polynomial function of $\zeta$.
		Solving $f(\zeta) = 1 - c$ results in Corollary \ref{prop:t_uniform_va}.
	\end{proof}

	\begin{restatable}[\bf Restatement of Proposition \ref{prop:uniform}]{proposition}{prop:uniform}
		Let $p_1$ be uniform on $[0,1]$, $p_2$ be uniform on $[0,\rho]$, and $p_a$ be a point mass at 0.
		Let $A$ be the NE policy for the two-group contest as $n\rightarrow \infty$.
		Then 
		\begin{align*}
			t = 
			1 - \frac{c}{1-\alpha}  \mbox{ if $\rho < 1 - \frac{c}{1-\alpha}$ and } &
			t= \frac{\rho(1-c)}{\rho -\alpha \rho + \alpha} \mbox{ if $\rho \geq 1 - \frac{c}{1-\alpha}$. } \\
			r_{\mathcal{R}}(A) = 0 \mbox{ if $\rho < 1 - \frac{c}{1-\alpha}$ and } &
			r_{\mathcal{R}}(A) = \frac{\rho - \alpha \rho + \alpha + c - 1}{\alpha - \alpha \rho +c \rho} \mbox{ if $\rho \geq 1 - \frac{c}{1-\alpha}$. } \\
			r_{\mathcal{S}}(A) = 0 \mbox{ if $\rho < 1 - \frac{c}{1-\alpha}$ and } &
			r_{\mathcal{S}}(A) = \frac{\rho(\rho - \alpha \rho + \alpha + c - 1)^2}{(\alpha-\alpha \rho +c\rho)^2} \mbox{ if $\rho \geq 1 - \frac{c}{1-\alpha}$. }
		\end{align*}
		Moreover, \(\mathcal{RV}(A,m) = m(t)\) for any merit function \(m(\cdot)\); 
		$r_{\mathcal{R}}(A)$ and $r_{\mathcal{S}}(A)$ are monotonically increasing functions of parameters $\rho, c, \alpha$.
	\end{restatable}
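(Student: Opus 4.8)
The plan is to treat this proposition as a direct specialization of Theorem~\ref{thm:metrics}: once the threshold $t$ is known in closed form, all three metrics follow by substitution, and the only genuine work is (i) solving Equation~\eqref{eq:key} for the uniform densities and (ii) establishing the monotonicity in $c$ and $\alpha$, which does \emph{not} follow from Theorem~\ref{thm:metrics} in general (cf.\ Remark~\ref{remark:rS_monotone}).

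First I would record the CDFs for the point-mass ability case: $F_1(\zeta)=\min\{\zeta,1\}$ and, since $p_2$ is the $\rho$-biased uniform, $F_2(\zeta)=\min\{\zeta/\rho,1\}$. Substituting into $(1-\alpha)F_1(t)+\alpha F_2(t)=1-c$ and splitting on whether group $G_2$ participates---equivalently whether $t\ge\rho$ or $t<\rho$---gives two linear equations. The case $t\ge\rho$ forces $F_2(t)=1$ and yields $t=1-\tfrac{c}{1-\alpha}$, which is self-consistent precisely when $\rho\le\rho_c:=1-\tfrac{c}{1-\alpha}$; the case $t<\rho$ gives $t(1-\alpha+\alpha/\rho)=1-c$, i.e.\ $t=\tfrac{\rho(1-c)}{\rho-\alpha\rho+\alpha}$, consistent exactly when $\rho\ge\rho_c$. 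A short check shows $t<1$ in both regimes, so $F_1(t)=t$ and $F_1(t/\rho)=t/\rho$ on the relevant ranges. This reproduces the stated piecewise formula for $t$, and $\mathcal{RV}(A,m)=m(t)$ is then immediate from Theorem~\ref{thm:metrics}.

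Next I would substitute this $t$ into $r_{\mathcal{R}}(A)=\tfrac{1-F_1(t/\rho)}{1-F_1(t)}$ and into the welfare formula of Theorem~\ref{thm:metrics}. When $\rho<\rho_c$ we have $t\ge\rho$, so $t/\rho\ge1$, $F_1(t/\rho)=1$, and both ratios vanish. When $\rho\ge\rho_c$, using $F_1(t/\rho)=t/\rho$ gives $r_{\mathcal{R}}(A)=\tfrac{1-t/\rho}{1-t}$, while evaluating the welfare integrals $\int_t^1(v-t)\,dv=\tfrac{(1-t)^2}{2}$ and $\int_t^\rho\tfrac1\rho(v-t)\,dv=\tfrac{(\rho-t)^2}{2\rho}$ gives $r_{\mathcal{S}}(A)=\tfrac{(\rho-t)^2}{\rho(1-t)^2}$. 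Clearing denominators after inserting $t=\tfrac{\rho(1-c)}{\rho-\alpha\rho+\alpha}$ yields the two displayed rational expressions; I would also check both are continuous at $\rho=\rho_c$ (the factor $\rho-\alpha\rho+\alpha+c-1$ vanishes there), so the piecewise definitions glue without jumps.

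Monotonicity in $\rho$ is inherited directly from Theorem~\ref{thm:metrics}. For $c$ and $\alpha$ the cleanest route, following Remark~\ref{remark:rS_monotone}, is to fix $\rho$ and view $r_{\mathcal{R}},r_{\mathcal{S}}$ as functions of $t$ alone: from $r_{\mathcal{R}}=\tfrac{1-t/\rho}{1-t}$ one computes $\tfrac{d}{dt}r_{\mathcal{R}}=\tfrac{\rho-1}{\rho(1-t)^2}\le0$, and writing $r_{\mathcal{S}}=\big(\tfrac{\rho-t}{1-t}\big)^2/\rho$ one sees $\tfrac{\rho-t}{1-t}$ is positive with derivative $\tfrac{\rho-1}{(1-t)^2}\le0$, so $r_{\mathcal{S}}$ is decreasing in $t$ as well. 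Since Theorem~\ref{thm:metrics} shows $t$ is decreasing in $c$ and $\alpha$, both metrics increase in $c$ and $\alpha$ within the active regime; combined with the fact that raising $c$ or $\alpha$ only lowers $\rho_c$ and can thus move a configuration from the zero regime into the positive regime, global monotonicity follows. The main obstacle is not any single computation but keeping the case analysis honest: the transition $t=\rho$ must be matched to $\rho=\rho_c$, the monotonicity of $t$ in $c,\alpha$ must be imported correctly from Theorem~\ref{thm:metrics}, and---because $r_{\mathcal{S}}$ need not be monotone in $c,\alpha$ for general densities---the decreasing-in-$t$ computation for the uniform family must be carried out explicitly rather than quoted.
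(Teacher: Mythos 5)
Your proposal is correct and follows essentially the same route as the paper's proof: specialize Theorem~\ref{thm:metrics}, solve the piecewise-linear equation for $t$ with the case split at $t$ versus $\rho$ (equivalently $\rho$ versus $\rho_c = 1-\tfrac{c}{1-\alpha}$), substitute to obtain the rational expressions (the welfare integrals you evaluate reduce to the paper's identity $r_{\mathcal{S}}(A)=\rho\, r_{\mathcal{R}}(A)^2$), and treat the zero regime separately. The only divergence is the monotonicity-in-$c,\alpha$ step, where the paper directly differentiates $1-\frac{(1-c)(1-\rho)}{\alpha-\alpha\rho+c\rho}$ in each parameter, whereas you show the metrics are decreasing in $t$ and compose with the monotonicity of $t$ in $c,\alpha$ --- precisely the route the paper itself sketches in Remark~\ref{remark:rS_monotone} --- so both arguments are valid and interchangeable.
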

	
	\begin{proof}
		Note that $\mathcal{RV}(A,m) = m(t)$ is directly from Theorem \ref{thm:metrics}.

		\paragraph{Computation of \boldmath{$t$}.}
		Note that $F_1(v) = v$ and $F_2(v) = \min\left\{1,\frac{v}{\rho}\right\}$.
		Let $g(v) = (1-\alpha) v + \alpha \min\left\{1,\frac{v}{\rho}\right\}$.
		We note that $g(\cdot)$ is a piece-wise linear function of $v$ with an inflection point $v = \rho$.
		Plugging $v = \rho$ into the equation, we obtain that $\rho = 1 - \frac{c}{1-\alpha}$ which is an inflection point of $t$.
		Then if solution $t > \rho$, we have that $t$ is the solution of the equation $ (1-\alpha) v + \alpha = 1-c$, implying that $\sol = 1 - \frac{c}{1-\alpha}$.
		The condition for this case is $\rho < 1 - \frac{c}{1-\alpha} = t$.
		Otherwise if solution $t \leq \rho$, we have that $\sol$ is the solution of the equation $ (1-\alpha) v + \alpha \frac{v}{\rho} = 1-c$, implying that $\sol = \frac{\rho(1-c)}{\rho -\alpha \rho + \alpha}$.
		The condition for this case is $\rho \geq 1 - \frac{c}{1-\alpha}$.
		This completes the proof for $t$.
		
		\paragraph{Analysis for \boldmath{$r_{\mathcal{R}}(A)$}.}
		By Theorem \ref{thm:metrics}, we have $r_{\mathcal{R}}(A) = \frac{1 - \min\left\{1, \frac{\sol}{\rho}\right\}}{1 - \sol}$.
		By the form of $t$, we can verify the explicit form of $r_{\mathcal{R}}(A)$.

		Note that when $\rho \geq 1-\frac{c}{1-\alpha}$, we have
		\[
		r_{\mathcal{R}}(A) = \frac{\rho - \alpha \rho + \alpha + c - 1}{\alpha - \alpha \rho +c \rho} = 1 - \frac{(1-c)(1-\rho)}{\alpha - \alpha \rho +c \rho}.
		\]
		Let
		\[
		f(\rho,c,\alpha) = 1 - \frac{(1-c)(1-\rho)}{\alpha - \alpha \rho +c \rho}.
		\]
		Define the auxiliary functions:
		\[
		X(\rho,c,\alpha) = (1-c)(1-\rho),
		\quad
		Y(\rho,c,\alpha) = \alpha - \alpha \rho +c \rho.
		\]
		Then when $\rho \geq 1-\frac{c}{1-\alpha}$, we have
		\[
		f(\rho,c,\alpha) = 1 - \frac{X}{Y}, \text{ and } X(\rho,c,\alpha), Y(\rho,c,\alpha) > 0.
		\]
		The partial derivatives w.r.t. $\rho, c, \alpha$ are:
		
		\[
		\frac{\partial f}{\partial \rho}
		= \frac{(1-c)Y + (c-\alpha)X}{ Y^2} = \frac{c(1-c)}{Y^2} \geq 0,
		\]
		
		\[
		\frac{\partial f}{\partial c}
		=
		\frac{(1-\rho)Y + \rho X}{ Y^2} \geq 0,
		\]
		
		\[
		\frac{\partial f}{\partial \alpha}
		=
		\frac{ 1-\rho }{ Y^2} \geq 0.
		\]
		Thus, $r_{\mathcal{R}}(A)$ is monotonically increasing with $\rho, c, \alpha$ when $\rho \geq 1-\frac{c}{1-\alpha}$.
		Moreover, the threshold $1-\frac{c}{1-\alpha}$ is monotonically decreasing with $c,\alpha$.
		Thus, we conclude that $r_{\mathcal{R}}(A)$ is a monotonically increasing function of parameters $\rho, c, \alpha$.
		
		\paragraph{Analysis for \boldmath{$r_{\mathcal{S}}(A)$}.}
		By a similar argument as for $r_{\mathcal{R}}(A)$, we can obtain the formulas of $r_{\mathcal{S}}(A)$ using Theorem \ref{thm:metrics} and the form of $t$.
		Note that $r_{\mathcal{S}}(A) = \rho r_{\mathcal{R}}(A)^2$. 
		Thus, $r_{\mathcal{S}}(A)$ is monotonically increasing with $\rho, c, \alpha$.

		Overall, we have completed the proof of the proposition.
	\end{proof}
	
	\section{Conclusion, limitations, and future work}
	\label{sec:conclusion}
	This work highlights a central tension in modern meritocratic systems: even when selection mechanisms are formally unbiased, systemic disparities in how groups perceive value can lead rational agents to behave in ways that perpetuate inequality. Our model captures this dynamic through a strategic contest framework that extends all-pay auctions to multi-group settings.
	By analyzing Nash equilibria in the large population limit, we characterize how group-level biases (\(\rho\)) and selectivity (\(c\)) affect fairness and institutional metrics such as representation, social welfare, and revenue. A central contribution is Theorem~\ref{thm:two_general}, which provides an explicit form for equilibrium strategies under broad conditions. Our framework enables interpretable predictions and supports data-grounded policy interventions.
	
	Our model makes simplifying assumptions to enable analytical tractability. Most notably, it assumes agents are fully rational and that merit is captured by a single-dimensional notion of effort. In practice, decision-making is shaped by uncertainty, cultural context, and multifaceted criteria for merit. Extending the framework to incorporate bounded rationality, noisy information, or multidimensional effort remains an important direction for future work.
	Several application-driven extensions are also promising. One involves modeling university admissions systems with external incentives (e.g., brand-based free-riding), which may result in over-representation of certain groups (\(\rho > 1\)). Another is to study how affirmative action or group-dependent costs reshape equilibrium behavior. These variants would help bridge theory with institutional design.
	Beyond these extensions, an important avenue is to embed this static framework within dynamic feedback environments where perceptions evolve over time in response to outcomes and institutional signals. Such models could capture how bias propagates or attenuates across repeated selection cycles.
	Finally, while our model isolates a tractable facet of systemic inequality, real-world disparities—especially in AI-mediated evaluations—demand broader integration with social and historical context. As algorithmic tools shape hiring, admissions, and promotion, our framework helps explain how group-level differences in perceived value can interact with selection to amplify or mitigate bias. 
	More broadly, we view this work as a step toward unifying rational-choice and structural perspectives on inequality through formal, data-driven modeling.
	We hope this work informs the design of more equitable, data-driven decision systems.
	
	%\newpage
	\section*{Acknowledgments}
	LH acknowledges support from the State Key Laboratory of Novel Software Technology, the New Cornerstone Science Foundation, and NSFC Grant No. 625707396. LEC was supported in part by NSF Award IIS-2045951. NKV was supported in part by NSF Grant CCF-2112665 and by grants from Tata Sons Private Limited, Tata Consultancy Services Limited, and Titan.
	
	% Bibliography
	\bibliographystyle{plain}
	\bibliography{references}
	
	\appendix

	\newpage
	
	%%%%%%%%%%%%%%%%%%%%%%%%%%%%%%%%%%%%%%%%%%%%%%%%%%%%%%%%%%%%%%%
	\section{Comparison of the two-group contest with relevant models}
	\label{sec:two_relevant}
	
	To the best of our knowledge, our model is novel and has not been studied in the literature.
	Below, we compare our model with the most relevant models. 
	Firstly, \cite{Amann1996AsymmetricAA} examines a specific case of our model with \( n=2 \), \( k=1 \), and \( \alpha = 0.5 \), demonstrating the existence of a unique NE under certain conditions. 
	While their analysis is limited to a two-player scenario, our model generalizes this by considering any number of players and allowing for multiple winners.
	\cite{emelianov2022fairness} proposes another two-group contest model. 
	However, a key distinction in our model is the consideration of asymmetric valuation distributions across groups, whereas \cite{emelianov2022fairness} introduces bias through the cost of effort, assuming symmetric valuations for all agents. 
	This asymmetry in valuation distributions in our model adds complexity to the analysis.
	
	Another related work is \cite{ElkindGG22Contests}, which explores an all-pay auction with two groups. 
	In their model, agents’ abilities are symmetrically distributed, and those in the advantaged group may receive additional rewards with equivalent bids. 
	Despite the symmetric strategic environment in \cite{ElkindGG22Contests}, our model features asymmetric valuation distributions between groups, resulting in an asymmetric strategic environment. 
	This asymmetry introduces further computational challenges for deriving the NE; details can be found below.
	
	\paragraph{A detailed comparison with \cite{ElkindGG22Contests}.}
	We provide a detailed comparison between our two-group model and that in \cite{ElkindGG22Contests}. 
	The primary distinction is that their model results in a symmetric strategic environment, while ours creates an asymmetric one. 
	Below, we provide further details on this difference.
	
	In the model of \cite{ElkindGG22Contests}, each agent belongs to the target group with probability $\mu$ or to the non-target group with probability $1-\mu$, independently of the other agents. 
	The ability of an agent is then drawn i.i.d. from distribution $F$ if they belong to the target group, and from $G$ if they belong to the non-target group. 
	As a result, the ability of each agent is drawn identically and independently from the joint distribution $\mu F + (1-\mu) G$. Let $H$ be the CDF of this joint distribution. 
	The probability that a given ability $v$ is among the top $k$ abilities is then given by
	$\sum_{i=n-k}^{n-1}\binom{n-1}{i}H(v)^i(1-H(v))^{n-1-i}$, which is the same for each agent, thereby resulting in a symmetric strategic environment.
	
	In our model, consider a simplified case where $p_a$ is a point mass at 0. 
	Then, $F_1$ and $F_2$ correspond to the cumulative distribution functions (CDFs) of $p_1$ and $p_2$, which represent the valuation densities of $G_1$ and $G_2$, respectively.
	The probability that, for an agent in $G_1$, a given valuation $v$ is among the top $k$ valuations is given by:
	$$\sum_{i=0}^{n-1}\sum_{j=n-k - i}^{n-1-i}\binom{n-1}{i}\binom{n-1-i}{j}F_1(v)^i (1-F_1(v))^{(1-\alpha)n-1-i} F_2(v)^j (1-F_2(v))^{\alpha n-j}.$$
	In contrast, the probability for an agent in $G_2$ is given by
	$$\sum_{i=0}^{n-1}\sum_{j=n-k-i}^{n-1-i}\binom{n-1}{i}\binom{n-1-i}{j}F_1(v)^i(1-F_1(v))^{(1-\alpha)n-i} F_2(v)^j(1-F_2(v))^{\alpha n-1-j}.$$
	These two expressions differ whenever $p_1 \neq p_2$, leading to asymmetry in the strategic environment.
	This asymmetry significantly complicates the computation of the order statistics for the $(k-1)$-th effort compared to the symmetric ones. 
	E.g., for strategies $s_1$ and $s_2$, let $F_{s_\ell}(v)$ denote the CDF of efforts $s_\ell(v)$ when $v \sim p_\ell$. 
	The cumulative distribution of the $(k-1)$-th effort $e^\star$ from an agent in $G_1$ is then given by:
	\begin{align*}
		& \quad \Pr[e^\star \leq v]\\
		= & \quad\sum_{i=0}^{n-1}\sum_{j=(1-c)n - i}^{n-1-i}\binom{n-1}{i}\binom{n-1-i}{j}F_{s_1}(v)^i (1-F_{s_1}(v))^{(1-\alpha)n-1-i} F_{s_2}(v)^j (1-F_{s_2}(v))^{\alpha n-j}.
	\end{align*}
	In contrast, the cumulative distribution of the $(k-1)$-th effort $e^\star$ from an agent in $G_2$ is given by:
	\begin{align*}
		& \quad \Pr[e^\star \leq v] \\
		= & \quad\sum_{i=0}^{n-1}\sum_{j=(1-c)n - i}^{n-1-i}\binom{n-1}{i}\binom{n-1-i}{j}F_{s_1}(v)^i (1-F_{s_1}(v))^{(1-\alpha)n-i} F_{s_2}(v)^j (1-F_{s_2}(v))^{\alpha n-1-j}.
	\end{align*}

	\noindent
	In the symmetric ones ($s_1=s_2=s$), the computation simplifies to:
	
	$$\Pr[e^\star \leq v]=\sum_{i=(1-c)n}^{n-1} \binom{n-1}{i} F_{s}(v)^i (1-F_{s}(v))^{n-1-i}.$$ 
	
	\noindent
	Thus, the calculus and approximations for the two-group contest are significantly more difficult, making it harder to arrive at the equilibrium policies than in the contest with a symmetric strategic environment.
	
	%%%%%%%%%%%%%%%%%%%%%%%%%%%%%%%%%%%%%%%%%%%%%%%%%%%%%%
	\section{Illustrative examples for the two-group case}
	\label{sec:example}
	
	In this section, we present a two-agent example with a biased valuation distribution to illustrate both the difficulty of computing the Nash equilibrium (NE) policy and the significant impact of valuation bias on the contest outcome.
	Let $c = 0.5$.
	Let density $p_1$ of agent 1 be the uniform distribution on $\Omega_1 = [0,1]$ and $p_2$ of agent 2 be the $\rho$-biased version of $p_1$ supported on $\Omega_2 = [0, \rho]$.
	Let the density $p_a$ be a point mass at 0.
	Let $A_\ell: \Omega_\ell \rightarrow \R_{\geq 0}$ be the NE policy that maps valuation $v_\ell$ to effort $A_\ell(v_\ell)$.
	We assume $A_\ell$ is monotonically increasing on the domain $\Omega_\ell$.

	In this example, if agent 1 puts in effort $e$, it wins if the effort of agent 2 is smaller than $e$. 
	Thus, its winning probability $P_1 = \frac{A_2^{-1}(e)}{\rho}$\footnote{Here, we assume $A_2^{-1}(e) = \rho$ if $A_2(\rho) < e$.} and its payoff is $\pi_1(v,e; A_2) = \frac{A_2^{-1}(e)}{\rho} v - e$.
	Similarly, if agent 2 puts in effort $e$, its winning probability $P_2 = A_1^{-1}(e)$ and payoff is $\pi_2(v,e; A_1) = A_1^{-1}(e) v - e$.
	Then, by the stability condition \eqref{eq:NE}, we have
	$\frac{\partial \pi_\ell(v,e; A_{3-\ell})}{\partial e}\mid_{e = A_\ell(v)} = 0$, implying that
	\[
	A'_2(A_2^{-1}(A_1(v))) = \frac{v}{\rho}, \text{ and } A'_1(A_1^{-1}(A_2(v))) = v.
	\]
	Solving this gives us the following explicit forms:
	\begin{eqnarray}
		\label{eq:NE_example}
		\forall v\in \Omega_1, \ A_1(v) = \frac{\rho}{\rho+1} v^{\rho + 1}; \text{ and } \forall v\in \Omega_2, \ A_2(v) = \frac{\rho^{-1/\rho}}{\rho+1} v^{1 + 1/\rho}.
	\end{eqnarray}
	Specifically, when $\rho = 1$ (the unbiased case), we have $A_1 = A_2 = A$, which simplifies the stability condition to $A'(v) = v$, yielding the NE policy $A(v) = \frac{v^2}{2}$.
	Also note that for $v\in \Omega_2$,
	\[
	\frac{A_1(v)}{A_2(v)} = \rho^{1+1/\rho} v^{\rho - 1/\rho} \leq \rho^{1+1/\rho} \rho^{\rho - 1/\rho} = \rho^{1+\rho} \leq 1,
	\]
	which implies that $A_1(v)\leq A_2(v)$.
	Thus, agent 2 is more inclined to put in greater effort than agent 1 for identical valuations.

	Imagine an institute that is unaware of the bias in valuations across two agents and thus applies the unbiased NE policy $A(v) = \frac{v^2}{2}$ to predict the contest outcome. For instance, it would predict the average revenue 
	\[
	\mathcal{RV}(A, m) = \int_0^1 \Pr_{v_\ell\sim p_\ell}\left[m\left(\max\left\{\frac{v_1^2}{2}, \frac{v_2^2}{2}\right\}\right) > t \right] dt = 0.25,
	\]  
	where the merit function is $m(t) = t$.  
	However, under a $\rho$-biased valuation distribution, the true average revenue is  
	$\mathcal{RV}_\rho = \frac{\rho}{2(\rho+1)}$,  
	which decreases monotonically with $\rho$. 
	This implies that the institute could overestimate its expected benefit $\mathcal{RV}$ by a fraction of  
	\[
	\frac{0.25 - \mathcal{RV}_\rho}{\mathcal{RV}_\rho} = \frac{1-\rho}{2\rho},
	\]  
	which amounts to approximately 13\% when $\rho = 0.8$.  
	This example underscores the importance of studying asymmetric valuations and highlights the relevance of our proposed metrics for analyzing their impacts.

	We also observe that even for this simple two-agent example, the stability condition is considerably more complex than in the undifferentiated case. 
	In more general settings—such as those involving multiple spots, non-uniform valuation densities, or non-trivial ability densities—the explicit forms of NE policies for a two-group contest become even more complicated, making direct computation and explicit analysis impractical.
	%
	
	%%%%%%%%%%%%%%%%%%%%%%%%%%%%%%%%%%%%%%%%%%%%%%%%%%%%%%%%%%%%%%%%
	\section{Analysis of finite NE policies in the uniform distribution case}
	\label{sec:finite}
	
	In this section, we use the uniform distribution example from Section \ref{sec:result_bias} as a running example, introduce a dynamic algorithm (Algorithm \ref{alg:finite}) to approximate the finite NE policies, and perform a statistical comparison between the finite and infinite cases. 
	Additionally, we provide a theoretical analysis of the closeness between the NE policies and associated metrics in the finite and infinite cases.

	\subsection{Empirical analysis}
	\label{sec:finite_empirical}
	
	\paragraph{Dynamics for computing finite NE policies.}  
	Recall that we consider $p_1 = p = \mathrm{Unif}[0,1]$, $p_2 = p_2 = \mathrm{Unif}[0, \rho]$ for $\rho \in [0,1]$, and $p_a \equiv 0$. Algorithm~\ref{alg:finite} presents a dynamic procedure to approximately compute the finite-population NE policies.
	
	We initialize each group’s policy $s^{(0)}_\ell$ with a smoothness variant of the infinite NE policy (Lines 3–4), then iteratively update these policies over $N$ steps and return the final output as an approximation of the finite NE (Lines 5–38). At each iteration $t$:
	\begin{enumerate}
		\item We first update the effort set $E_t$ based on the policies $s_\ell^{(t-1)}$ from the previous iteration (Line 6). 
		Since the action space is continuous, we restrict agents to choose efforts only from this finite set $E_t$.
		\item Next, we update the policy for group $G_1$ using the policy $s_2^{(t-1)}$ from the previous iteration (Lines 7–21). 
		The computation is performed over a finite set $V^{(1)}$ of discrete valuation levels (Line 1). 
		For each valuation $v$, we determine the best-response effort that maximizes the agent’s expected payoff by computing winning probabilities through a convolution of binomials (Lines 9–19). 
		Specifically, we set $p_1 = 1 - v$ in Lines 9 and 12, consistent with the monotonicity constraint enforced in Line 11. 
		Finally, Line 21 updates the policy using a carefully chosen step size $a^{(t)}_\ell$ to ensure convergence.
		\item We then update the policy for group $G_2$ based on the policy $s_1^{(t-1)}$ from the previous iteration (Lines 7–21). 
		This process mirrors that of $G_1$, with the main difference lying in the computation of winning probabilities for each effort in $E_t$ due to the asymmetric valuation distributions.
	\end{enumerate}
	The resulting policies $s_\ell = s_\ell^{(T)}$ are defined on discrete valuation grids. 
	To obtain continuous policies, we interpolate them by connecting adjacent valuation points with straight lines, resulting in piecewise-linear approximations.
	
	\paragraph{Choice of hyperparameters.} 
	In our simulations, we set the valuation resolution $m_v = 101$, effort resolution $m_e = 101$, total number of iterations $T = 500$, and step sizes $a^{(t)}_1 = a^{(t)}_2 =\frac{1}{10T}$.
	We always set $n_1 = n_2 = \frac{n}{2}$, which means $\alpha = 0.5$.

	\paragraph{Metrics.}
	For each iteration $t$, we compute the following metric to evaluate the updated policies $s_\ell^{(t)}$:
	\[
	\Delta^{(\ell, t)} := \frac{\sum_{v \in V^{(\ell)}} |\pi_\ell^{(t)}(v) - s_\ell^{(t-1)}(v)|}{m_v} = \frac{\sum_{v \in V^{(\ell)}} |s_\ell^{(t)}(v) - s_\ell^{(t-1)}(v)|}{a_\ell^{(t)} m_v},
	\]  
	which quantifies the average policy update for group $G_\ell$ at iteration $t$. Intuitively, a decreasing $\Delta^{(\ell, t)}$ indicates convergence of the policy sequence $s_\ell^{(t)}$. However, since we work with discretized valuation and effort sets, we do not expect $\Delta^{(\ell, t)}$ to vanish entirely.

	\paragraph{Results.}
	Figures~\ref{fig:policy_n10}, \ref{fig:policy_n100}, \ref{fig:policy_n300}, and \ref{fig:policy_n600} present the evolution of equilibrium policies for population sizes $n \in \{20, 200, 600, 1200\}$ across four time snapshots $t \in \{50, 150, 300, 500\}$, with fixed parameters $\rho = 0.8$ and $c = 0.2$. Although all runs begin with a smoothed version of the infinite-population NE, the dynamics vary significantly with population size. For small $n$ (e.g., $n = 20$), we observe noticeable fluctuations in early iterations, particularly in group $G_2$, whose valuation distribution is more concentrated. By $t = 500$, both policies stabilize, though they retain visible irregularities due to stochasticity in rank-based feedback.
	Even though all runs begin with a smooth initialization based on the infinite-population NE, the dynamics unfold differently depending on population size. For small $n$ (e.g., $n = 20$), we observe noticeable fluctuations  in the early iterations, especially in group $G_2$, whose valuation distribution is more concentrated. At $t = 500$, the policies stabilize but retain some irregularity, reflecting noise in the agent-level ranking and feedback structure.
	
	As $n$ increases, both groups' policies become smoother and stabilize more quickly. By $n = 600$, the effort policies align closely with the infinite NE, and further updates beyond $t = 300$ are negligible. These trends are confirmed by the convergence plots in Figure~\ref{fig:convergence_grid}, which show a sharp reduction in the $\ell_1$-norm policy update $\Delta^{(\ell, t)}$ with increasing $n$. Group $G_1$ consistently converges faster than $G_2$, a pattern attributable to its broader valuation support and greater flexibility in effort choice. Overall, the results illustrate that the infinite-population equilibrium is a good predictor even for moderately sized finite systems, while also quantifying the transient effects and instability that emerge in low-$n$ regimes.

	Interestingly, we also observe from these plots that when $n$ is small ($n = 20, 200$), $s_2(v) > s_1(v)$, while for larger values of $n$ ($n = 600, 1200$), $s_2(v) < s_1(v)$. 
	In the subsequent subsection, we will provide a theoretical analysis to explain the underlying reasons for this behavior.

	\subsection{Theoretical analysis}
	\label{sec:finite_theoretical}
	
	We begin by presenting theoretical evidence for the alignment between finite and infinite NEs, a relationship that is observed empirically.
	In the proof of Theorem \ref{thm:two_general} (see Section \ref{sec:proof_two}), we show that for any finite $n$, $\eps_n$-NE policy $s^{(n)}$ stated in Theorem \ref{thm:two_general} is ``$O(\sqrt{\log n/n})$-close'' to the policy $s$ for infinite $n$ and is  ``$O(\sqrt{\log n/n})$-close'' to an NE policy.
	Specifically, when $p_1, p_2$ are uniform distributions and $p_a$ is a point mass at 0, for any constant $\alpha$, the closeness between $s^{(n)}$ and $s$ can be directly translated into the policy form: $s^{(n)} = 0$ for $v<t-O(\sqrt{\log n/n})$ and $s^{(n)} = t$ for $v\geq t-O(\sqrt{\log n/n})$. 
	For general $p_1,p_2, p_a$, we note that the $O(\sqrt{\log n/n})$-closeness depends on the concept of densities, which is more complex.
	Corollary 3.2 from \cite{chawla2013auctions} implies that the NE policy must be symmetric within each group. 
	Let $s_1$ represent the policy for $G_1$ and $s_2$ for $G_2$. 
	The above analysis indicates that the closeness between $s_1$, $s_2$, and $s$ (from Equation~\eqref{eq:A}) is expected to be bounded by $O(\sqrt{\log n/n})$.

	In the following, we analyze the empirical observations regarding the scaling of $s_1$ and $s_2$.
	Intuitively, $s_1(v)$ increases from approximately 0 to approximately $\sol$ as $v$ increases from $\sol - \sqrt{\log n/n}$ to $\sol + \sqrt{\log n/n}$. 
	If an agent in $G_2$ exerts an effort of $\sol(1 - \sqrt{\log n/n})$, the agent’s winning probability could exceed 95\%.
	This observation motivates the choice of setting $s_2(v) = \sol(1 - \sqrt{\log n/n})$, rather than 0, to generate positive profits when $v > \sol(1 - \sqrt{\log n/n})/95\%$.
	As a result, if $\sol(1 - \sqrt{\log n/n})/95\% \leq \sol$, i.e., $\sqrt{\frac{\log n}{n}} \geq 0.05$, then $s_2(\sol) \geq \sol(1 - \sqrt{\log n/n}) \geq s_1(v)$. 
	Therefore, when $n$ is small and $\sqrt{\frac{\log n}{n}} \geq 0.05$, it holds that $s_2(v) > s_1(v)$ for $v \in \Omega_2$.
	Conversely, when $n$ is large, $\sqrt{\frac{\log n}{n}}$ becomes small, and $s_1(v) > s_2(v)$, as agents in $G_2$ consistently receive lower payoffs than those in $G_1$ when $s_1 = s_2$. 
	This behavior explains the observed scaling between $s_1$ and $s_2$, as discussed in Section~\ref{sec:finite_empirical}.

	Finally, we discuss the impact on metrics when the number $n$ of agents is finite.
	Recall that in the finite $n$ case, we can assume NE policies $s_1$ and $s_2$ for group $G_1$ and $G_2$, respectively.
	As discussed above, when $n$ is not too small, we have 1) $s_1 > s_2$ and 2) $s_1$, $s_2$, and $s$ are $O(\sqrt{\log n/n})$-close.
	Since $s_1 > s_2$ and they converge to the same policy as $n$ grows, the representation ratio $\mathcal{R}_1(A) = \mathbb{E}\left(\frac{|S \cap G_1|}{|G_1|}\right)$ decreases with $n$, while $\mathcal{R}_2(A) = \mathbb{E}\left(\frac{|S \cap G_2|}{|G_2|}\right)$ increases with $n$, where $S$ is the (random) winning set. 
	Consequently, the representation ratio $r_{\mathcal{R}}(A) = \frac{\mathcal{R}_2(A)}{\mathcal{R}_1(A)}$ increases as $n$ grows. 
	Since the gap between $s_1$ and $s_2$ is bounded by $O(\sqrt{\log n/n})$, this results in an increase of $O(\sqrt{\log n/n})$ in $\mathcal{R}_2(A)$ and a similar decrease of $O(\sqrt{\log n/n})$ in $\mathcal{R}_1(A)$ compared to the infinite case. 
	Thus, the increase in $r_{\mathcal{R}}(A)$ should be bounded by $O(\sqrt{\log n/n})$. 
	A similar quantitative analysis applies to the social welfare ratio $r_{\mathcal{S}}(A)$ and average revenue $\mathcal{RV}(A,m)$.

	\algrenewcommand\algorithmicrequire{\textbf{Input:}}
	\algrenewcommand\algorithmicensure{\textbf{Output:}}
	
	\begin{algorithm}
		\caption{\small Dynamics for Computing Finite NE Policies for the Uniform Distribution Case}
		\begin{algorithmic}[1]
			\Require 
			Group sizes $n_1, n_2\geq 1$, selection rate $c\in (0,1)$, bias $\rho\in [0,1]$, valuation steps $m_v\geq 1$, effort steps $m_e\geq 1$, an interger $T\geq 1$, and two sequences of step sizes $\{a^{(t)}_\ell \}_{t\in [T], \ell\in \{1,2\}}$
			\Ensure Policies $s_1, s_2$ for $G_1$ and $G_2$, respectively
			\State Initialize grids 
			$V^{(1)} \gets \mathrm{linspace}(0,1, m_v)$,\;
			$V^{(2)} \gets \mathrm{linspace}(0,\rho, m_v)$, and $E = \mathrm{linspace}(0, 1, m_e)$.
			\State Compute $\alpha \gets \frac{n_2}{n_1 + n_2}$ and $k \gets \lfloor c(n_1+n_2)\rfloor$
			\State Compute threshold 
			$\sol = 
			\begin{cases}
				1 - \dfrac{c}{1-\alpha}, & \rho < 1 - \dfrac{c}{1-\alpha},\\[6pt]
				\dfrac{\rho\,(1 - c)}{\rho - \rho\,\alpha + \alpha}, & \text{otherwise},
			\end{cases}$
			\State Initialize 
			$s^{(0)}_1(v) = s^{(0)}_2(v) = \frac{\sol}{1 + e^{-50(v-\theta)}}$ \Comment{Smoothness of the infinite NE in Proposition \ref{prop:uniform}} 
			\For{$t = 1$ \textbf{to} $T$}
			\State $E_t\leftarrow E\cup \left\{s_\ell^{(t-1)}(v) \mid v\in V^{(\ell)}, \ell\in \{1,2\}\right\}$ \Comment{Updating effort set}
			\State $last_e\gets 0$
			\For{$i = 1$ \textbf{to} $m_v$}
			\State $v \gets V^{(1)}_i$, $best_e \gets last_e$, $p_1 \leftarrow 1 - v$ and $p_2\leftarrow  \max_{v'\in V^{(2)}: s_2^{(t-1)}(v') \geq best_e} \frac{\rho - v'}{\rho}$ 
			\State $p^{(1)}(best_e) \leftarrow \sum_{a = 0}^{k-1} \sum_{b=0}^{a} \left(\binom{n_1-1}{b} p_1^b (1-p_1)^{n_1-1-b} \right)\cdot \left(\binom{n_2}{a-b} p_2^{a-b} (1-p_1)^{n_2-a+b}\right)$
			\ForAll{$e\in E_t$ with $e \ge last_e$}
			\State $p_1 \leftarrow 1 - v$ and $p_2\leftarrow  \max_{v'\in V^{(2)}: s_2^{(t-1)}(v') \geq best_e} \frac{\rho - v'}{\rho}$
			\State $p^{(1)}(e) \leftarrow \sum_{a = 0}^{k-1} \sum_{b=0}^{a} \left(\binom{n_1-1}{b} p_1^b (1-p_1)^{n_1-1-b} \right)\cdot \left(\binom{n_2}{a-b} p_2^{a-b} (1-p_1)^{n_2-a+b}\right)$
			\State $pay \gets p^{(1)}(e)\,v - e$
			\If{$pay > \,p^{(1)}(best_e)\,v - best_e\,$}
			\State $best_e \gets e$
			\EndIf
			\EndFor
			\State $\pi^{(t)}_1(v) \gets best_e,\; last_e\gets best_e$
			\EndFor 
			\State $s_1^{(t)} \leftarrow s_1^{(t-1)} + a_1^{(t)} (\pi^{(t)}_1 - s_1^{(t-1)})$
			\Comment{Update policy for $G_1$ at iteration $t$}
			\State $last_e\gets 0$
			\For{$i = 1$ \textbf{to} $m_v$}
			\State $v \gets V^{(2)}_i$, $best_e \gets last_e$, $p_1 \leftarrow \max_{v'\in V^{(1)}: s_1^{(t)}(v') \geq best_e} 1 - v$ and $p_2\leftarrow  \frac{\rho - v'}{\rho}$ 
			\State $p^{(2)}(best_e) \leftarrow \sum_{a = 0}^{k-1} \sum_{b=0}^{a} \left(\binom{n_1}{b} p_1^b (1-p_1)^{n_1-b} \right)\cdot \left(\binom{n_2-1}{a-b} p_2^{a-b} (1-p_1)^{n_2-1-a+b}\right)$
			\ForAll{$e\in E_t$ with $e \ge last_e$}
			\State $p_1 \leftarrow \max_{v'\in V^{(1)}: s_1^{(t)}(v') \geq best_e} 1 - v$ and $p_2\leftarrow  \frac{\rho - v'}{\rho}$ 
			\State $p^{(2)}(e) \leftarrow \sum_{a = 0}^{k-1} \sum_{b=0}^{a} \left(\binom{n_1}{b} p_1^b (1-p_1)^{n_1-b} \right)\cdot \left(\binom{n_2-1}{a-b} p_2^{a-b} (1-p_1)^{n_2-1-a+b}\right)$
			\State $pay \gets p^{(2)}(e)\,v - e$
			\If{$pay > \,p^{(2)}(best_e)\,v - best_e\,$}
			\State $best_e \gets e$
			\EndIf
			\EndFor
			\State $\pi^{(t)}_2(v) \gets best_e,\; last_e\gets best_e$
			\EndFor 
			\State $s_2^{(t)} \leftarrow s_2^{(t-1)} + a_2^{(t)} (\pi^{(t)}_2 - s_2^{(t-1)})$
			\Comment{Update policy for $G_2$ at iteration $t$}
			\EndFor
			\State \Return Policies $s_1\leftarrow s^{(T)}_1,\;s_2\leftarrow s^{(T)}_2$
		\end{algorithmic}
		
		\label{alg:finite}
	\end{algorithm}

	\begin{figure}[htbp]
		
		% Row 1
		\includegraphics[width=0.45\textwidth]{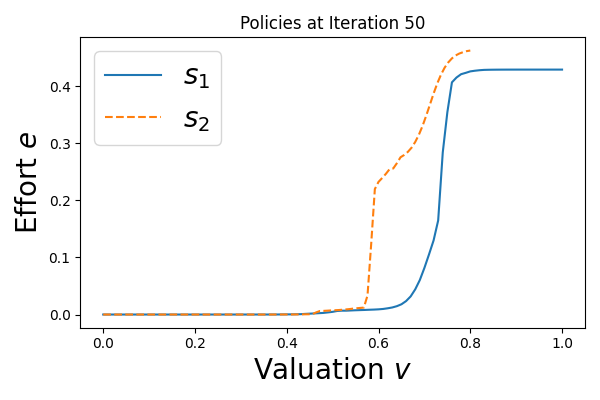}
		\hfill
		\includegraphics[width=0.45\textwidth]{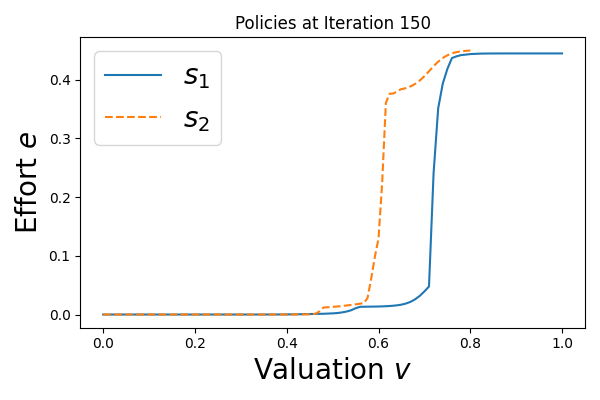}
		%   
		%\vspace{1em}
		% Row 2
		\includegraphics[width=0.45\textwidth]{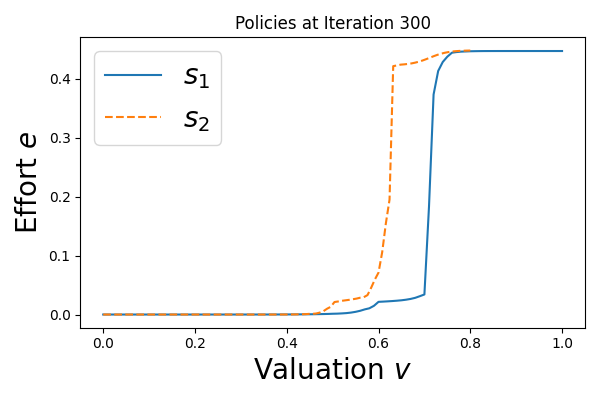}
		\hfill  
		\includegraphics[width=0.45\textwidth]{final-figures/finite_n/n_10/policy_iter_500.png}

		\caption{\small Evolution of group effort policies over time for $n = 20$, $\rho = 0.8$, and $c = 0.2$.}
		\label{fig:policy_n10}
	\end{figure}

	\begin{figure}[htbp]
		
		% Row 1
		\includegraphics[width=0.45\textwidth]{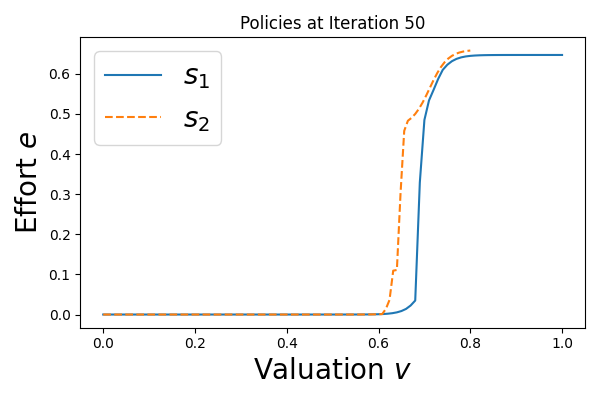}
		\hfill
		\includegraphics[width=0.45\textwidth]{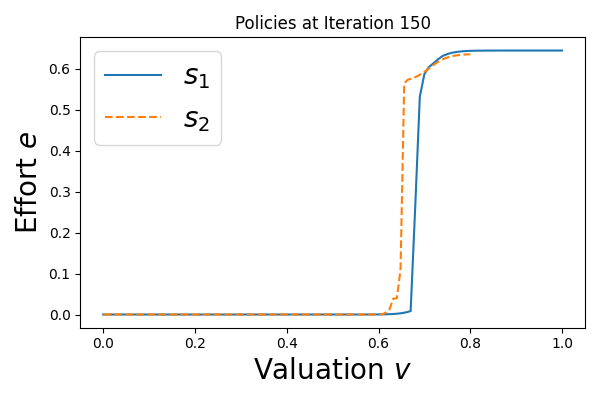}
		%    
		%\vspace{1em}
		% Row 2
		\includegraphics[width=0.45\textwidth]{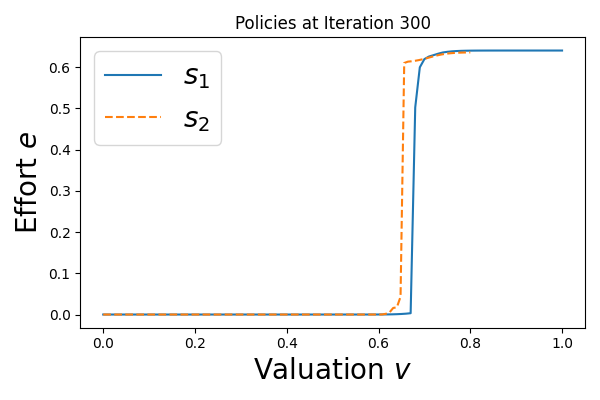}
		\hfill  
		\includegraphics[width=0.45\textwidth]{final-figures/finite_n/n_100/policy_iter_500.png}

		\caption{\small Evolution of group effort policies over time for $n = 200$, $\rho = 0.8$, and $c = 0.2$.}
		\label{fig:policy_n100}
	\end{figure}

	\begin{figure}[htbp]
		
		% Row 1
		\includegraphics[width=0.45\textwidth]{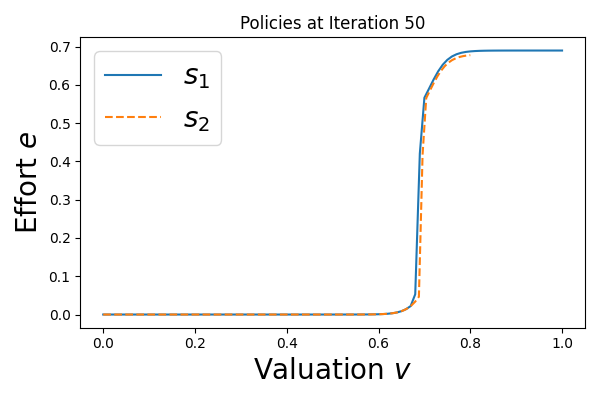}
		% \caption*{Iteration $t = 50$}  
		\hfill
		\includegraphics[width=0.45\textwidth]{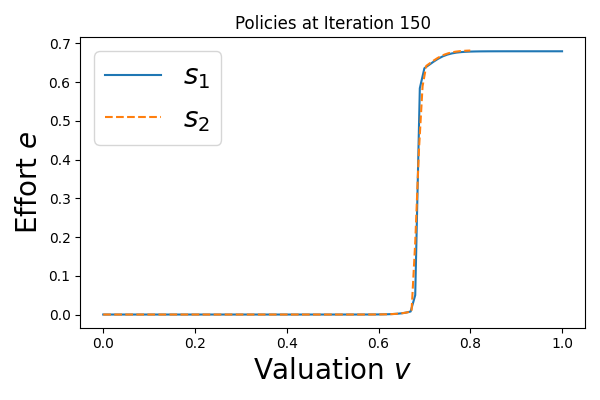}
		%     \caption*{Iteration $t = 150$} 
		%\vspace{1em}
		% Row 2
		\includegraphics[width=0.45\textwidth]{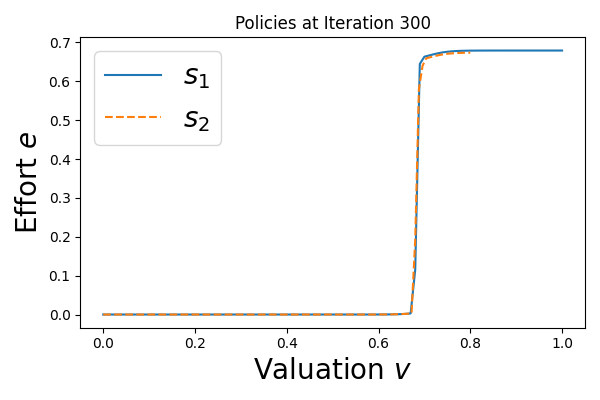}
		%   \caption*{Iteration $t = 300$}
		\hfill  
		\includegraphics[width=0.45\textwidth]{final-figures/finite_n/n_300/policy_iter_500.png}
		%   \caption*{Iteration $t = 500$}

		\caption{\small Evolution of group effort policies over time for $n = 600$, $\rho = 0.8$, and $c = 0.2$.}
		\label{fig:policy_n300}
	\end{figure}

	\begin{figure}[htbp]
		
		% Row 1
		\includegraphics[width=0.45\textwidth]{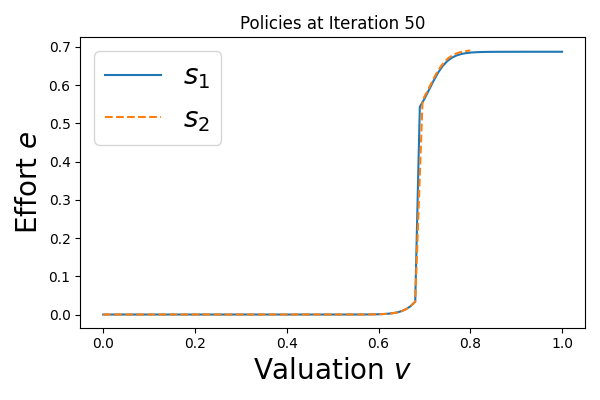}
		% \caption*{Iteration $t = 50$}  
		\hfill
		\includegraphics[width=0.45\textwidth]{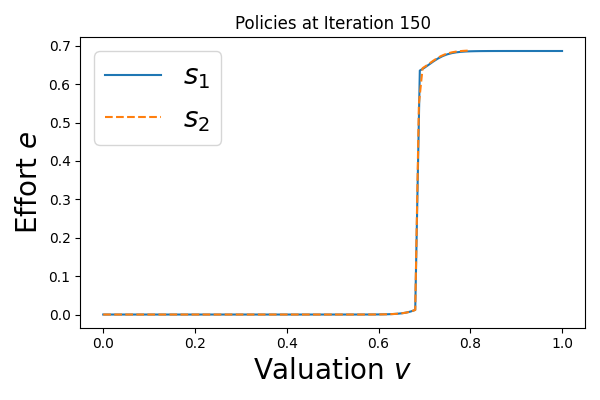}
		%     \caption*{Iteration $t = 150$} 
		%\vspace{1em}
		% Row 2
		\includegraphics[width=0.45\textwidth]{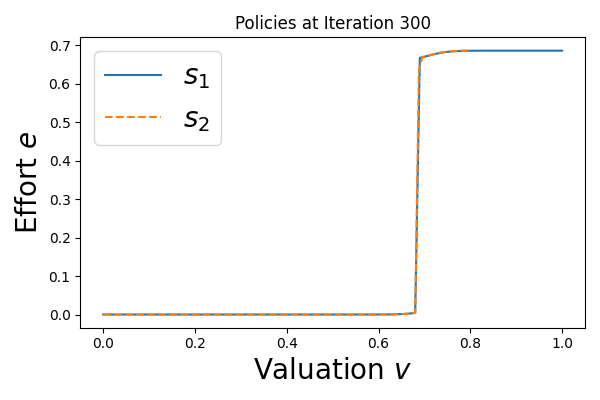}
		%   \caption*{Iteration $t = 300$}
		\hfill  
		\includegraphics[width=0.45\textwidth]{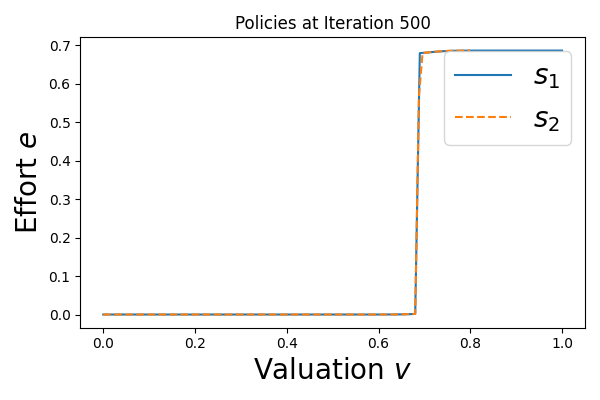}
		%   \caption*{Iteration $t = 500$}

		\caption{\small Evolution of group effort policies over time for $n = 1200$, $\rho = 0.8$, and $c = 0.2$.}
		\label{fig:policy_n600}
	\end{figure}

	\begin{figure}[htbp]
		
		% Row 1
		\includegraphics[width=0.45\textwidth]{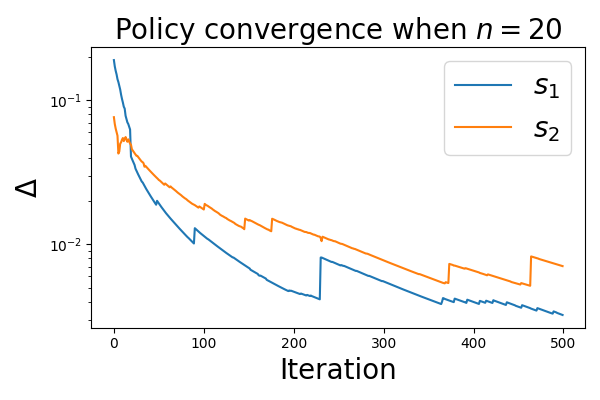}
		% \caption*{Iteration $t = 50$}  
		\hfill
		\includegraphics[width=0.45\textwidth]{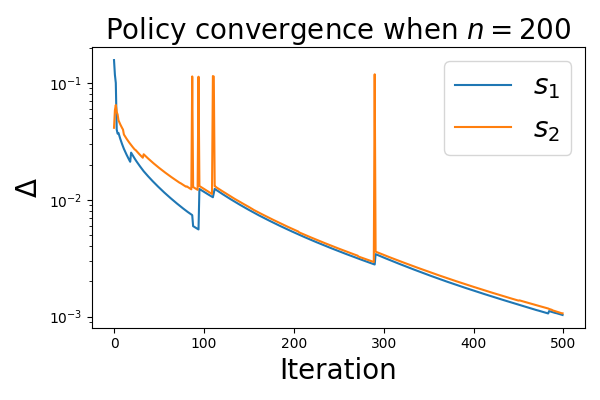}
		%     \caption*{Iteration $t = 150$} 
		%\vspace{1em}
		% Row 2
		\includegraphics[width=0.45\textwidth]{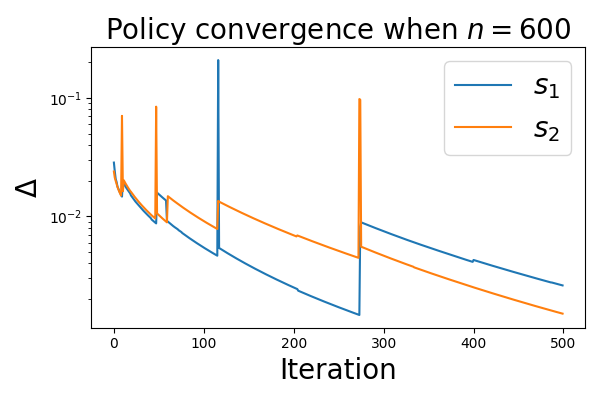}
		%   \caption*{Iteration $t = 300$}
		\hfill  
		\includegraphics[width=0.45\textwidth]{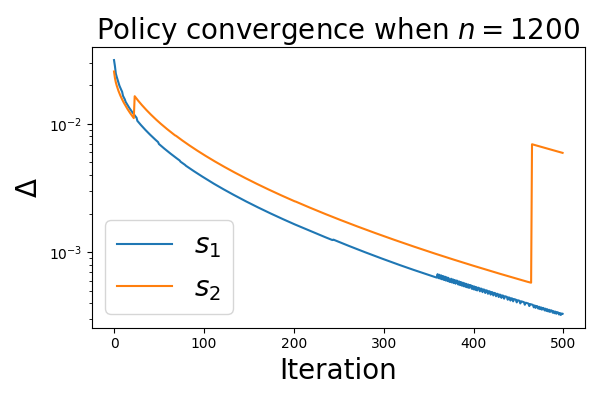}
		%   \caption*{Iteration $t = 500$}

		\caption{\small Convergence of group-wise policy updates $\Delta^{(\ell, t)}$ for different population sizes $n$, with fixed parameters $\rho = 0.8$ and $c = 0.2$.}
		\label{fig:convergence_grid}
	\end{figure}
	
	%%%%%%%%%%%%%%%%%%%%%%%%%%%%%%%%%%%%%%%%%%%%%%%%%%%%%%%%%%%%%%%%%%
	\section{Other bias models and analysis of metrics for their Nash equilibrium}
	\label{sec:other_normal}
	
	\subsection{Other bias models}
	\label{sec:other_bias}

	A natural extension of $p_1 = \mathrm{Unif}[0,1]$ in Section \ref{sec:model} is when $p_1$ is the density of the uniform distribution on an interval $[a,b]$ ($0 < a < b \leq \infty$) and $p_2$ is the density of the uniform distribution on $\Omega_2 = [\rho a, \rho b]$. 
	Then $p_2(v) = \frac{1}{\rho(b-a)}$ and again, $ \mathbb{E}_{v\sim p_2}[v] = \rho \cdot \frac{a+b}{2} = \rho\cdot \mathbb{E}_{v\sim p_1} [v]$. 
	More generally, one might consider a density $p_1$ that is supported on a domain $\Omega_1 = [a, \infty]$, along with a $\rho$-biased density defined as $p_2(v) = \frac{1}{\rho} p_1(\frac{v}{\rho})$ for $\rho\in (0,1]$ and $v\in[\rho a, \infty]$.

	Besides the uniform distribution case, we consider valuations coming from a truncated normal distribution supported on $[0,1]$.
	Formally, let $p_1$ be the density of a truncated normal distribution $N(\mu, \sigma^2)$ on the interval $\Omega_1 = [0,1]$, where $\mu$ lies within $(0,1)$ and $\sigma > 0$.
	Let $p_2$ be the density of a truncated normal distribution $N(\rho \mu, \sigma^2)$  on the interval $\Omega_2 = [0,1]$.
	Since the bias is multiplicative, the domain of $p_1$, $\Omega_1 = [0,1]$,  does not influence the assessment of the contest's results.
	Note that $\mathbb{E}_{v\sim p_2}[v] = \rho \mu + \frac{\phi(\frac{-\rho\mu}{\sigma}) - \phi(\frac{1-\rho\mu}{\sigma})}{\Phi(\frac{1-\rho\mu}{\sigma}) - \Phi(\frac{-\rho\mu}{\sigma})}$, where $\phi(x)$ is the probability density function of the standard normal distribution $N(0,1)$ and $\Phi(x)$ is its cumulative distribution function.
	The expectation of $p_2$ does not decrease linearly with $\rho$ as in the uniform case, but it closely approximates a linear function and monotonically decreases with $\rho$.
	This is motivated by real-world settings where the valuations (such as pay or SAT scores) exhibit a truncated normal distribution \cite{Rick2013TruncatedNormalSAS}.
	Other variants of distributions include piecewise-linear, polynomial (such as Pareto), and log-normal distributions, along with their biased versions.

	We implicitly assume that the bias parameter $\rho$ is fixed and identical for all agents in $G_2$ above. 
	However, $\rho$ could be noisy and non-identical to agents.
	For instance, let $p_\rho$ be a density supported on $[0,1]$. 
	We assume each agent $i\in G_2$ has an individual bias $\rho_i$ i.i.d. drawn from $p_\rho$, and its valuation is drawn from the $\rho_i$-biased density of $p_1$.
	Then $p_2$ is supported on $\Omega_2 = \Omega_1$, and satisfies that for any $v\in \Omega_1$, 
	\[
	p_2(v) = \int_{0}^{1} \frac{1}{x} p_\rho(x) p_1(\frac{v}{x}) d x.
	\]
	
	\subsection{Analysis of metrics for Nash equilibrium in the truncated normal distribution case}
	\label{sec:analysis_normal}
	
	\begin{figure}[t]
		\centering
		
		\subfigure[Density $p_2$ for various $\rho$]{
			\includegraphics[width=0.45\linewidth, trim={0cm 0cm 0cm 0cm},clip]{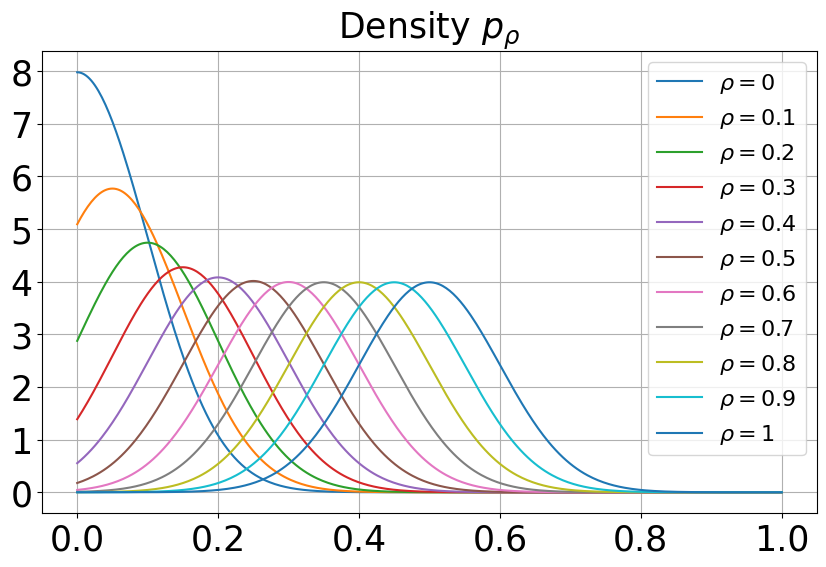}
		}
		\subfigure[Expected value of $p_2$]{
			\includegraphics[width=0.45\linewidth, trim={0cm 0cm 0cm 0cm},clip]{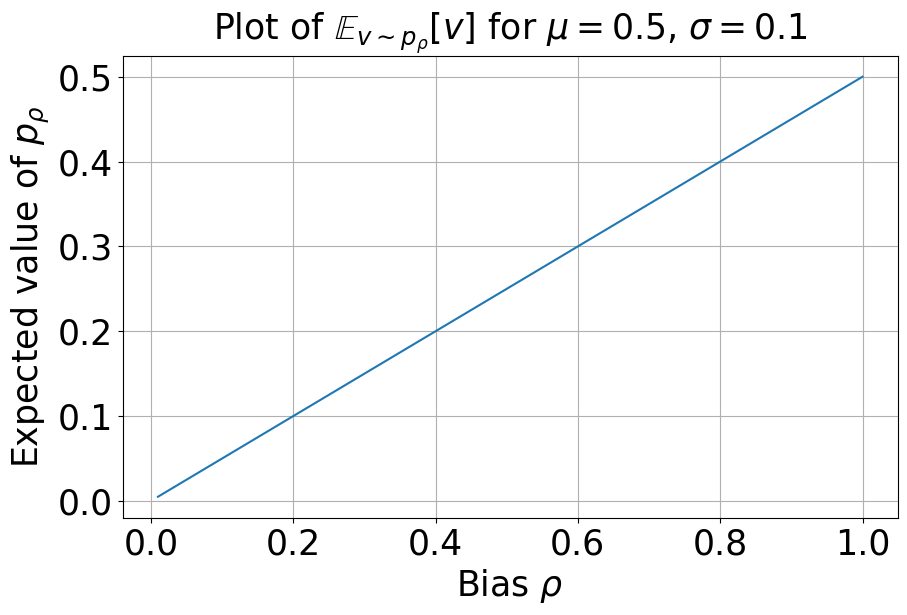}
		}

		\caption{\small Statistics for truncated normal distribution with $\mu = 0.5$, $\sigma = 0.1$.}
		\label{fig:normal_ex}
	\end{figure}
	
	\begin{figure}[t]
		\centering
		\includegraphics[width=0.5\linewidth]{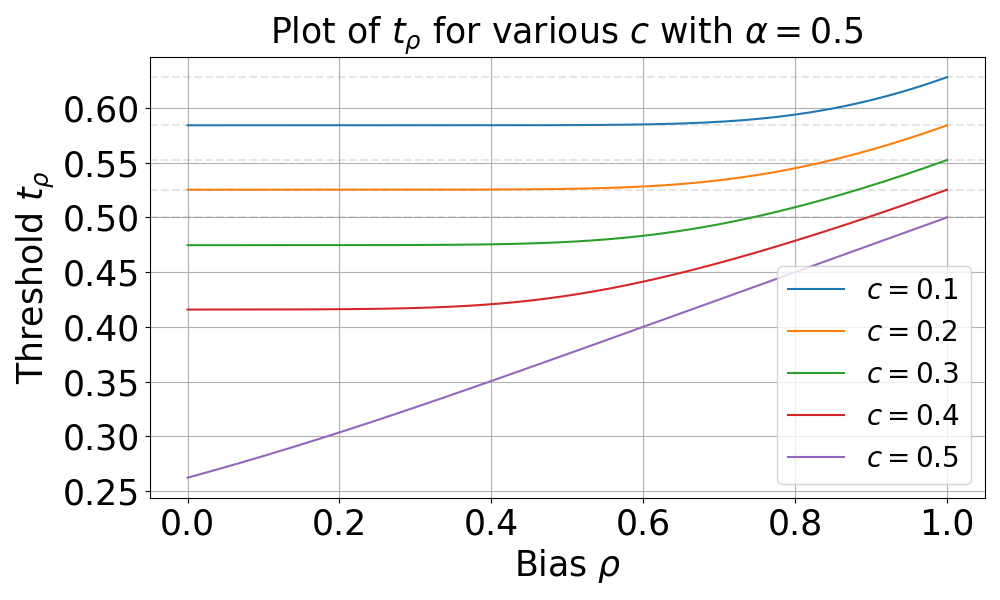}
		%%%\vspace{-0.3in}
		\caption{\small Plots of $\sol$ versus $\rho$ for various $\alpha$ with $c = 0.1$ for the truncated normal distribution. The dotted line $t_1 = 0.9$ corresponds to the undifferentiated contest with density $p = p_1$.}
		\label{fig:normal_t}
	\end{figure}

	In this section, we do a similar analysis as in Section \ref{sec:analysis} for the case that $p_1$ is a truncated normal distribution $N(\mu, \sigma^2)$ supported on $[0,1]$, $p_2$ is a $\rho$-biased truncated normal distribution $N(\rho \mu, \sigma^2)$ supported on $[0,1]$, and $p_a$ is a point mass at 0.
	We choose $\mu = 0.5$ and $\sigma = 0.1$. 
	This selection ensures that the density function is narrowly focused around the mean and the expected value of $p_2$ is approximately $\rho \mu$; see Figure \ref{fig:normal_ex} for illustration.
	Note that $\sol$ analogues to Proposition \ref{prop:uniform} is the solution of the following equation:
	\begin{eqnarray}
		\label{eq:normal_t}
		(1-\alpha)\cdot \frac{\phi(\frac{v-\mu}{\sigma}) - \phi(\frac{1-\mu}{\sigma})}{\Phi(\frac{1-\mu}{\sigma}) - \Phi(\frac{-\mu}{\sigma})} + \alpha\cdot \frac{\phi(\frac{v-\rho\mu}{\sigma}) - \phi(\frac{1-\rho\mu}{\sigma})}{\Phi(\frac{1-\rho\mu}{\sigma}) - \Phi(\frac{-\rho\mu}{\sigma})} = 1-c.
	\end{eqnarray}
	Unlike the uniform distribution case, it is hard to derive closed-form expressions for metrics on the outcomes of the contest. 
	However, one can do numerical computations and we plot solution $\sol$, representation ratio $r_{\mathcal{R}}(A)$ and group-wise social welfare $\mathcal{S}_\ell(A)$ together with social welfare ratio $r_{\mathcal{S}}(A)$ in Figures \ref{fig:normal_t} and \ref{fig:normal_rR} respectively.
	All plots exhibit a monotonic behavior similar to that observed with the uniform distribution. 
	Next, we highlight some distinctions with the uniform distribution.

	\paragraph{No inflection point.}
	A notable feature of the truncated normal distribution is its lack of an inflection point. This trait is observed not only for $\sol$, but also in the behaviors of $r_{\mathcal{R}}(A)$ and $r_{\mathcal{S}}(A)$. 
	This difference arises because the domain $\Omega_2 = [0,1]$ remains consistent across all values of $\rho$.

	\paragraph{Representation ratio.}
	Figure \ref{fig:normal_rR} shows that to achieve a representation ratio $r_{\mathcal{R}}(A) \geq 0.8$, it is necessary to adjust $\rho$ to a minimum of 0.979 or increase $c$ to at least 0.862. 
	The need to elevate $c$ is more pronounced than the required 0.5 observed with the uniform distribution in Figure \ref{fig:uniform}(b). 
	This difference arises because the truncated normal distribution tends to be more focused around its mean, leading to a higher number of agents in $G_1$ possessing valuations greater than the expected value $\approx 0.4$ of $p_2$.

	\begin{figure}[t]
		\centering
		
		\subfigure[\text{$r_{\mathcal{R}}(A)$ v.s. $\rho$}]{\label{fig:normal_rR_rho}
			\includegraphics[width=0.22\linewidth, trim={0cm 0cm 0cm 0cm},clip]{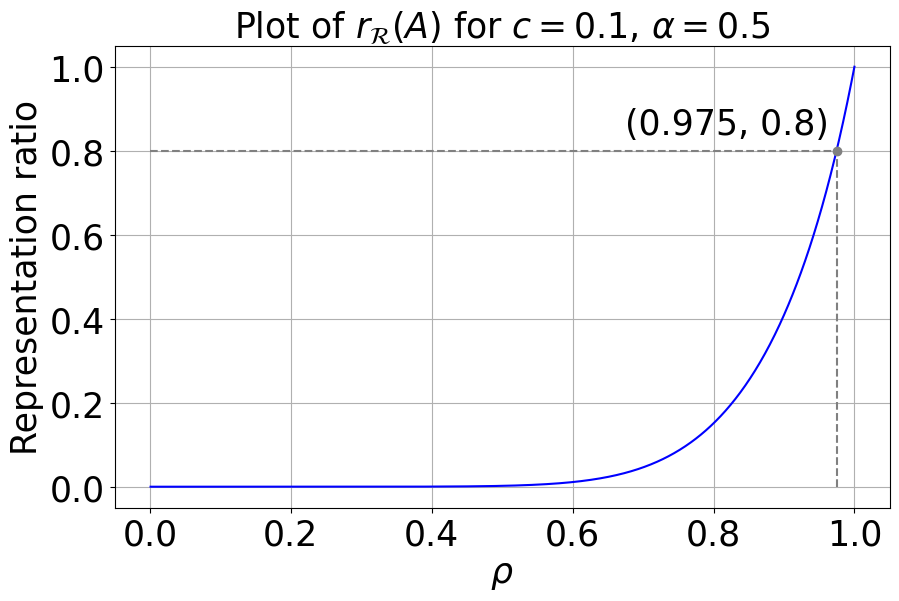}
		}
		\subfigure[\text{$r_{\mathcal{R}}(A)$ v.s. $c$}]{\label{fig:normal_rR_c}
			\includegraphics[width=0.22\linewidth, trim={0cm 0cm 0cm 0cm},clip]{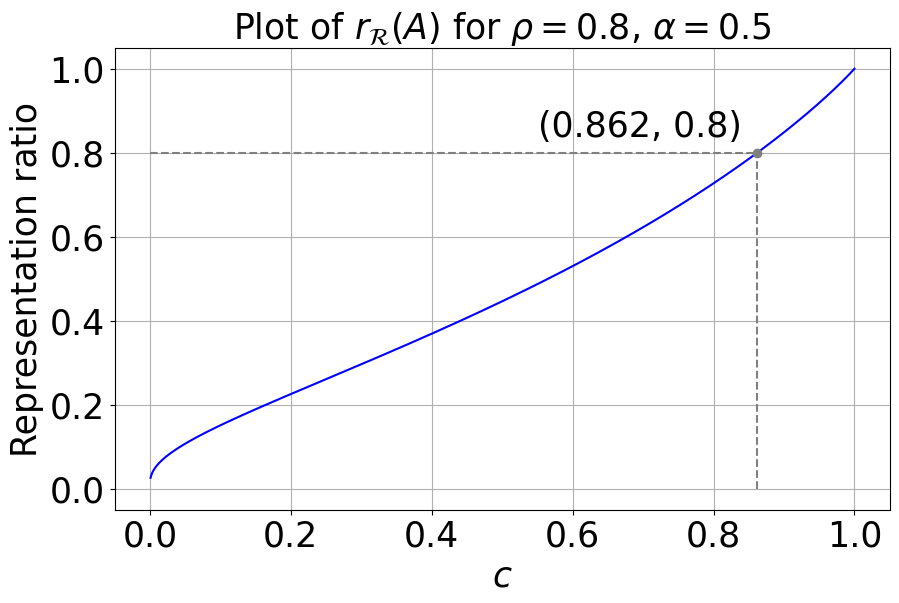}
		}
		\subfigure[\text{$r_{\mathcal{S}}(A)$ v.s. $\rho$}]{
			\includegraphics[width=0.22\linewidth, trim={0cm 0cm 0cm 0cm},clip]{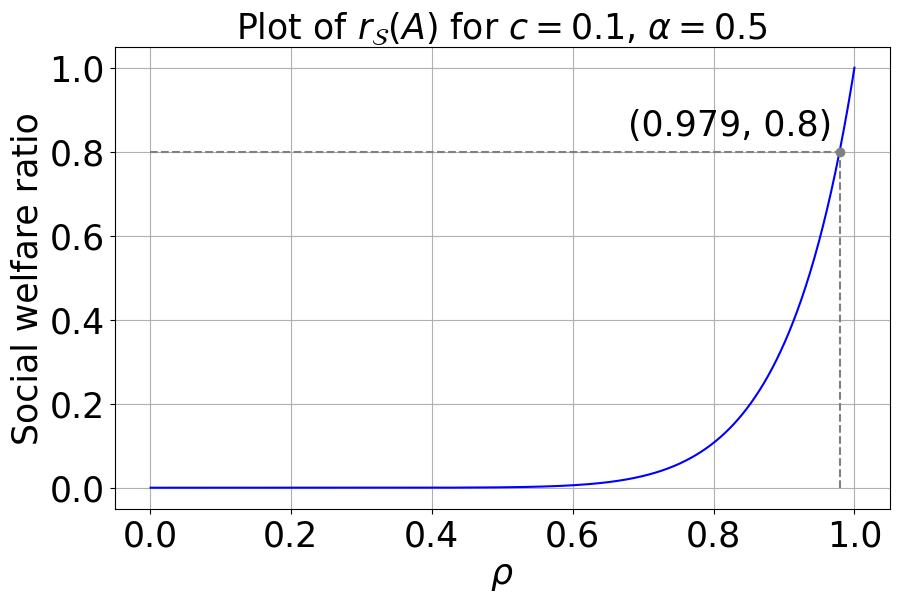}
		}
		\subfigure[\text{$r_{\mathcal{S}}(A)$ v.s. $c$}]{
			\includegraphics[width=0.22\linewidth, trim={0cm 0cm 0cm 0cm},clip]{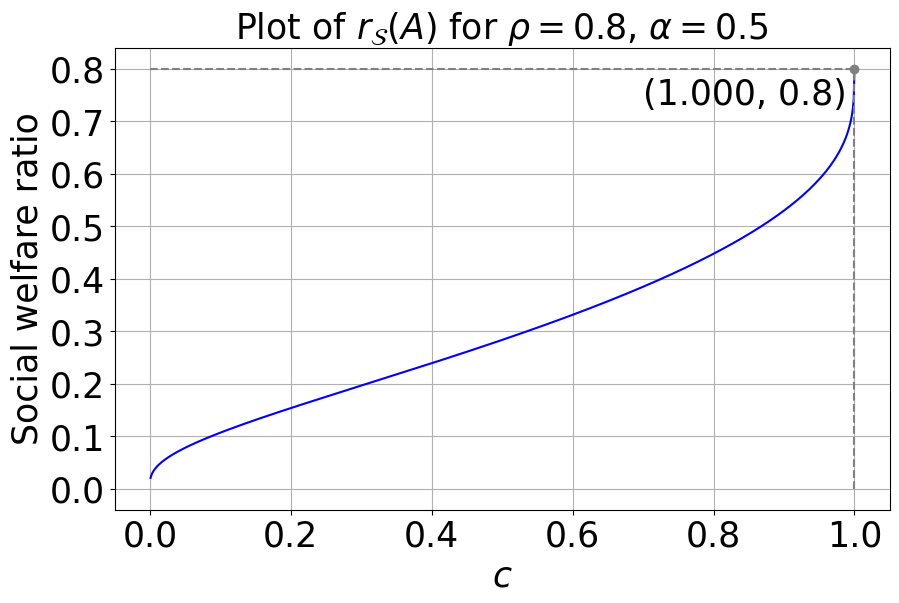}
		}
		%\vspace{-0.1in}
		\caption{\small Plots illustrating the group-wise social welfare $\mathcal{S}_1(A), \mathcal{S}_2(A)$, and the social welfare ratio $r_{\mathcal{S}}(A)$ as functions of the parameters $\rho$, $c$, and $\alpha$ for the truncated normal distribution. By default, we set $(\rho, c, \alpha) = (0.8, 0.1, 0.5)$. A dotted line within these plots indicates the threshold at which $r_{\mathcal{S}}(A) = 0.8$.}
		\label{fig:normal_rR}
		%\vspace{-0.23in}
	\end{figure}

	\section{Additional details to Section \ref{sec:analysis}}
	\label{sec:additional_analysis}
	
	In this section, we first illustrate details for how to estimate perceived bias from JEE Advanced 2024 (Section \ref{sec:JEE}).
	Then we provide a robustness analysis for key findings in Section \ref{sec:analysis} by varying $\alpha$ and $c$ (Section \ref{sec:robustness}).
	Next, we give an illustrative example for the practical use of our model, including how to make interpretable predictions and policy interventions (Section \ref{sec:example_summary}).
	Finally, we provide omitted details for alternative interventions in Section \ref{sec:analysis}.
	
	\subsection{Case study: estimating perceived bias from JEE Advanced 2024}\label{sec:JEE}
	
	To illustrate our framework in a high-stakes meritocratic setting, we calibrate the model using data from JEE Advanced 2024, the entrance examination for admission to the Indian Institutes of Technology (IITs). The gender-disaggregated statistics were published by the Government of India’s Press Information Bureau~\cite{jee2024results}:
	
	\begin{center}
		\begin{tabular}{lrr}
			\toprule
			\textbf{Group} & \textbf{Candidates Appeared} & \textbf{Qualified} \\
			\midrule
			Male   & 139{,}180 & 40{,}284 \\
			Female & 41{,}020  & 7{,}964  \\
			Total  & 180{,}200 & 48{,}248 \\
			\bottomrule
		\end{tabular}
	\end{center}
	
	\smallskip
	\noindent
	{\bf Model calibration.}
	We define the disadvantaged group as \textit{female candidates} and the advantaged group as \textit{male candidates}. From the data:
	
	\begin{itemize}
		\item Proportion of disadvantaged applicants:
		\[
		\alpha = \frac{41{,}020}{180{,}200} \approx 0.228
		\]
		
		\item Selection rate for the entire applicant pool:
		\[
		c = \frac{48{,}248}{180{,}200} \approx 0.268
		\]
		
		\item Admit rate for each group:
		\[
		\text{Female admit rate} = \frac{7{,}964}{41{,}020} \approx 0.194,
		\quad
		\text{Male admit rate} = \frac{40{,}284}{139{,}180} \approx 0.289
		\]
		
		\item Observed representation ratio:
		\[
		r_{\text{obs}} = \frac{0.194}{0.289} \approx 0.671
		\]
	\end{itemize}
	
	\smallskip
	\noindent
	{\bf Solving for the bias parameter \boldmath{\(\rho\)}.}
	Using the closed-form expression for the representation ratio in the uniform-valuations model:
	\[
	r_R(\rho, c, \alpha) = 1 - \frac{(1 - c)(1 - \rho)}{\alpha - \alpha \rho + c \rho},
	\]
	we plug in \( r_R = 0.671 \), \( c = 0.268 \), and \( \alpha = 0.228 \) to solve for \( \rho \):
	\[
	0.671 = 1 - \frac{(1 - 0.268)(1 - \rho)}{0.228 - 0.228\rho + 0.268\rho}.
	\]
	Simplifying both sides:
	\[
	0.329 = \frac{0.732(1 - \rho)}{0.228 + 0.04\rho},
	\]
	\[
	0.329(0.228 + 0.04\rho) = 0.732(1 - \rho).
	\]
	Compute both sides:
	\[
	0.0749 + 0.01316\rho = 0.732 - 0.732\rho.
	\]
	Bring all terms to one side:
	\[
	0.74516\rho = 0.6571
	\quad \Rightarrow \quad
	\rho \approx \frac{0.6571}{0.74516} \approx 0.882.
	\]
	The inferred bias parameter is:
	\[
	{\rho \approx 0.882},
	\]
	which reflects a perceived disadvantage for female candidates: they value qualification outcomes at roughly 88.2\% of their male counterparts' valuation, consistent with the observed gender gap in qualification rates. This example demonstrates how our model can be applied to quantify bias in selection systems using real-world statistics.
	
	\subsection{Robustness analysis for findings in Section \ref{sec:analysis}}
	\label{sec:robustness}
	
	In this section, we assess whether our core conclusions in Section \ref{sec:analysis} depend on the specific parameter settings.
	To verify robustness, we conducted additional simulations varying $\alpha$ and $c$ beyond the default values. Below we summarize our findings:
	
	\paragraph{Metric robustness across group sizes.} We varied $\alpha$ from 0.5 to 0.3 (to represent smaller disadvantaged groups) and recalculated the representation ratio $r_{\mathcal{R}}(A)$ and welfare ratio $r_{\mathcal{S}}(A)$ across a grid of disparity levels ($\rho$) and selection rates ($c$); see Figure \ref{fig:uniform_robust}. 
	The overall trends remain consistent: for example, $r_{\mathcal{R}}(A)\leq 0.2$ still holds for $c=0.1$ and $\rho \leq 0.85$, confirming that strategic behavior amplifies underrepresentation in highly selective settings.
	
	\begin{figure}[t]
		\centering
		
		\subfigure[\text{$r_{\mathcal{R}}(A)$ v.s. $\rho$}]{
			\includegraphics[width=0.22\linewidth, trim={0cm 0cm 0cm 0cm},clip]{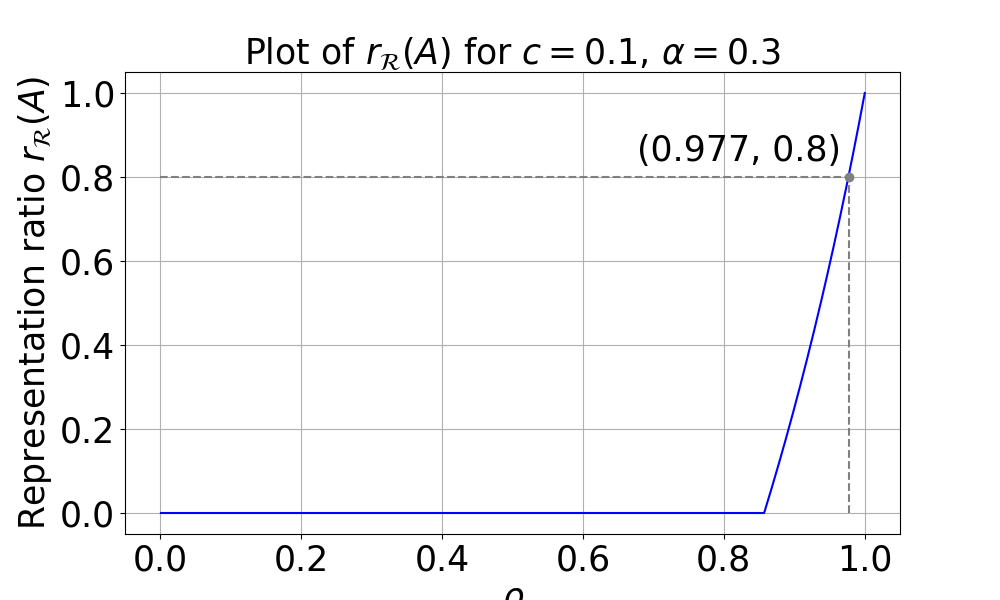}
		}
		\subfigure[\text{$r_{\mathcal{R}}(A)$ v.s. $c$}]{
			\includegraphics[width=0.22\linewidth, trim={0cm 0cm 0cm 0cm},clip]{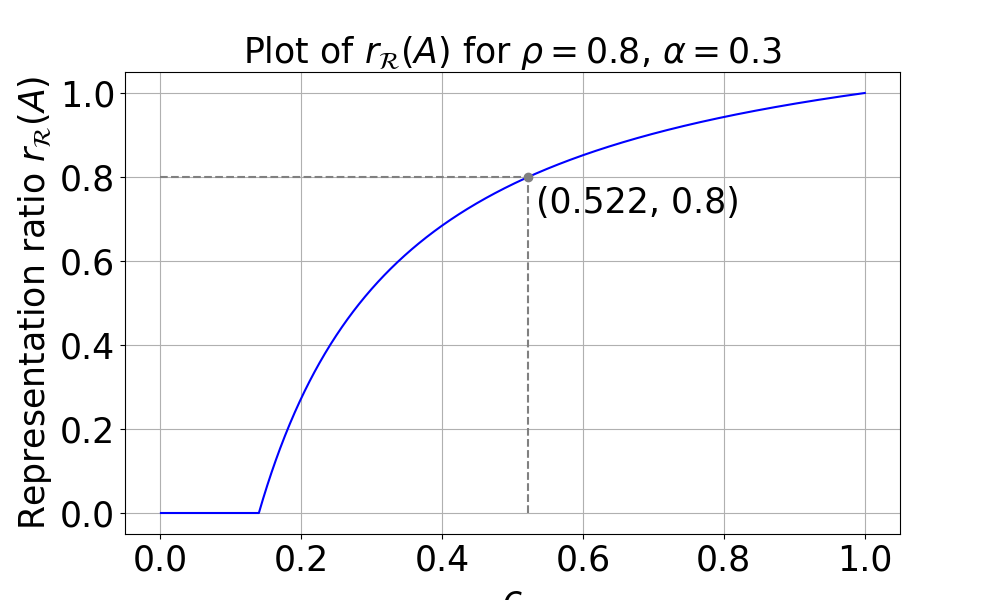}
		}
		\subfigure[\text{$r_{\mathcal{S}}(A)$ v.s. $\rho$}]{
			\includegraphics[width=0.22\linewidth, trim={0cm 0cm 0cm 0cm},clip]{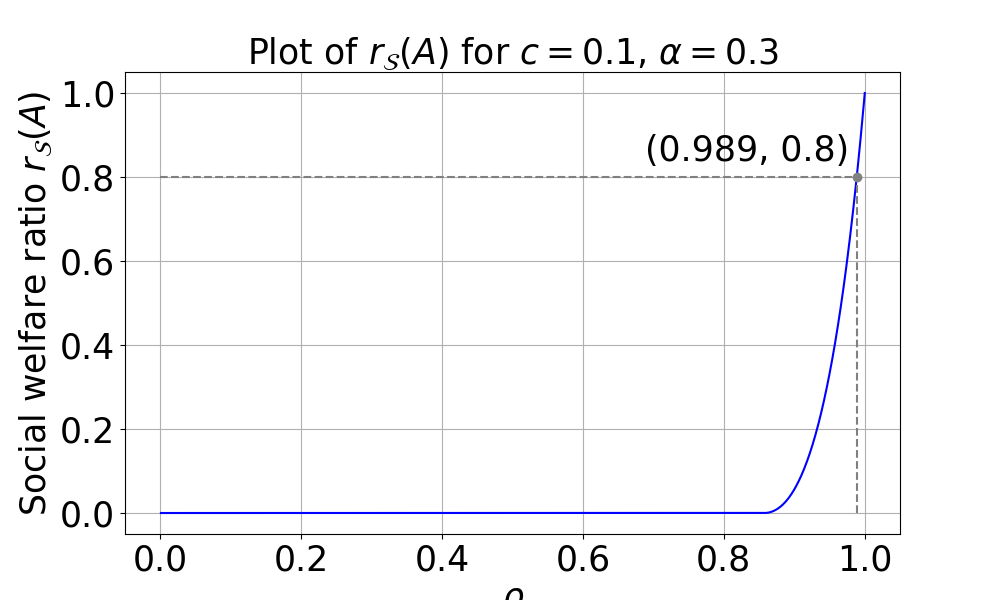}
		}
		\subfigure[\text{$r_{\mathcal{S}}(A)$ v.s. $c$}]{
			\includegraphics[width=0.22\linewidth, trim={0cm 0cm 0cm 0cm},clip]{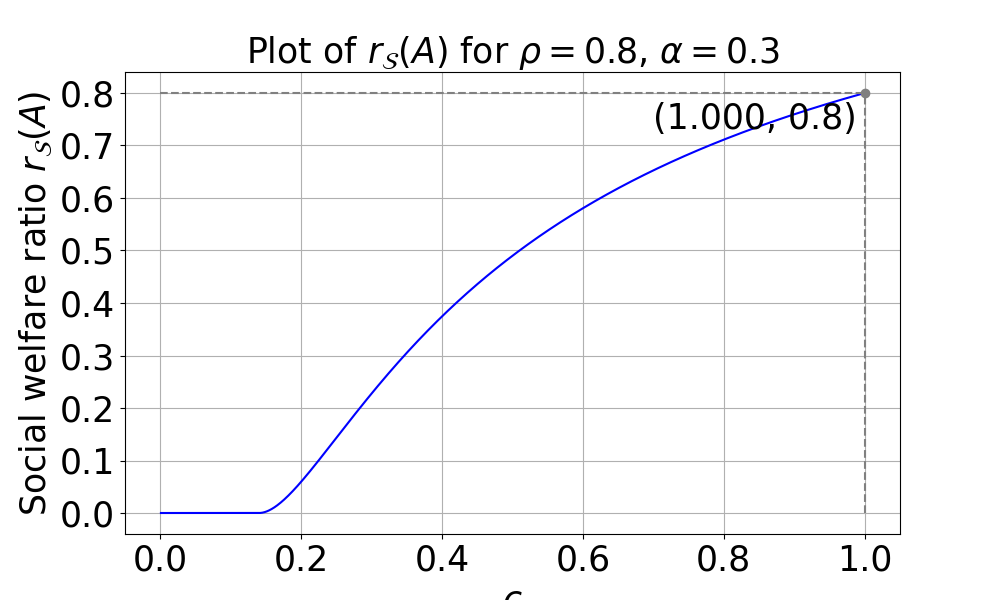}
		}
		\caption{\small Plots of the representation ratio $r_{\mathcal{R}}(A)$ and the social welfare ratio $r_{\mathcal{S}}(A)$ as parameters $\rho$ and $c$ vary for Proposition \ref{prop:uniform}, with default settings of $(\rho, c, \alpha) = (0.8, 0.1, 0.3)$. A dotted line in these plots indicates the threshold at which $r_{\mathcal{R}}(A) = 0.8$ or $r_{\mathcal{S}}(A) = 0.8$.}
		\label{fig:uniform_robust}
		%\vspace{-0.23in}
	\end{figure}
	
	\paragraph{Robustness of intervention takeaways.} We varied $c$ from 0.228 (derived from the JEE Advanced data) to 0.1 and $\alpha$ from 0.268 to 0.5 and re-evaluated intervention strategies. 
	Figure \ref{fig:intervention_robust} plots optimal interventions for various threshold $\tau$.
	The overall trends remain consistent. 
	For instance, in Figure \ref{fig:intervention_c_robust}, when $\tau \leq 0.87$, increasing access (raising $c$) remains more cost-effective. In contrast, when $\tau > 0.87$, improving group valuation (increasing $\rho$) becomes more impactful. 
	This confirms that the recommendation to prioritize access vs. valuation interventions depending on the disparity level remains valid across reasonable choices of $\alpha$ and $c$.

	\begin{figure}[t]
		\centering
		
		\subfigure[\text{When $c = 0.1$}]{
			\includegraphics[width=0.37\linewidth]{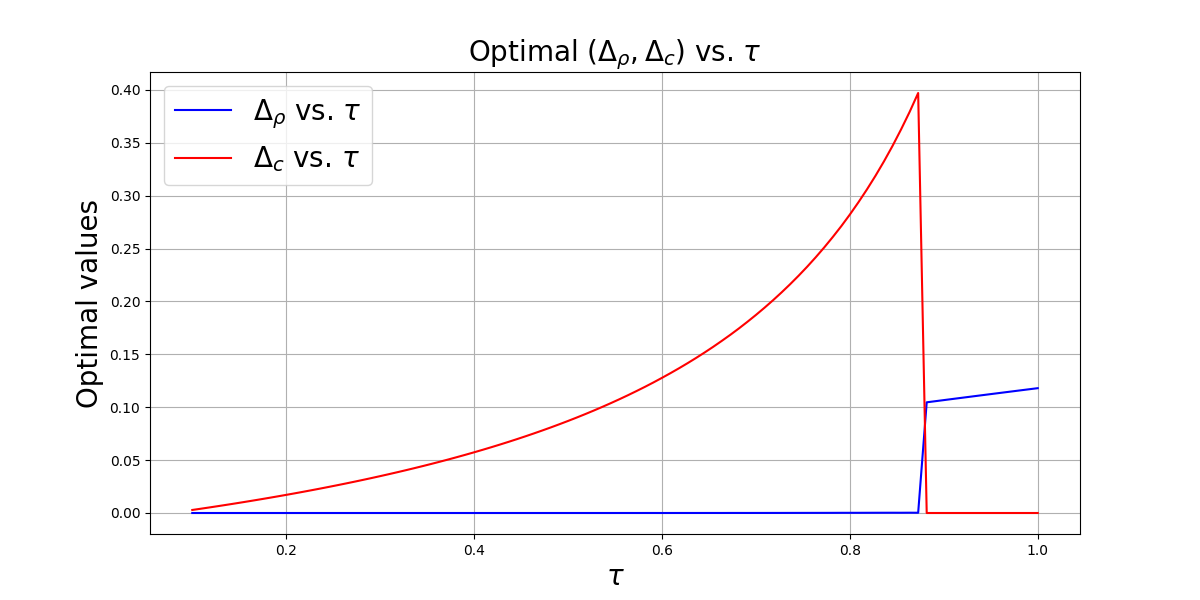}
			\label{fig:intervention_c_robust}
		}
		\subfigure[\text{When $\alpha = 0.5$}]{
			\includegraphics[width=0.37\linewidth]{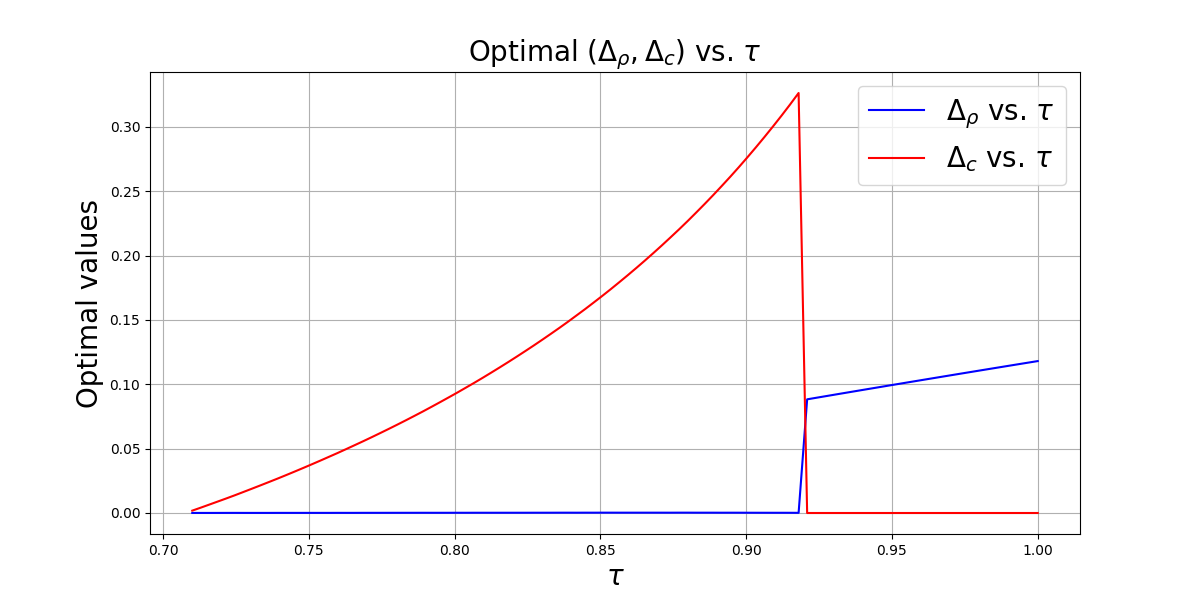}
			\label{fig:intervention_alpha_robust}
		}
		
		\caption{\small Plot of optimal interventions $(\Delta_\rho, \Delta_c)$ for various $\tau$}
		\label{fig:intervention_robust}
	\end{figure}
	
	\subsection{An illustrative example: interpreting and applying the model}
	\label{sec:example_summary}
	
	We provide a concrete example to illustrate how our theoretical framework can be used to diagnose and compare policy interventions.
	
	\paragraph{Interpretable diagnostics.}
	Suppose a policymaker observes persistent underrepresentation of a disadvantaged group (e.g., female students) in a selective admissions process. 
	Given data on the overall selection rate \(c\), group size \(\alpha\), and the observed representation ratio \(r_{\mathcal{R}}(A)\), the policymaker can use our model to infer the implied \emph{valuation gap} parameter \(\rho\) (as shown in Section~\ref{sec:JEE}). 
	This parameter summarizes how much lower the disadvantaged group perceives the value of success relative to the advantaged group.
	
	Although \(\rho\) is not directly observable, its interpretation is transparent: it attributes behavioral disparities (such as lower effort investment) to structural differences in perceived incentives rather than to innate ability. 
	In this way, the model provides a normative reading of observed disparities—as equilibrium responses to valuation asymmetries.
	
	\paragraph{Policy interventions.}
	Once the implied parameters are estimated, the policymaker can consider two classes of interventions:
	\begin{itemize}
		\item \textbf{Valuation-based interventions:} improving the perceived value of success (e.g., through mentorship programs or financial aid), which effectively increases \(\rho\);
		\item \textbf{Access-based interventions:} expanding the number of available slots, thereby increasing \(c\).
	\end{itemize}
	\noindent
	Our framework allows simulation of the effects of each intervention on representation, welfare, and efficiency, enabling counterfactual comparisons under a fixed behavioral model. 
	For example, when \(\rho\) is low, expanding access may yield larger gains in representation, while when \(\rho\) is already high, improving valuation can be more cost-effective.
	
	\paragraph{Implementation challenges.}
	Estimating parameters such as \(\rho\) empirically is nontrivial and remains an open direction. 
	It would require auxiliary data sources (e.g., surveys, longitudinal effort–outcome data) or structural assumptions about the effort cost function. 
	Nonetheless, once such estimates are available, our framework provides a transparent scaffold for interpreting behavioral disparities and evaluating the relative effectiveness of competing policy interventions.
	
	\subsection{Details for alternative interventions}
	\label{sec:intervention}
	
	Below we provide more detailed theoretical analysis for alternative potential interventions discussed in Section \ref{sec:analysis}.
	
	\paragraph{Introducing preference heterogeneity.}
	Recall that the institution applies group-specific merit mappings of the form: for group $G_\ell$ ($\ell = 1,2$) and for score $s$, $m_\ell(s) = x_\ell \cdot s + y_\ell$  
	for group-specific parameters $x_\ell, y_\ell \geq 0$.
	We have the following generalized theorem of Theorem \ref{thm:two_general}
	
	\begin{theorem}[\bf{Generalization of Theorem \ref{thm:two_general}: Introducing preference heterogeneity}]
		\label{thm:merit}
		Let $\alpha, c\in (0,1)$. 
		For $\ell = 1,2$, let $p_\ell$ be a density supported on a domain $\Omega_{\ell}\subseteq \R_{\geq 0}$.
		Let $p_a$ be a density supported on a domain $\Omega_a\subseteq \R_{\geq 0}$.
		For $\ell = 1,2$, let $m_\ell$ be a merit function defined above. 
		For $\ell = 1,2$, let $F_\ell$ be a cumulative density function (CDF) of the sum of valuation and initial ability such that for any $\zeta\in \R_{\geq 0}$, $F_\ell(\zeta) = \Pr_{v\sim p_\ell, a\sim p_a}\left[x_\ell v+ y_\ell +a\leq \zeta\right]$.
		Suppose $\left(x_1\Omega_1 + y_1)\cup (x_2 \Omega_2 + y_2)\right) + \Omega_a$ is connected and densities $p_1, p_2, p_a$ are positive at any point of their own domains.
		Let $t$ be the unique solution to the equation 
		\[
		(1-\alpha) F_1(\zeta) + \alpha F_2(\zeta) = 1-c.
		\]
		Then the threshold function of the NE policy defined in Eq. \eqref{eq:A} extends to be: for $\ell = 1,2$,
		$$ \text{$s_\ell(v,a) = 0$ if $v + a < \frac{t - y_\ell}{x_\ell}$, and
			$s_\ell(v,a) = \max\left\{\frac{t - y_\ell}{x_\ell} - a,\, 0\right\}$ if $v + a \geq \frac{t - y_\ell}{x_\ell}$ }.$$
		Moreover, the threshold $\frac{t - y_2}{x_2}$ for $G_2$ is monotonically decreasing with $x_2,y_2$.
	\end{theorem}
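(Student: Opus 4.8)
The plan is to reduce Theorem~\ref{thm:merit} to Theorem~\ref{thm:two_general} by a change of variables that absorbs the affine merit maps $m_\ell(s)=x_\ell s + y_\ell$ into the score variable. Observe that under policy the institution ranks agents by merit $m_\ell(e_i+a_i)=x_\ell(e_i+a_i)+y_\ell$. If I define the \emph{merit-scores} $\sigma_i := x_\ell(e_i+a_i)+y_\ell$ for an agent $i\in G_\ell$, then selection is exactly selection of the top-$k$ merit-scores, which is structurally identical to the original contest run in the $\sigma$-variable. The distribution of the ``baseline merit'' $x_\ell v + y_\ell + x_\ell a$ (when $e_i=0$) has CDF $F_\ell(\zeta)=\Pr_{v\sim p_\ell,a\sim p_a}[x_\ell v + y_\ell + a \le \zeta]$, matching the hypothesis of the theorem. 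First I would make this correspondence precise: an agent's payoff is $P_i v_i - e_i$ where $e_i$ is the effort (not the merit), so the cost is still measured in the original effort units, not the rescaled ones. This asymmetry — ranking happens in merit-space but cost is paid in effort-space — is the only genuinely new feature relative to Theorem~\ref{thm:two_general}.

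Next I would invoke Theorem~\ref{thm:two_general} in the merit-score coordinates. The threshold equation $(1-\alpha)F_1(\zeta)+\alpha F_2(\zeta)=1-c$ with the redefined $F_\ell$ has a unique solution $t$ by Lemma~\ref{lm:unique}, provided the connectivity hypothesis on $\bigl((x_1\Omega_1+y_1)\cup(x_2\Omega_2+y_2)\bigr)+\Omega_a$ holds (this is precisely the Minkowski-sum connectivity needed for strict monotonicity of the combined CDF). The limiting equilibrium says each agent either reaches merit-score $t$ or exerts zero effort. For an agent $i\in G_\ell$ with type $(v,a)$, reaching merit-score $t$ means achieving $x_\ell(e+a)+y_\ell = t$, i.e. $e+a = \frac{t-y_\ell}{x_\ell}$, which requires effort $e = \frac{t-y_\ell}{x_\ell} - a$ whenever this is positive. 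Hence the threshold in \emph{combined type space} $v+a$ is $\frac{t-y_\ell}{x_\ell}$, and the equilibrium effort is $\max\{\tfrac{t-y_\ell}{x_\ell}-a,\,0\}$, exactly as claimed. The binary-decision argument (exert just enough to clear the threshold, or nothing) carries over verbatim because, in merit-space, the strategic environment is again symmetric across all agents in the large-$n$ limit, so the $\epsilon_n$-NE construction of Theorem~\ref{thm:two_formal} applies after the coordinate change.

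The monotonicity of the group-$2$ threshold $\frac{t-y_2}{x_2}$ in $x_2$ and $y_2$ is the final and most delicate step, because $t$ itself depends on $x_2,y_2$ through $F_2$. I would argue as follows. Increasing $y_2$ (or $x_2$) shifts the support of $x_2\Omega_2+y_2$ rightward, which stochastically increases the baseline merit of group $G_2$, so $F_2(\zeta)$ decreases pointwise; by the implicit-function/monotone-comparative-statics argument already used in the proof of Theorem~\ref{thm:metrics}, the solution $t$ of $(1-\alpha)F_1+\alpha F_2 = 1-c$ therefore \emph{increases}. Thus both the numerator $t-y_2$ and the denominator $x_2$ move in ways that could a priori offset each other, and I must show the net effect on $\frac{t-y_2}{x_2}$ is decreasing. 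The clean way is to reinterpret $\frac{t-y_2}{x_2}$ directly: it is the value $\tau_2$ solving $F_2\text{-equivalent condition}$, namely the combined-type threshold for $G_2$, i.e. $1-F_{p_2*p_a}(\tau_2)$ equals the group-$2$ selection share. Since raising $x_2$ or $y_2$ makes $G_2$ members cheaper (in type units) to reach any fixed merit level, they win more slots, which by the flow-balance identity can only lower the type-threshold $\tau_2$. \textbf{The main obstacle} I anticipate is making this last comparative-statics claim rigorous while $t$ moves endogenously: the clearest route is to differentiate the threshold equation implicitly with respect to $x_2$ and $y_2$, solve for $\partial t/\partial x_2$ and $\partial t/\partial y_2$, substitute into $\partial_{x_2}\bigl(\tfrac{t-y_2}{x_2}\bigr)$ and $\partial_{y_2}\bigl(\tfrac{t-y_2}{x_2}\bigr)$, and verify the resulting signs using $F_1'\ge 0$, $F_2'\ge 0$ and the positivity of densities — a computation that is routine in principle but requires care to confirm the cross-terms cancel correctly.
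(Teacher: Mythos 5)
Your reduction of the equilibrium part to Theorem~\ref{thm:two_general} via the merit-coordinate change is exactly the paper's route: the paper's proof simply observes that an agent of $G_\ell$ with $v+a \geq \frac{t-y_\ell}{x_\ell}$ who exerts $\max\{\frac{t-y_\ell}{x_\ell}-a,0\}$ attains merit at least $t$ and is selected, with the $\eps_n$-NE machinery of Theorem~\ref{thm:two_formal} and the uniqueness from Lemma~\ref{lm:unique} carried over verbatim; your extra observation that costs stay in effort units while ranking happens in merit units is correct and harmless, since the threshold logic only uses strict monotonicity of $m_\ell$. (One caveat you inherited from the paper itself: the stated $F_\ell$ scales $v$ but not $a$, i.e.\ $\Pr[x_\ell v + y_\ell + a \leq \zeta]$, whereas the type threshold $v+a\geq \frac{t-y_\ell}{x_\ell}$ corresponds to $\Pr[x_\ell(v+a)+y_\ell \leq \zeta]$; your claim that your baseline-merit CDF ``matches the hypothesis'' is therefore not exact, but the discrepancy lies in the statement, not in your reduction.) The real difference is the monotonicity step. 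The paper's argument is precisely the rigorous version of your ``clean way,'' and the claim you flagged as the obstacle—that $G_2$ wins more slots even though $t$ moves endogenously—is established by routing through group $G_1$ rather than through your ``cheaper to reach a fixed merit level'' heuristic: increasing $x_2$ or $y_2$ lowers $F_2$ pointwise, so $t$ rises; since $x_1,y_1$ are fixed, the $G_1$ threshold $\frac{t-y_1}{x_1}$ rises and $G_1$'s selected share falls; the flow-balance identity $(1-\alpha)(1-F_1(t))+\alpha(1-F_2(t))=c$ then forces $G_2$'s share up, and strict monotonicity of the $G_2$ type CDF (positive densities) forces $\frac{t-y_2}{x_2}$ down. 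This is calculus-free and closes the gap you identified. Your implicit-differentiation fallback also works, and the cross-terms do come out right: writing the threshold equation as $(1-\alpha)H_1\left(\frac{t-y_1}{x_1}\right)+\alpha H_2(\tau_2)=1-c$ with $\tau_2=\frac{t-y_2}{x_2}$ and $H_\ell$ the type CDFs with densities $h_\ell>0$, one gets $\frac{\partial t}{\partial y_2}=\frac{\alpha h_2/x_2}{(1-\alpha)h_1/x_1+\alpha h_2/x_2}\in(0,1)$, hence $\frac{\partial \tau_2}{\partial y_2}=\frac{1}{x_2}\left(\frac{\partial t}{\partial y_2}-1\right)<0$, and similarly $\frac{\partial t}{\partial x_2}\in(0,\tau_2)$ gives $\frac{\partial \tau_2}{\partial x_2}<0$. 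So both routes are sound: the paper's buys brevity and avoids differentiability bookkeeping, while yours yields explicit quantitative comparative statics.
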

	
	\begin{proof}
		The proof for $s_\ell$ is almost identical to that of Theorem \ref{thm:two_general}.
		We only need to note that for any agent $i\in G_\ell$ ($\ell = 1,2$) with valuation-ability pair $(v_i, a_i)\in \Omega_\ell \times \Omega_a$, if its valuation $v_i \geq \frac{t-y_\ell}{x_\ell} - a_i$, then its merit must be
		\[
		m_\ell(s_\ell(v_i, a_i)) = x_\ell \cdot \left( a_i +  \max\left\{\frac{t - y_\ell}{x_\ell} - a_i,\, 0\right\} \right) + y_\ell \geq t,
		\]
		which is within the top $c$-fraction and makes the agent get selected.

		Regarding the monotonicity of $\frac{t - y_2}{x_2}$, note that as $x_2$ or $y_2$ increases, $F_\ell(\zeta)$ decreases.
		Then the solution $t$ must increase, resulting in a higher threshold $\frac{t-y_1}{x_1}$ for group $G_1$.
		This reduces the fraction of agents in $G_1$ to get selected, and consequently, increases the fraction of agents in $G_2$ to get selected.
		Then the threshold $\frac{t - y_2}{x_2}$ must decrease, which completes the proof.
	\end{proof}
	
	Note that when $x_1 = x_2 = 1$ and $y_1 = y_2 = 0$, this theorem is exactly Theorem \ref{thm:two_general}, and hence, is a generalization.
	This theorem implies that by increasing $x_2,y_2$, more agents in $G_2$ are willing to put in efforts due to lower valuation threshold $\frac{t - y_2}{x_2}$. 
	This supports the properties discussed in Section \ref{sec:analysis}.

	\paragraph{Setting group-specific selection rates.}
	Assume that the institution selects a $c$-fraction of agents from $G_1$ and $G_2$ independently.
	The model decomposes into two independent within-group contests, each with its own Nash equilibrium.
	
	For the disadvantaged group $G_2$, let $F_2$ denote the CDF of its combined signal $v+a$. The equilibrium threshold $t_2$ under group-specific capacity $c$ is the unique solution to:
	$$
	F_2(t_2) = 1 - c.
	$$
	In contrast, under a uniform selection rate $c$ applied to the full population (original two-group contest), the common threshold $t$ solves:
	$$
	(1-\alpha) F_1(t) + \alpha F_2(t) = 1 - c.
	$$
	Since $G_2$ has lower valuations by assumption, we typically have $F_2(\zeta) \geq F_1(\zeta)$ for all $\zeta$, which implies $t_2 < t$. That is, the disadvantaged group faces a lower selection bar under group-specific quotas.
	
	As a result, agents in $G_2$ exert more effort on average under per-group capacities compared to the uniform-$c$ case. This is because effort is increasing in the probability of selection, which improves when the threshold is lowered.
	
\end{document}